\documentclass[titlepage,12pt]{article}

\usepackage[dvips]{graphicx}

\usepackage[myheadings]{fullpage}
\usepackage{pmetrika}

\usepackage{submit}
\usepackage{apacite}

\RequirePackage{amsmath,amsfonts,amssymb}
\RequirePackage[authoryear]{natbib}
\RequirePackage[colorlinks,citecolor=blue,urlcolor=blue]{hyperref}
\RequirePackage{graphicx}

\usepackage{amsmath,natbib}
\usepackage{multirow}

\usepackage{graphicx,bm,mathtools,amsfonts,bbm,hyperref,enumitem}
\numberwithin{equation}{section}

\usepackage{algorithmicx}
\usepackage{algorithm}
\usepackage[noend]{algpseudocode}
\algnewcommand\algorithmicinput{\textbf{Input:}}
\algnewcommand\Input{\item[\algorithmicinput]}
\algnewcommand\algorithmicoutput{\textbf{Output:}}
\algnewcommand\Output{\item[\algorithmicoutput]}

\newcommand{\bX}{\textbf{X}}

\newcommand{\bx}{\textbf{x}}

\newcommand{\bR}{\textbf{R}}
\newcommand{\btheta}{\bm{\theta}}
\newcommand{\bbeta}{\bm{\beta}}

\newcommand{\E}{\mathbb{E}}

\usepackage{tikz}
\usetikzlibrary{shadows,arrows,positioning,matrix, trees}
\pgfdeclarelayer{background}
\pgfdeclarelayer{foreground}
\pgfsetlayers{background,main,foreground}

\tikzstyle{materia}=[draw, fill=blue!20, text width=6.0em, text centered,
minimum height=1.5em,drop shadow]
\tikzstyle{etape} = [materia, text width=8em, minimum width=10em,
minimum height=3em, rounded corners, drop shadow]
\tikzstyle{texto} = [above, text width=6em, text centered]
\tikzstyle{linepart} = [draw, thick, color=black!50, -latex', dashed]
\tikzstyle{line} = [draw, thick, color=black!50, -latex']
\tikzstyle{ur}=[draw, text centered, minimum height=0.01em]

\tikzset{
	treenode/.style = {shape=rectangle, rounded corners,
		draw, align=center, font=\scriptsize},
	root/.style     = {treenode},
	env/.style      = {treenode},
	leaf/.style     = {shape=circle,draw,align=center,font=\scriptsize},
	every node/.style       = {font=\tiny},
	dummy/.style    = {coordinate},
	plain/.style     = {align=center,font=\scriptsize},
}

\tikzset{plain/.style={font=\large}} 

\usepackage{soul}
\setstcolor{magenta}

\renewcommand{\hat}[1]{\widehat{#1}}

\renewcommand{\tilde}[1]{\widetilde{#1}}

\begin{document}

\begin{titlepage}

\title{}


\author{Daniel Suen}

\affil{University of Washington}

\author{Yen-Chi Chen}
\affil{University of Washington}

\vspace{\fill}\centerline{\today}\vspace{\fill}

\comment{Competing interests: The authors declare none.}

\comment{Financial support: DS was supported by NSF DGE-2140004.
	YC was supported by NSF DMS-195278, NSF DMS-2112907, NSF DMS-2141808, and NIH U24-AG072122.}
\linespacing{1}
\contact{Correspondence should be sent to\\

\noindent E-Mail: dsuen@uw.edu \\
\noindent Phone: 1-206-486-0446
}

\end{titlepage}

\setcounter{page}{2}
\vspace*{2\baselineskip}

\RepeatTitle{Modeling Multivariate Missingness with Tree Graphs and Conjugate Odds}\vskip3pt

\linespacing{1.5}
\abstracthead
\begin{abstract}
In this paper, we analyze a specific class of missing not at random (MNAR)  assumptions called \textit{tree graphs}, extending upon the work of pattern graphs.  We build off previous work by introducing the idea of a conjugate odds family in which certain parametric models on the selection odds can preserve the data distribution family across all missing data patterns.  Under a conjugate odds family and a tree graph assumption, we are able to model the full data distribution elegantly in the sense that
for the observed data, we obtain a model that is conjugate from the complete-data, and for the missing entries, we create a simple imputation model.
In addition, we  investigate the problem of graph selection, sensitivity analysis, and statistical inference. 	
Using both simulations and real data, we illustrate the applicability of our method.

\begin{keywords}
Missing data, Conjugate odds, Tree graphs, Multivariate modeling
\end{keywords}
\end{abstract}\vspace{\fill}\pagebreak

	\section{Introduction}


Missing data are pervasive across healthcare, social sciences, economics, and machine learning. They arise from survey nonresponse, equipment failure, privacy concerns, and other sources, and the manner in which data are missing strongly influences the validity of statistical analyses. When ignored, missingness can bias results and reduce statistical power, especially in large-scale studies where incomplete records are common \citep{LittleRubin02}.

Rubin's framework classifies missingness into three categories \citep{little1989analysis}: missing completely at random (MCAR), missing at random (MAR), and missing not at random (MNAR). Standard approaches are effective under MCAR or MAR, but MNAR poses a fundamentally harder problem: the probability of missingness depends on unobserved values, rendering the distribution unidentifiable without further assumptions. The challenge is particularly acute in multivariate and nonmonotone settings, where missingness occurs irregularly across variables.

Most practical methods rely on imputation, such as multiple imputation by chained equations (\textit{mice}; \citealt{mice}) or MissForest \citep{missforest2011}, which are valued for their flexibility but implicitly assume MAR or rely on potentially incompatible conditionals. Moreover, methods such as MissForest are also single imputation methods, which can lead to inconsistent estimators, depending on the parameter of interest. These limitations make them vulnerable to bias or incoherence under MNAR. Direct modeling of imputation distributions is also difficult because of high dimensionality and interdependence among variables, motivating the search for methods that are both interpretable and theoretically principled.

Two classical approaches to MNAR are selection models \citep{diggle1994informative} and pattern-mixture models \citep{little1993pattern}, which respectively specify missingness probabilities or stratify by missingness patterns. While widely used, both require untestable assumptions for identifiability. More recent strategies include ``no self-censoring" assumptions \citep{shpitser2016consistent, sadinle2017itemwise}, auxiliary variables \citep{miao2016doubly}, and CCMV-type restrictions \citep{tchetgen2018discrete}. Graphical frameworks, such as missing data DAGs \citep{mohan2013graphical} and pattern graphs \citep{chen2022}, provide powerful representations of missingness assumptions, though their generality can make model selection challenging.

This paper builds on these advances by focusing on a structured and tractable subclass of pattern graphs, which we term tree graphs. Tree graphs simplify model specification, connect naturally to existing MNAR assumptions, and form the basis for scalable imputation strategies. To complement this structure, we introduce the conjugate odds property, which provides a flexible parametric tool for modeling conditional distributions. Together, tree graphs and conjugate odds yield a unified framework that ensures nonparametric identification, facilitates inference, and enables practical sensitivity analysis.


{\bf Outline.}	
We study tree graphs, a special case of pattern graph wtih
nice properties in Section \ref{sect:treegraphs} and derive related theories.
In Section \ref{sect:conjugateodds}, we introduce the idea of conjugate odds
that is useful in domain adaptation.
We study how the conjugate odds can be used in handling missing data  with 
tree graphs in Section \ref{sect:treegraphandconjugate}, which leads to 
an imputation model and a model on the observed data simultaneously. 
We introduce three approaches for selecting a tree graph in Section \ref{sect:choosing-treegraph}: prior knowledge, 
partial-ordering, and data-driven approaches. 
In Section \ref{sect:realdata}, we apply the tree graph and conjugate odds to 
an Alzheimer's disease data. 
In appendices, we also investigate tree graph performance via simulation studies (Appendix \ref{sect:simulations}), and study the problem of statistical inferences (Appendix \ref{sec::inference}) and sensitivity analysis (Appendix \ref{section:sensitivity}).

\subsection{Notation}

We use a capital boldface variable to denote a vector-valued random variable.  In this paper, we consider a general problem setup, where $\bX = (X_1, X_2, \ldots, X_d)\in\mathbb{R}^d$ is a random vector of variables.  Each of the $d$ variables can possibly be missing for a total of up to $2^d$ missing patterns.  Let $\bR \in \mathcal{R} \subseteq \{0,1\}^d$ be the random binary vector that describes the missing pattern associated with $\bX$.

We write $R_j=0$ if and only if variable $X_j$ is missing.  For a fixed pattern $r$, let $\bX_r = (X_j : r_j=1)$ denote the observed random variables and $X_{\bar{r}} = (X_j : r_j=0)$ denote the missing random variables.  When we write ``For $j$ in $r$,'' this refers to the indices that contain 1.  For example, suppose $\bX=(X_1, X_2, X_3, X_4)$ and $r=1001$.  We have $\bX_r = (X_1, X_4)$ and $\bX_{\bar{r}} = (X_2, X_3)$.  Then, the statement ``For $j$ in $r$,'' corresponds to ``For $j$ in 1, 4.''  We assume that the complete data is generated by sampling i.i.d. from the joint distribution $p(x,r)$, and the resulting associated pattern $r$ generates the observed data.  In this paper, we will use the terminology \textit{full-data distribution} and \textit{pattern-specific joint distribution} to refer to $p(x,r)$ and $p(x|r)$, respectively.

\section{Tree graphs and identification theory}

\label{sect:treegraphs}


\subsection{Pattern graphs}

\cite{chen2022} originally proposed pattern graphs as a way to model nonmonotone missingness and nonparametrically identify the full-data distribution $p(x,r)$.  A pattern graph is a directed graph of missing data patterns that encodes a missing data assumption capable of nonparametrically identifying the full data distribution.  In the paper, he proposed selection odds and pattern-mixture model formulations with respect to a given graph and showed that the two are equivalent.  Estimation procedures using inverse probability weighting, regression adjustment, and semiparametric efficiency theory were explored.  Building on this work, we focus on a strict subset of pattern graphs called \textit{tree graphs}.

In this subsection, we first broadly summarize the previous work by introducing the notion of a pattern graph.  We impose a partial order on the patterns in $\mathcal{R}$, where for any distinct $s,r\in\mathcal{R}$, we say $s>r$ if and only if the observed variables in pattern $r$ are also observed in pattern $s$.  From this partial order, we can construct a directed graph of all patterns, which forms the aforementioned \textit{pattern graph}.

\begin{definition}
	A regular pattern graph is a directed graph of all patterns in $\mathcal{R}$ such that
	\begin{enumerate}
		\item \textbf{Single source node.} Pattern $1_d := (\underbrace{1,1,1,1,\ldots,1}_{d \ \text{times}})$ is the only source.
		\item \textbf{Regularity.} If there is an arrow present in the graph $T$ from pattern $s$ to pattern $r$, then $s>r$.
	\end{enumerate}
\end{definition}

The second property refers to the regularity and ensures that the graph is directed in a way that preserves the partial ordering of the patterns.  Pattern graphs represent information flow, which translates to a specific missing data assumption.  Since we have a partial ordering of the missing patterns, we assume that a pattern borrows information from its parents to model its missing data.  Denote $\text{PA}_{T}(r)$ as the set of parents for pattern $r$ in graph $T$.  Formally, the pattern-mixture model of $(\bX, \bR)$ factorizes with respect to $T$ if, we have
\begin{equation}
	p(x_{\bar{r}} | x_r, \bR=r) \stackrel{T}{=} p(x_{\bar{r}} | x_r, \bR\in\text{PA}_{T}(r)) \qquad \forall r\in\mathcal{R}.
	\tag{P1}
	\label{eq:pmm-patterngraph}
\end{equation}
Equation \eqref{eq:pmm-patterngraph} represents the pattern mixture model factorization property, which states that the extrapolation distribution under pattern $r$ can be identified using information from its parents.  Additionally, the selection odds model of $(\bX, \bR)$ factorizes with respect to the graph if
\begin{equation}
	\frac{P(\bR=r|x)}{P(\bR\in\text{PA}_{T}(r)|x)} \stackrel{T}{=} \frac{P(\bR=r|x_r)}{P(\bR\in\text{PA}_{T}(r)|x_r)} \qquad \forall r\in\mathcal{R}.
	\tag{P2}
	\label{eq:selection-patterngraph}
\end{equation}
Equation \eqref{eq:selection-patterngraph} represents the selection odds factorization property, which states that the conditional odds of a pattern $r$ against its parents only depends on the commonly observed variables.  
\cite{chen2022} shows that equations \eqref{eq:pmm-patterngraph} and \eqref{eq:selection-patterngraph} are equivalent under very mild positivity condition,
so we can interchangeably using these two definitions.
The pattern-mixture model formulation illuminates the kind of assumption imposed by pattern graph $T$.  In particular, $T$ associates each pattern $r$ with a set $\text{PA}_{T}(r)$ comprising closely related patterns whose observed variables also include those of $r$.

Further work by \cite{Zamanian2023patterngraph} studied the sensitivity analysis within the pattern graph framework.  Patterns graphs have also recently been used by \cite{dong2025efficientestimationmultiplemissing} in the context of estimating equations.  We note that pattern graphs are not the conventional graphical model because the nodes here are represented by missing data patterns rather than individual variables.  In previous missing data literature that use missing data DAGs or m-DAGs, one augments the usual directed acyclic graph of variables with nodes for missingness indicators and edges capturing dependencies \citep{mohan2013graphical, tian2015missing, Mohan2021graphical}. \cite{phung2025recursiveequationsimputationmissing} recently used a pattern DAG along with an m-DAG to help with identification of the full-data distribution.






\subsection{Definition and algebraic properties}

While pattern graphs encompass an exceptionally broad class of missing data assumptions, there are some obvious shortcomings due to the flexibility of the graph.
A complex graph, while it is mathematically valid, is not practically useful due to the fact that it would require one model on one edge within the graph. 
To resolve this complexity issue while maintaining the validity of a graph, 
we now focus on a particular subclass of pattern graphs known as tree graphs. Tree graphs represent a rich subset of pattern graphs, exhibiting notable algebraic and statistical properties that facilitate simpler estimation procedures and graph construction. The term \textit{tree graph} is used to reflect its graphical structure, which resembles a tree with a single root node.  Moreover, several established missing data assumptions from the literature can be incorporated into the pattern graph framework and reformulated as tree graphs.

\begin{definition}[Tree graph]
\label{def:treegraph}
A tree graph is a regular pattern graph in which every pattern $r\neq 1_d$ has exactly one directed path originating from $1_d$.
\end{definition}

\begin{proposition}
\label{prop:mnar}
Every tree graph corresponds to a unique missing not at random (MNAR) assumption that nonparametrically identifies the full data distribution.  Additionally, the selection odds $P(\bR=r_\ell|X)/P(\bR=1_d|X)$ admits the following identification formula
\begin{align*}
	\frac{P(\bR=r_\ell|X)}{P(\bR=1_d|X)}
	&\stackrel{\text{tree graph}}{=} \prod_{i=1}^\ell \frac{P(\bR=r_i|X_{r_i})}{P(\bR=r_{i-1}|X_{r_{i}})},
\end{align*}
where $1_d =: r_0 \to r_1 \to r_2 \to \cdots \to r_\ell$ is the unique path in the tree graph from the source $1_d$ to pattern $r_{\ell}$.
\end{proposition}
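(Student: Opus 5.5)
In a tree graph every pattern $r\neq 1_d$ has a single parent, since each parent contributes a distinct path from $1_d$. I will use this fact throughout, together with the selection odds factorization \eqref{eq:selection-patterngraph} and its equivalence with \eqref{eq:pmm-patterngraph} established by \cite{chen2022}.

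\textbf{Step 1: one parent per pattern.} Suppose some $r\neq 1_d$ had two distinct parents $s_1,s_2$. By the single source property each $s_k$ is reachable from $1_d$. Concatenating such a path with the edge $s_k\to r$ gives two distinct directed paths $1_d\to\cdots\to r$, because their last edges differ. This contradicts Definition \ref{def:treegraph}. Hence $\text{PA}_T(r)=\{\pi(r)\}$ is a single pattern $\pi(r)>r$. Iterating $\pi$ from $r_\ell$ traces the unique path $r_\ell, r_{\ell-1},\dots,r_0=1_d$. Regularity makes the sequence strictly decreasing in the partial order, so it terminates at $1_d$.

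\textbf{Step 2: the identification formula.} With a single parent, \eqref{eq:selection-patterngraph} for the edge $r_{i-1}\to r_i$ reads
\[
\frac{P(\bR=r_i|X)}{P(\bR=r_{i-1}|X)}=\frac{P(\bR=r_i|X_{r_i})}{P(\bR=r_{i-1}|X_{r_i})}.
\]
Taking the product over $i=1,\dots,\ell$, the left side telescopes to $P(\bR=r_\ell|X)/P(\bR=1_d|X)$, which gives the stated formula. Positivity ensures that every denominator is nonzero. Each factor on the right depends only on the observed-data distribution, because for $\bR\in\{r_i,r_{i-1}\}$ the vector $X_{r_i}$ is observed; note $r_{i-1}>r_i$. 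The odds are therefore identified.

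\textbf{Step 3: identification of $p(x,r)$.} The complete-case distribution $p(x,\bR=1_d)$ is observed. Then
\[
p(x,\bR=r)=\frac{P(\bR=r|x)}{P(\bR=1_d|x)}\,p(x,\bR=1_d),
\]
and the odds factor is identified by Step 2, so the full data distribution is identified. Uniqueness and nonparametric identification need one more check: the implied model places no restriction on the observed-data law. I would verify this through the pattern-mixture form \eqref{eq:pmm-patterngraph}. Proceeding by induction along the tree from the root, $p(x_{\bar r}|x_r,\bR=r)=p(x_{\bar r}|x_r,\bR=\pi(r))$ is fully determined once $p(x|\bR=\pi(r))$ is determined. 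Any observed-data law $\{p(x_r,\bR=r)\}$ therefore extends to exactly one full-data law satisfying the assumption. This makes the correspondence between trees and assumptions well defined, and the assumption is just-identifying rather than testable. It is MNAR in general, because the odds depend on $X_{r_i}$, which includes variables missing under $r_\ell$.

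\textbf{Main obstacle.} The delicate point is the induction in Step 3. The recursion must be well founded, which holds because the tree is finite and the partial order is strict. I also need the extensions built on different branches not to conflict. They cannot conflict, because each pattern's extrapolation distribution is defined from its own unique parent alone. I would cite the equivalence theorem of \cite{chen2022} for the general pattern-graph version and specialize it here.
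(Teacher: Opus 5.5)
Your proposal is correct and essentially matches the paper's proof. The odds formula comes from telescoping \eqref{eq:selection-patterngraph} edge by edge along the unique path, and nonparametric identification comes from the pattern-mixture form \eqref{eq:pmm-patterngraph} via the equivalence in \cite{chen2022}; your root-to-leaf induction and the single-parent fact (which the paper proves separately in Proposition \ref{prop:equivprop}) only spell out the recursion the paper leaves implicit, and your one slip, that iterating $\pi$ moves strictly \emph{up} the partial order since $\pi(r)>r$, not down, does not affect termination.
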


In this paper, we denote the set of tree graphs formed from $d$ variables as $\mathcal{T}_d$.  For brevity, we omit the subscript $d$ and simply write $\mathcal{T}$ when the context makes it clear that we are considering $d$ variables.  From the definition, we can see that a tree graph is a directed graph of the patterns in which there is a unique single path from the source $1_d$ to a given pattern $r$.  In classical graph theory, this structure is also known as an \textit{arborescence} \citep{fournier2013graphs}.  Proposition \ref{prop:mnar} highlights a key identification result for tree graphs.  That is, a tree graph is a MNAR assumption that automatically nonparametrically identifies the full data distribution.  MNAR assumptions are notably difficult to formulate.
\begin{proposition}
	\label{prop:mcar}
	If the data is missing completely at random (MCAR), then a tree graph assumption will still recover the true data distribution $p(x)$.
\end{proposition}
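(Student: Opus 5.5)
Under MCAR, $P(\bR=r\mid x)=P(\bR=r)=:\pi_r$ for every $x$ and every $r\in\mathcal{R}$. The plan is to show that the true full-data distribution satisfies the tree-graph restriction. Since a tree graph nonparametrically identifies the full data distribution (Proposition~\ref{prop:mnar}), the identified distribution must then coincide with the truth, and marginalizing over $\bR$ recovers $p(x)$.

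\textbf{Step 1.} Verify the selection-odds factorization \eqref{eq:selection-patterngraph} for an arbitrary tree graph $T$. For any pattern $r\neq 1_d$ with parent set $\text{PA}_T(r)$ (a singleton in a tree graph, but this is not needed), MCAR gives
\begin{align*}
\frac{P(\bR=r\mid x)}{P(\bR\in\text{PA}_T(r)\mid x)} = \frac{\pi_r}{\sum_{s\in \text{PA}_T(r)}\pi_s}.
\end{align*}
Integrating the numerator and denominator over $x_{\bar r}$ given $x_r$ shows that $P(\bR=r\mid x_r)=\pi_r$ and $P(\bR\in\text{PA}_T(r)\mid x_r)=\sum_{s\in\text{PA}_T(r)}\pi_s$. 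The ratio is therefore the same constant conditioning on either $x$ or $x_r$, so \eqref{eq:selection-patterngraph} holds. Equivalently, one can check \eqref{eq:pmm-patterngraph} directly, since under MCAR $p(x\mid \bR=r)=p(x)$ for all $r$. Both sides of \eqref{eq:pmm-patterngraph} then equal $p(x_{\bar r}\mid x_r)$, and this route avoids invoking the equivalence result.

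\textbf{Step 2.} Conclude identification. Proposition~\ref{prop:mnar} says the tree-graph assumption determines $p(x,r)$ uniquely from the observed-data distribution $p(x_r, r)$. The true $p(x,r)$ satisfies the assumption by Step 1 and generates the observed data. Hence the identified distribution is the true one; concretely, the identification formula reproduces the odds $\pi_{r_\ell}/\pi_{1_d}$. Summing over $r$ gives $p(x)=\sum_r p(x,r)$.

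\textbf{Main obstacle.} The only delicate point is positivity, which the equivalence between \eqref{eq:pmm-patterngraph} and \eqref{eq:selection-patterngraph}, and the identification itself, require. Concretely, we need $\pi_{1_d}>0$ and $\pi_s>0$ for each parent $s$. I would state this as the standing positivity assumption. Under MCAR it reduces to these marginal pattern probabilities being positive, which holds whenever all patterns in $\mathcal{R}$ actually occur.
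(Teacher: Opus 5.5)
Your proof is correct, but it is organized differently from the paper's. The paper argues by direct computation. It writes the multivariate Tukey factorization $p(x\mid\bR=r)=p(x\mid\bR=1_d)\cdot\frac{P(\bR=r\mid x)}{P(\bR=1_d\mid x)}\cdot\frac{P(\bR=1_d)}{P(\bR=r)}$, notes that under MCAR the odds reduce to $\pi_r/\pi_{1_d}$, and concludes $p(x\mid\bR=r)=p(x\mid\bR=1_d)=p(x)$ for every $r$. You argue structurally instead. The true law satisfies the restriction \eqref{eq:selection-patterngraph} (or \eqref{eq:pmm-patterngraph}), and since the restriction identifies $p(x,r)$ uniquely, the identified law must be the true one.

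Your version makes explicit a step the paper leaves implicit. Replacing the odds by $\pi_r/\pi_{1_d}$ says something about what the tree graph identifies only because each identified factor $P(\bR=r_i\mid x_{r_i})/P(\bR=r_{i-1}\mid x_{r_i})$ collapses to $\pi_{r_i}/\pi_{r_{i-1}}$ and the path product telescopes. That is exactly your Step 1 together with your closing remark in Step 2. Your Step 1 never uses the single-parent property, so your argument carries over unchanged to any regular pattern graph, using the general identification result of \cite{chen2022}. It also isolates positivity as the only extra hypothesis.

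The paper's computation, in exchange, is more concrete. It exhibits every identified pattern-specific distribution explicitly as $p(x\mid\bR=1_d)$, i.e.\ no tilting occurs along any edge. That is the form that plugs directly into the conjugate-odds machinery.
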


Proposition \ref{prop:mcar} further emphasizes the fact that a tree graph assumption can still be applied when the data could be MCAR.  To make the graphical formulation more concrete, we include two specific examples of common missing data assumptions that can be reframed as tree graphs.  In Example \ref{example:ccmv}, we discuss the complete-case missing value (CCMV) assumption \citep{little1993pattern, tchetgen2018discrete, Tan02102023}.


\begin{example}[Complete-case missing value (CCMV)]
\label{example:ccmv}
Our first example is the complete-case missing value, which is equivalent to
$$p(x_{\bar{r}} | x_r, \bR=r) \stackrel{\text{CCMV}}{=} p(x_{\bar{r}}| x_r, \bR=1_d)$$
for all $r\in\mathcal{R}$ \citep{little1993pattern, tchetgen2018discrete}.  This can be viewed as a relaxation of a complete case analysis to an assumption that does not place constraints on the observed data.  In particular, the complete case distribution is only used to define the extrapolation distributions.

In contrast, a complete case analysis makes the assumption that
$$p(x|\bR=r) \stackrel{\text{CCA}}{=} p(x|\bR=1_d)$$
for any $r\in\mathcal{R}$, which is essentially the missing completely at random.  The right-hand side of the equation is the complete-case distribution while the left-hand side is the distribution of the data under a given pattern $\bR=r$.  Since the LHS decomposes as $p(x|\bR=r) = p(x_{\bar{r}}|x_r,\bR=r)p(x_r,\bR=r)$, a complete-case analysis implictly places an assumption on the observed data and thus, may not agree with the observed data.  The CCMV assumption bypasses this by only placing assumptions on the distribution of the missing variables, conditional on the observed data, and can be viewed as a first step above a naive CCA.  For a visual example, we visualize the CCMV assumption in Figure \ref{fig:ccmv-ncmv-example} for $d=3$ variables.
\end{example}

From a graphical perspective, the CCMV assumption represents the most natural tree graph, as it forms the shallowest structure.  More broadly, tree graphs can be viewed as generalizations of the CCMV assumption, allowing for more complex paths from $1_d$ to the remaining patterns.  In Example \ref{example:ncmv}, we discuss another tree graph assumption, nearest-case missing value (NCMV) in the context of monotone missingness.

\begin{example}[Nearest-case missing value under monotone missingness]
\label{example:ncmv}
In our second example, we consider a setting of monotone missingness in which the missing patterns form an ordered set.  For simplicity, we assume that the missingness arises from dropout such that if variable $X_j$ is missing, then variable $X_{j'}$ is also missing for any $j'>j$.  For notational convenience, we denote each pattern by a positive integer that denotes the index of the first $0$ in the missing pattern such that $\mathcal{D} = \{0,1,2,3,4,5,6,\ldots,d\}$.  Then, the set $\mathcal{D}$ has a one-to-one correspondence with $\mathcal{R} = \{1_d, 1_{d-1}0,1_{d-2}00,\ldots,0_d\}$, where the subscript denotes the number of $1$s and $0$s.  Letting $D$ denote the random variable associated with $\mathcal{D}$, the NCMV assumption is equivalent to
$$p(x_{>t} | x_{\leq t}, D=t) = p(x_{>t}| x_{\leq t}, D=t+1),$$
$$\frac{P(D=t|x)}{P(D=t+1|x)} = \frac{P(D=t|x_t)}{P(D=t+1|x_t)}$$
for all $t\in\mathcal{D}$.  We visualize the NCMV assumption in Figure \ref{fig:ccmv-ncmv-example} for $d=3$ variables.
\end{example}

\begin{figure}
\centering
\includegraphics[width=0.85\textwidth]{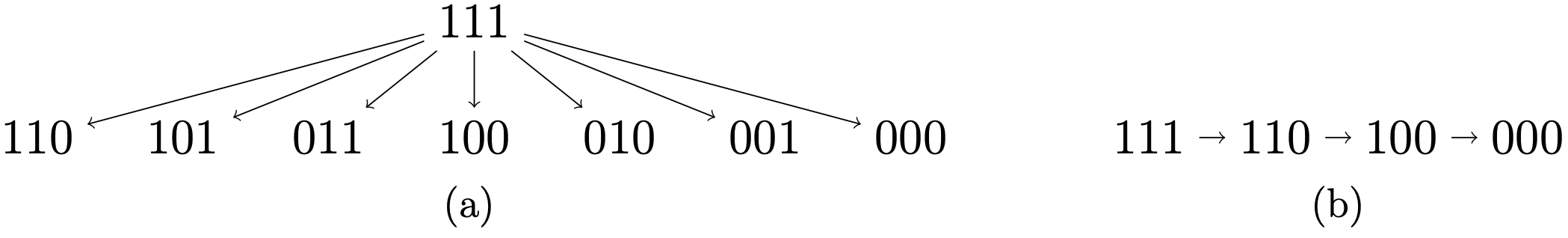}
\caption{(a) The CCMV tree graph for \( d=3 \) variables. (b) The NCMV tree graph for \( d=3 \) variables.}
\label{fig:ccmv-ncmv-example}
\end{figure}

From Examples \ref{example:ccmv} and \ref{example:ncmv}, we see that previously proposed assumptions from the literature can be cast in the tree graph framework.  Through Proposition \ref{prop:equivprop}, we now introduce equivalent definitions for tree graphs.

\begin{proposition}[Equivalence definitions for tree graphs]
\label{prop:equivprop}
Let $T$ be a pattern graph.  The following statements are equivalent:
\begin{enumerate}
	\item \textbf{Unique directed path from $1_d$.}  The pattern graph $T$ is a tree graph with $d$ variables.
	\item \textbf{Single parent.}  Every pattern $r\neq 1_d$ in $T$ has exactly one parent. \label{fact:reduct}
	\item \textbf{Minimal.}  There are $2^d-1$ edges in $T$. That is, $T$ achieves the lower bound on the number of edges that a pattern graph must have.
\end{enumerate}
\end{proposition}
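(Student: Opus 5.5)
The plan is to reduce all three conditions to statements about the in-degrees of the nodes of $T$, using two structural facts about regular pattern graphs that I would establish first.

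\emph{Fact A (acyclicity).} By regularity, every edge $s\to r$ satisfies $s>r$. Since $>$ is a strict partial order on the finite set $\mathcal{R}$, $T$ contains no directed cycle.

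\emph{Fact B (reachability).} Every pattern $r\neq 1_d$ has at least one parent, because $1_d$ is the only source, i.e.\ the only node of in-degree zero. Starting at any $r$, I would repeatedly step to some parent. By Fact A and finiteness this backward walk cannot revisit a node, so it must stop, and it can only stop at a node with no parent, namely $1_d$. Reversing the walk gives a directed path from $1_d$ to $r$. So every pattern is reachable from $1_d$.

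For (1)$\Rightarrow$(2), Fact B already gives at least one parent for each $r\neq 1_d$. Suppose $r$ had two distinct parents $p_1\neq p_2$. By Fact B there are directed paths from $1_d$ to $p_1$ and from $1_d$ to $p_2$. Appending the edges $p_1\to r$ and $p_2\to r$ produces two directed paths from $1_d$ to $r$. They differ in their last edge, so they are distinct, contradicting Definition~\ref{def:treegraph}.

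For (2)$\Rightarrow$(1), existence of a path from $1_d$ to each $r$ is Fact B. For uniqueness, take any directed path $1_d=r_0\to\cdots\to r_\ell=r$. Read backwards, each $r_{i-1}$ must be the unique parent of $r_i$. Hence the path is completely determined by $r$: it is the backward parent chain, which terminates at $1_d$. Two paths to $r$ therefore coincide.

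For (2)$\Leftrightarrow$(3), I would count edges as the sum of in-degrees. The source $1_d$ has in-degree $0$, and by Fact B every other node has in-degree at least $1$. Hence
\[
|E(T)|=\sum_{r\neq 1_d}\operatorname{indeg}(r)\ \ge\ |\mathcal{R}|-1,
\]
with equality if and only if every $r\neq 1_d$ has exactly one parent. With $\mathcal{R}=\{0,1\}^d$ the lower bound is $2^d-1$, which proves both directions and shows the bound is attained.

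The main obstacle I expect is bookkeeping rather than depth. The count $2^d-1$ presumes that all $2^d$ patterns are present. If $\mathcal{R}$ is a proper subset of $\{0,1\}^d$, I would state (3) as ``$|\mathcal{R}|-1$ edges'' and note that it reduces to $2^d-1$ in the full case. The other point needing care is the implicit reachability used in (1)$\Rightarrow$(2), which is why Facts A and B are proved up front.
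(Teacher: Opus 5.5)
Your proof is correct, and it is organized differently from the paper's. The paper runs the cycle (1)$\Rightarrow$(2)$\Rightarrow$(3)$\Rightarrow$(1). Its first two steps are essentially yours: two parents give two paths, and edges are counted by parents.

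Its step (3)$\Rightarrow$(1), however, supposes $T$ has $2^d$ edges, uses pigeonhole to find a pattern with two parents, and concludes $T$ is not a tree graph. That shows too many edges are incompatible with (1), so it argues toward (1)$\Rightarrow$(3). It does not get from $2^d-1$ edges back to a unique path. The paper therefore never establishes that single parents, or $2^d-1$ edges, force uniqueness of paths.

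Your split into (1)$\Leftrightarrow$(2) and (2)$\Leftrightarrow$(3) closes exactly this gap. The identity $|E(T)|=\sum_{r\neq 1_d}\mathrm{indeg}(r)\ge|\mathcal{R}|-1$, with its equality case, gives both directions between (2) and (3) at once and justifies calling $2^d-1$ a lower bound. Facts~A and~B, together with the backward parent chain, give existence and uniqueness of the path in (2)$\Rightarrow$(1).

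One small adjustment: in (1)$\Rightarrow$(2) you do not need Fact~B, because hypothesis (1) already supplies paths from $1_d$ to $p_1$ and $p_2$. Acyclicity and reachability are indispensable in (2)$\Rightarrow$(1), which is the direction the paper skips. Your remark that (3) should read $|\mathcal{R}|-1$ when not every pattern occurs is also apt, since the paper tacitly takes $\mathcal{R}=\{0,1\}^d$ here.
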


Several key practical insights arise from these properties.  First, since these formulations are equivalent, the proposition provides multiple possible equivalent definitions of a tree graph.  Moreover, as each pattern has exactly one parent, this gives us a straightforward method to both enumerate the class of tree graphs and construct a specific tree graph.  The construction process is further discussed in Section \ref{sect:choosing-treegraph}.  Next, minimality is closely linked to model complexity.  Missing not at random assumptions can be notably exponentially complex.  As the existence of an edge requires fitting an additional selection odds model, minimality ensures that the model complexity for the global model $p(x,r)$ is minimized within the space of pattern graphs and selection models.



\subsection{Enumeration}

\label{subsection:enumeration}

The size of the pattern graph set is astronomical as a function of the number of variables $d$ \citep{chen2022}, illustrating that pattern graphs represents a huge class
of MNAR assumptions. 
While tree graph is just a subset of pattern graphs,
the number of tree graphs stills grows significantly with the dimension $d$, so it also includes many MNAR assumptions.  
This is formalized with a lower bound, which is presented in the following proposition.

\begin{proposition}[Enumeration of tree graphs]
\label{prop:enumerate}
The number of tree graphs is super-exponential in the number of variables $d$.  In particular, $\log |\mathcal{T}_d| \gtrsim d\cdot 2^d$.
\end{proposition}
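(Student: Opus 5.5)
By the single-parent characterization in Proposition \ref{prop:equivprop}, a tree graph is determined exactly by choosing, for every pattern $r\neq 1_d$, one parent $s>r$. Conversely, any such choice gives a regular pattern graph. Each arrow goes strictly up the partial order, so following parents from any $r$ must terminate, and it can only terminate at $1_d$. Hence $1_d$ is the unique source and every pattern has a unique path from it. This gives the product formula
\begin{equation*}
|\mathcal{T}_d| \;=\; \prod_{r\neq 1_d} \bigl(\#\{s\in\mathcal{R}: s>r\}\bigr) \;=\; \prod_{k=0}^{d-1} \bigl(2^{d-k}-1\bigr)^{\binom{d}{k}},
\end{equation*}
where $k=|r|$ is the number of observed variables. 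A pattern with $k$ ones has exactly $2^{d-k}-1$ strict supersets. Here I take $\mathcal{R}=\{0,1\}^d$, since the enumeration is meant for the full set of patterns.

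The first step is to establish this bijection between tree graphs and parent-assignment functions $\pi$ with $\pi(r)>r$. This is essentially a restatement of the single-parent equivalence plus the acyclicity argument above. Taking logarithms then gives
\begin{equation*}
\log|\mathcal{T}_d| = \sum_{k=0}^{d-1}\binom{d}{k}\log\bigl(2^{d-k}-1\bigr).
\end{equation*}

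The second step is to lower-bound this sum. For $d-k\ge 1$ we have $2^{d-k}-1\ge 2^{d-k-1}$, so each term is at least $\binom{d}{k}(d-k-1)\log 2$. I would then restrict to $k\le d/2$, where $d-k-1\ge d/2-1$. Since $\sum_{k\le d/2}\binom{d}{k}\ge 2^{d-1}$, this yields
\begin{equation*}
\log|\mathcal{T}_d|\ \ge\ 2^{d-1}(d/2-1)\log 2\ \gtrsim\ d\,2^d.
\end{equation*}
Alternatively, the exact identity $\sum_k\binom{d}{k}(d-k)=d2^{d-1}$ gives
\begin{equation*}
\log|\mathcal{T}_d|\ge \log 2\,\bigl(d2^{d-1}-(2^d-1)\bigr),
\end{equation*}
which is even cleaner. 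Super-exponential growth follows immediately, because $|\mathcal{T}_d|\ge 2^{c d 2^d}$.

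The main obstacle I anticipate is the counting bijection rather than the asymptotics. In particular, one must check that an arbitrary parent assignment really produces a valid regular pattern graph: no cycles, a single source, and the unique-path condition. That is handled by the strict-order descent argument. If $\mathcal{R}$ is a proper subset of $\{0,1\}^d$, the statement should be read with all $2^d$ patterns present, which is the worst-case counting the proposition intends.
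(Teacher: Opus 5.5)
Your proposal is correct and follows essentially the same route as the paper: the product formula $\prod_{m=1}^{d}(2^m-1)^{\binom{d}{m}}$ from the single-parent characterization, then an elementary bound $2^m-1\ge 2^{m-c}$ combined with the identity $\sum_m m\binom{d}{m}=d\,2^{d-1}$. The differences are minor: your $c=1$ gives a slightly sharper constant than the paper's $2^{m-2}$, and you justify the converse direction of the bijection (that every parent assignment yields a valid tree graph), which the paper takes for granted.
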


The size of the tree graph class grows rapidly.  The exact form of our lower bound is provided in the proof, but we note that when $d=5$, we have $|\mathcal{T}_d| \geq 2^{18}$, and when $d=6$, we have $|\mathcal{T}_d| \geq 2^{66}$.  Observe that the size of the class is largely due to the fact that the number of missing patterns is $2^d$, exponential in the number of variables.  However, in practice, many of these patterns may not be observed in a given real data set.  For example, when the missingness is monotone, the number of missing data patterns is $d$.  Through some careful algebra, one can show that this reduces to the $2^{O(d^2)}$, which is still substantial.  Thus, the tree graph set remains a rich class of missing not at random assumptions while having significant simplifications in the resulting model complexity.
In Section \ref{sect:choosing-treegraph}, we discuss strategies for selecting a reasonable tree graph.

\section{Conjugate odds families and domain adaptation}

\label{sect:conjugateodds}

In the previous section, we established that tree graphs provide a graphical representation for an MNAR assumption that identifies the full data distribution. 
An additional benefit of the tree graph is that it allows a simple modeling framework to estimate the pattern-specific data distribution via the graph structure
by transferring the complete data distribution into each observed data distribution along the branch within a tree.
This is inspired by Tukey's factorization \citep{Tukey1986}, which we will discuss in more detail in Section \ref{sect:treegraphandconjugate}. First, we discuss the idea of learning a target distribution from a source distribution.

A common problem in statistics and machine learning is learning a target distribution $p(x \mid A=a')$ given knowledge of a related distribution $p(x \mid A=a)$. This setting is studied under domain adaptation and transfer learning, where knowledge from a source distribution is adapted to a target one under distributional shift. From a generative modeling perspective, this is closely related to density ratio estimation \citep{sugiyama_suzuki_kanamori_2012}. In the missing data setting, we view the distribution of complete cases ($\bR=1_d$) as the source domain and the distribution under another missingness pattern $\bR=r$ as the target. Since rare patterns often have few observations, direct estimation of the target distribution can be infeasible, making domain adaptation particularly well-suited.

\subsection{Exponential tilting}

A natural starting point is exponential tilting (or exponential change of measure) \citep{Esscher1932}. Given a baseline density $p_0(x)$, the tilted distribution with parameter $\lambda$ takes the form
$$
p_1(x) \propto p_0(x)e^{\lambda T(x)} \quad \iff \quad p_1(x) = \frac{p_0(x)e^{\lambda T(x)}}{\E_{p_0}[ e^{\lambda T(X)}]},
$$
where $T(x)$ is a statistic that is often
chosen to be the sufficient statistic in exponential family models, and the denominator ensures normalization. 


A key property of exponential tilting is that it preserves the exponential family structure. If the base distribution belongs to an exponential family with natural parameter $\eta_0$, then the tilted distribution corresponds to a simple shift in the natural parameter, $\eta_\lambda = \eta_0 + \lambda$. This property enables efficient statistical computations, as it allows reweighting while maintaining sufficient statistics and conjugate relationships. 
In our framework, exponential tilting provides a principled way to adapt the complete-case distribution to approximate distributions under other missingness patterns, linking ideas from domain adaptation with tractable exponential family models.

\begin{figure}[!b]
\label{fig:exptilt}
\centering
\includegraphics[width=0.95\textwidth]{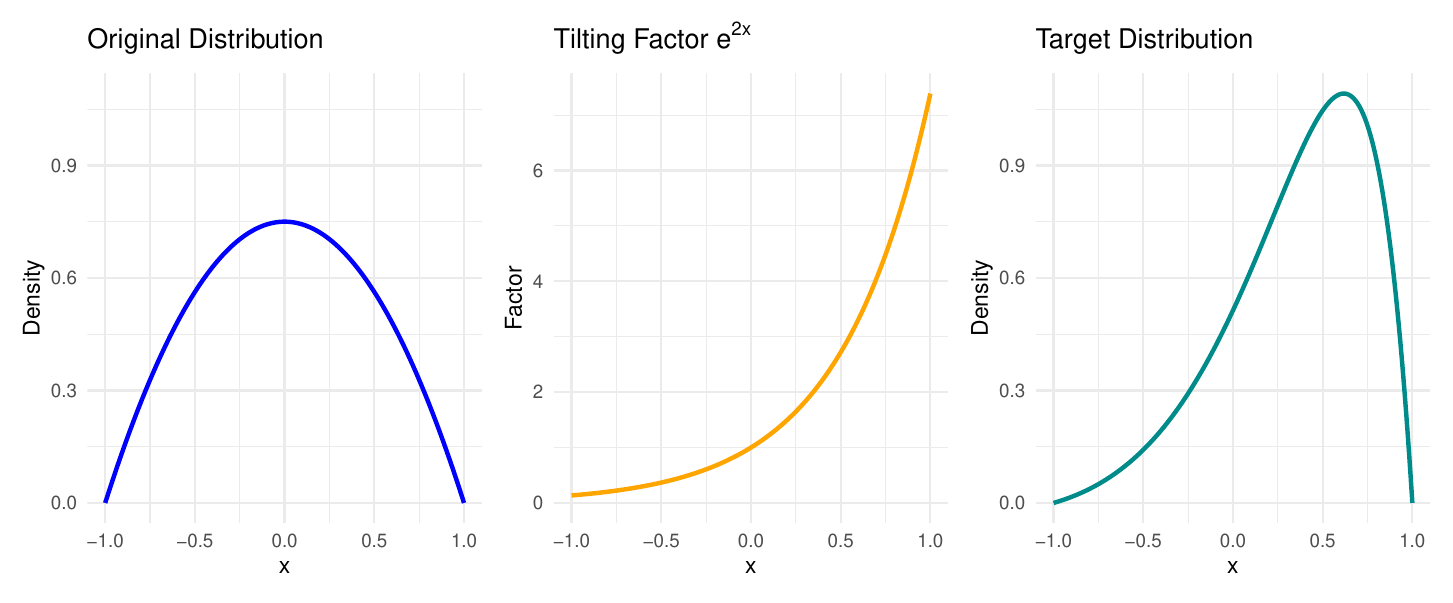}
\caption{The left and right panels depict the original and target distribution (after tilting) on the interval $[0,1]$.  The middle panel is the exponential tilting factor $e^{2x}$, which corresponds to the weight placed at each part of the original distribution.}
\end{figure}

\subsection{Generalizations to a conjugate odds property}

In this section, we discuss a general modeling strategy in which a parametric model is posited for the source domain and under nice conditions, a simple parametric model can also be obtained for the target domain.  A starting point is to first consider the factorization
\begin{equation}
p(x|A=a') \propto p(x|A=a) \cdot \frac{P(A=a'|x)}{P(A=a|x)}. \label{eq:tilted}
\end{equation}

From this factorization, we see that the source distribution can be perturbed towards the target distribution by multiplying by an odds factors.  In some situations, the odds and the target distributions have nice forms, which leads to the following idea of a conjugate odds.

\begin{definition}[Conjugate odds]
\label{def:conjugateodds}
Let $A\in\mathcal{A}$ be a categorical random variable that is auxiliary to the primary data $X$.  Suppose that $p(x|A=a)$ and $p(x|A=a')$ belong to the same probability model $\mathcal{P}$.  Then, we say the model formed by
$$\mathcal{O}(\mathcal{P}) := \{O_{a',a}(x) := P(A=a'|x)/P(A=a|x) : \forall a,a'\in\mathcal{A}\}$$
is a \textit{conjugate odds} for $\mathcal{P}$.
\end{definition}

In the definition for conjugate odds, we use the term conjugacy to relate it to the Bayesian literature and the idea of conjugate priors.  In Bayesian analysis, conjugate priors offer an algebraic convenience in that it provides a closed-form expression for the posterior given a specific likelihood function, thereby bypassing the need for numerical integration or computational methods.  In this paper, we say that a given odds family is \textit{conjugate} to a given family if it satisfies Definition \ref{def:conjugateodds}.  Moreover, the notion of conjugacy extends to a mixture model, where each component belongs to the same parametric family, as seen in Proposition \ref{prop:conjugatemixtures}.

\begin{proposition}[Conjugate odds holds under mixtures]
\label{prop:conjugatemixtures}
Suppose that $\mathcal{O}(\mathcal{P})$ is a conjugate odds for probability model $\mathcal{P}$.  Then, $\mathcal{O}(\mathcal{P})$ is a conjugate odds for the probability $K$-mixture model, where each component is an element of $\mathcal{P}$,
$$\mathcal{M}_K(\mathcal{P}) := \left\{p = \sum_{j=1}^K w_j p_j \ \biggr| \ p_j \in\mathcal{P}, \sum_{j=1}^K w_j = 1, w_j > 0\ \forall j \right\}.$$
\end{proposition}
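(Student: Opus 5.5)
The plan is a direct computation using the factorization \eqref{eq:tilted}. I read Definition \ref{def:conjugateodds} operationally: $\mathcal{O}(\mathcal{P})$ is a conjugate odds for $\mathcal{P}$ if, for every source density $p(\cdot\mid A=a)\in\mathcal{P}$ and every odds function $O=O_{a',a}$ in the family, the tilted density
\[
q(x) = \frac{p(x\mid A=a)\,O(x)}{\int p(u\mid A=a)\,O(u)\,du}
\]
is again in $\mathcal{P}$. This includes the requirement that the normalizing integral be finite and positive. To prove the proposition I must show the same closure property with $\mathcal{P}$ replaced by $\mathcal{M}_K(\mathcal{P})$, for the same odds family.

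First I take an arbitrary source $p=\sum_{j=1}^K w_j p_j\in\mathcal{M}_K(\mathcal{P})$ and an arbitrary $O\in\mathcal{O}(\mathcal{P})$. By linearity, $p(x)O(x)=\sum_j w_j\,p_j(x)O(x)$. For each component set $c_j := \int p_j(u)O(u)\,du$. Since each $p_j\in\mathcal{P}$ and $O$ is conjugate to $\mathcal{P}$, every $c_j$ lies in $(0,\infty)$ and $q_j := p_jO/c_j\in\mathcal{P}$. This gives
\[
p(x)O(x)=\sum_{j=1}^K w_j c_j\, q_j(x),
\qquad
\int p\,O = \sum_{k=1}^K w_k c_k\in(0,\infty).
\]

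Next I normalize, which yields the target density
\[
\frac{pO}{\int pO}=\sum_{j=1}^K \tilde w_j\, q_j,
\qquad
\tilde w_j=\frac{w_j c_j}{\sum_k w_k c_k}.
\]
Each $\tilde w_j$ is strictly positive because $w_j>0$ and $c_j>0$, and the $\tilde w_j$ sum to one. Each $q_j\in\mathcal{P}$, so the target density belongs to $\mathcal{M}_K(\mathcal{P})$. The target mixture keeps the same number $K$ of components; only the weights are reweighted by the component-wise normalizing constants $c_j$, and each component is tilted within $\mathcal{P}$.

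The only delicate step is integrability: the per-component constants $c_j$ must be finite and positive, so that the rewritten sum is well defined and the new weights are valid. I would handle this by making explicit that normalizability of $p_jO$ is part of what conjugacy of $\mathcal{O}(\mathcal{P})$ to $\mathcal{P}$ means. With $c_j\in(0,\infty)$ for each $j$, the finite sum $\int pO=\sum_k w_kc_k$ is also in $(0,\infty)$, so the remaining steps go through.
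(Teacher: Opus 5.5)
Your proposal is correct and follows essentially the same route as the paper: expand the mixture by linearity, tilt each component within $\mathcal{P}$ using conjugacy, and renormalize to get weights $\tilde w_j \propto w_j c_j$. The paper's positive constants $\zeta_j$ play exactly the role of your $c_j$, and your explicit statement that each $c_j$ is finite and positive makes precise a point the paper passes over quickly.
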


We note that in general, a rejection sampling scheme is also possible.  One can posit a distribution for $p(x|A=a)$, and fit any odds model for $O_{a'}(x)$.  This can be any binary classifier, which extends this methodology to a suite of machine learning tools.  Then, as long as the odds factor $O_{a',a}(x)$ is bounded, then we can do a rejection sampling scheme by using $p(x|A=a)$ as a proposal distribution to shift towards our desired $p(x|A=a')$.  A bounded odds factor is reasonable if the variables $\bX$ belong to a bounded set.  Although we no longer have a closed-form expression for the target distribution, we are able to perform sampling.  This idea is further explored in Section \ref{sect:treegraphandconjugate}.


\subsection{Logistic odds}

Now, we provide our first example of a conjugate odds family by demonstrating that logistic regression is a conjugate odds for the exponential family.

\begin{proposition}[Exponential family, vector-valued random variable]
\label{prop:expfam-conjugate}
Suppose that $p(x|A=a)$ belongs to the exponential family parameterized by $\eta\in\mathcal{H}$,
$$p(x|A=a;\eta) = h(x)g(\eta)\exp(\eta^\top T(x))$$
Then, the associated odds model
$$O_{a'}(x;\bm{\gamma}) := \log \frac{P(A=a'|x)}{P(A=a|x)} = \gamma_0 + \gamma^\top T(x), \quad \gamma_0 := \log \frac{P(A=a')}{P(A=a)} + \log \frac{g(\eta')}{g(\eta)}$$
holds if and only if
$$p(x|A=a';\eta') = h(x)g(\eta')\exp((\eta')^\top T(x)),$$
where $\eta' := \eta+\gamma \in \mathcal{H}$.
\end{proposition}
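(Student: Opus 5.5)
The plan is to prove both directions with Bayes' rule in the form of \eqref{eq:tilted}, kept exact with its normalizing constants:
\begin{equation*}
\frac{P(A=a'|x)}{P(A=a|x)} = \frac{p(x|A=a')}{p(x|A=a)}\cdot\frac{P(A=a')}{P(A=a)},
\end{equation*}
valid wherever $p(x|A=a)>0$. Everything reduces to comparing the log-densities of the two conditionals. Throughout, I take $h(x)>0$ on a common support, so the two conditionals are mutually absolutely continuous and the odds are well defined there.

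\textbf{($\Leftarrow$)} Assume $p(x|A=a';\eta') = h(x)g(\eta')\exp((\eta')^\top T(x))$ with $\eta'=\eta+\gamma\in\mathcal{H}$. Substitute both densities into the displayed identity. The factor $h(x)$ cancels, and the exponentials combine to $\exp((\eta'-\eta)^\top T(x)) = \exp(\gamma^\top T(x))$. Taking logs gives
\begin{equation*}
\log\frac{P(A=a'|x)}{P(A=a|x)} = \log\frac{P(A=a')}{P(A=a)} + \log\frac{g(\eta')}{g(\eta)} + \gamma^\top T(x),
\end{equation*}
which is exactly $\gamma_0+\gamma^\top T(x)$ with the stated $\gamma_0$.

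\textbf{($\Rightarrow$)} Assume the log-odds equal $\gamma_0+\gamma^\top T(x)$. Rearranging the identity,
\begin{equation*}
p(x|A=a') = p(x|A=a)\,\frac{P(A=a)}{P(A=a')}\,e^{\gamma_0+\gamma^\top T(x)} = h(x)\,g(\eta)\,\frac{g(\eta')}{g(\eta)}\exp((\eta+\gamma)^\top T(x)),
\end{equation*}
where the prior-ratio part of $e^{\gamma_0}$ cancels against $P(A=a)/P(A=a')$. The right-hand side simplifies to $h(x)g(\eta')\exp((\eta')^\top T(x))$, as claimed.

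\textbf{Main obstacle.} The delicate point is the role of $\eta'$ inside $\gamma_0$. In the forward direction, $\eta'$ should be \emph{defined} as $\eta+\gamma$. I would therefore argue that:
\begin{itemize}
\item the displayed function integrates to one, since it is a conditional density;
\item hence $\int h(x)\exp((\eta+\gamma)^\top T(x))\,dx = e^{-\gamma_0}P(A=a')/\bigl(g(\eta)P(A=a)\bigr)$ is finite;
\item consequently $\eta+\gamma\in\mathcal{H}$, the natural parameter space of $\eta$'s with a finite normalizer;
\item and the constant is forced to equal $g(\eta+\gamma)$, so the stated form of $\gamma_0$ is consistent rather than an extra assumption.
\end{itemize}
This normalization argument is the only nonmechanical step. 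The rest is cancellation of $h$ and of the prior ratio.
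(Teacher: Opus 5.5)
Your proof is correct and follows essentially the paper's route. Both tilt $p(x|A=a)$ by the odds, evaluate the normalizer through $\int h(x)\exp((\eta+\gamma)^\top T(x))\,dx = 1/g(\eta+\gamma)$, and read off $\gamma_0$ from the Bayes prior ratio $P(A=a')/P(A=a)$. You go slightly beyond the paper in two ways: you write out the converse direction, which the paper leaves implicit, and you derive $\eta+\gamma\in\mathcal{H}$ from normalizability of the true conditional, whereas the paper tacitly assumes it when evaluating that integral.
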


A natural corollary of Proposition \ref{prop:expfam-conjugate} is the following result, which establishes a link between a logistic regression model and an exponential tilting factor.

\begin{corollary}[Exponential tilting and logistic regression]
\label{cor:expontentialtilting}
Imposing a logistic regression model on the odds $P(A=a'|x)/P(A=a|x)$ is equivalent to tilting a distribution $p(x|A=a)$ by an exponential factor.
\end{corollary}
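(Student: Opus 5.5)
The corollary follows almost immediately from Proposition \ref{prop:expfam-conjugate}, so I will prove it as a direct consequence of the factorization \eqref{eq:tilted}. I will argue both directions without assuming any exponential family structure on $p(x|A=a)$; I use Proposition \ref{prop:expfam-conjugate} only to identify the tilted family.

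\textbf{Forward direction.} Suppose the log-odds follow a logistic model, $\log\{P(A=a'|x)/P(A=a|x)\} = \gamma_0 + \gamma^\top T(x)$ for some statistic $T$ (linear predictors, $T(x)=x$, as a special case). By Bayes' rule, as in \eqref{eq:tilted},
$$p(x|A=a') = p(x|A=a)\,\frac{P(A=a'|x)}{P(A=a|x)}\cdot\frac{P(A=a)}{P(A=a')} \propto p(x|A=a)\,e^{\gamma^\top T(x)}.$$
This is exactly the exponential tilt of $p_0 = p(x|A=a)$ with $\lambda T(x)$ replaced by $\gamma^\top T(x)$. Normalization then forces
$$e^{\gamma_0} = \frac{P(A=a')}{P(A=a)\,\E_{p_0}[e^{\gamma^\top T(X)}]}.$$
When $p_0$ is in the exponential family, this matches the $\gamma_0$ of Proposition \ref{prop:expfam-conjugate}, since $\E_{p_0}[e^{\gamma^\top T}] = g(\eta)/g(\eta+\gamma)$.

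\textbf{Converse.} Suppose $p(x|A=a') = p(x|A=a)\,e^{\gamma^\top T(x)}/\E_{p_0}[e^{\gamma^\top T(X)}]$. Rearranging the same Bayes identity gives
$$\log\frac{P(A=a'|x)}{P(A=a|x)} = \log\frac{P(A=a')}{P(A=a)} - \log \E_{p_0}\big[e^{\gamma^\top T(X)}\big] + \gamma^\top T(x).$$
This is a logistic regression in $T(x)$ whose intercept is an explicit constant.

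\textbf{Obstacle and conditions.} The only delicate point is well-definedness: the tilt requires $\E_{p_0}[e^{\gamma^\top T(X)}]<\infty$. This is the analogue of requiring $\eta+\gamma\in\mathcal{H}$ in Proposition \ref{prop:expfam-conjugate}, and I will state it as a standing condition. I will also assume positivity of $P(A=a|x)$ on the support, so that the odds are defined. The intercept $\gamma_0$ is not a free parameter; it is pinned down by normalization. I will note this explicitly so that ``equivalent'' is read as equivalence between the slope $\gamma$ and the tilting direction.
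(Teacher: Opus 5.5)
Your proof is correct and follows the same route as the paper. The paper writes the log-odds as a general function $g(x)$ and observes via the factorization \eqref{eq:tilted} that $p(x|A=a') \propto p(x|A=a)\exp(g(x))$; your explicit converse, the normalization formula for the intercept, and the integrability caveat are correct additions that the paper's one-line argument leaves implicit.
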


Proposition \ref{prop:expfam-conjugate} has numerous applications, as the exponential family encompasses a broad class of parametric distributions for both discrete and continuous random variables, including the normal, exponential, binomial, Poisson, and negative binomial distributions. Since exponential tilting via logistic regression corresponds to a translation in the natural parameter space, the range of possible values for the natural parameter is of fundamental importance. This proposition further implies that when logistic regression is performed using the sufficient statistics of an exponential family, the fitted coefficients of these statistics directly determine the parameterization of the new distribution $p(x|A=a')$.

One key element of the proposition is the final condition $\eta' := \eta+\gamma\in\mathcal{H}$.  Although any pair of identically parameterized exponential family distributions permits a logistic regression representation of the odds, not all logistic regression and exponential family distribution pairs yield an exponential family representation for the target distribution.  Similarly, not all exponential tiltings lead to an exponential family and may not even result in a valid distribution.  This discrepancy arises when the translation shifts the natural parameter beyond its valid domain. Generally, this issue is mitigated when the natural parameter belongs to an unbounded space. For instance, in the case of the binomial distribution, the natural parameter $\eta:= \log p/(1-p)$ belongs to $\mathbb{R}$, and any translation stays within the set.

%
%

The result of Proposition \ref{prop:conjugatemixtures} can be applied to the exponential family, as seen in Corollary \ref{cor:mixexpfam-conjugate}.  There are a few illuminating examples that fall under these specific conditions such as the Gaussian mixture model and binomial product mixture model \citep{suen2023modelingmissingrandomneuropsychological}.

\begin{corollary}[Mixture of exponential family]
\label{cor:mixexpfam-conjugate}
Suppose that
$$p(x|A=a) = \sum_{k=1}^K w_k \cdot h(x) g(\eta_k) \exp(\eta^\top_k T(x)), \quad \log \frac{P(A=a'|x)}{P(A=a|x)} = \gamma_0 + \gamma^\top T(x).$$
Then, we have
$$p(x|A=a') = \sum_{k=1}^K \widetilde{w}_k \cdot h(x) g(\eta_k+\gamma) \exp((\eta_k+\gamma)^\top T(x)),$$
where
$$\widetilde{w}_k := \frac{w_k \cdot g(\eta_k)}{g(\eta_k+\gamma)} \bigg/ \sum_{k'=1}^K \frac{w_{k'} \cdot g(\eta_{k'})}{g(\eta_{k'}+\gamma)}.$$
\end{corollary}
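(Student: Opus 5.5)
The plan is to compute the target density directly from the tilting factorization \eqref{eq:tilted}. It says $p(x|A=a') = p(x|A=a)\cdot P(A=a'|x)/P(A=a|x)\cdot P(A=a)/P(A=a')$, with equality holding exactly and not merely up to proportionality. Substituting the mixture form of $p(x|A=a)$ and the logistic odds $\exp(\gamma_0+\gamma^\top T(x))$ gives
\begin{align*}
p(x|A=a') = C\sum_{k=1}^K w_k\, h(x) g(\eta_k)\exp\bigl((\eta_k+\gamma)^\top T(x)\bigr),
\end{align*}
where $C = e^{\gamma_0} P(A=a)/P(A=a')$ does not depend on $x$. The exponent of the $k$-th term is already in the form stated in the corollary.

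Next, I would rewrite each summand as a properly normalized exponential family density. Multiply and divide by $g(\eta_k+\gamma)$, so the $k$-th term becomes $C\,\frac{w_k g(\eta_k)}{g(\eta_k+\gamma)}\cdot h(x)g(\eta_k+\gamma)\exp((\eta_k+\gamma)^\top T(x))$. Here I assume, as in Proposition \ref{prop:expfam-conjugate}, that $\eta_k+\gamma\in\mathcal{H}$ for each $k$. This assumption is what makes $g(\eta_k+\gamma)$ finite and positive and the resulting component a valid density.

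The remaining step is to identify the mixing weights. I would integrate both sides over $x$. The left side integrates to $1$, and each normalized component integrates to $1$. This forces $C\sum_{k'} w_{k'}g(\eta_{k'})/g(\eta_{k'}+\gamma)=1$, so the coefficients $C\,w_k g(\eta_k)/g(\eta_k+\gamma)$ are exactly the weights $\widetilde w_k$ given in the statement. They are positive and sum to one, so the target is a $K$-mixture of the same exponential family, consistent with Proposition \ref{prop:conjugatemixtures}.

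There is no real obstacle. The only point that needs care is the parameter-space condition $\eta_k+\gamma\in\mathcal{H}$, which must hold for all components simultaneously. Because of it, the normalizing constant should be recovered by integration rather than by tracking $\gamma_0$ explicitly, since $\gamma_0$ here is no longer the single-component expression from Proposition \ref{prop:expfam-conjugate}.
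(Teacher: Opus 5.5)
Your proof is correct and takes the same route as the paper's: multiply the mixture by the logistic odds, absorb $\exp(\gamma^\top T(x))$ into each component's natural parameter, rewrite each term with normalizer $g(\eta_k+\gamma)$, and renormalize. The only difference is that you carry the exact constant $C=e^{\gamma_0}P(A=a)/P(A=a')$ and fix it by integrating, whereas the paper writes $\propto$ and ``renormalizes''; your condition $\eta_k+\gamma\in\mathcal{H}$ is the same implicit requirement the paper relies on, and when $\mathcal{H}$ is the full natural parameter space it follows from the hypotheses, since each nonnegative summand of the density $p(x|A=a')$ must have finite integral.
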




\begin{example}[Gaussian mixture model with isotropic variance]
\label{example:GMM}
Since there are not many convenient options for off-the-shelf modeling of multivariate continuous data, practitioners often  use a Gaussian mixture model for its flexibility and relatively easy associated estimation procedure.  Suppose that
$$p(x|A=a) = \sum_{k=1}^K w_k \cdot \prod_{j=1}^d \frac{1}{\sigma_{k,j}^2\sqrt{2\pi}} \exp\left(-\frac{1}{2\sigma_{k,j}^2} (x_j - \mu_{k,j})^2\right),$$
$$\log \frac{P(A=a'|x)}{P(A=a|x)} = \gamma_0 + \gamma_1^\top x + \gamma_2^\top x^2.$$
Then,
$$p(x|A=a') = \sum_{k=1}^K w_k \cdot \prod_{j=1}^d \frac{1}{{\sigma_{k,j}'}^2\sqrt{2\pi}} \exp\left(-\frac{1}{2{\sigma_{k,j}'}^2} (x_j - \mu_{k,j}')^2\right),$$
where
$$\mu_{k,j}' := \frac{\mu_{k,j}/\sigma_{k,j}^2 + \gamma_{1,j}}{1/\sigma_{k,j}^2 - 2\gamma_{2,j}} , \quad (\sigma_{k,j}^2)' := \frac{1}{1/\sigma_{k,j}^2 - 2\gamma_{2,j}}, \quad \text{and} \quad w_k' = w_k \cdot \frac{g(\eta_k)}{g(\eta_k+\gamma)} \bigg/ \sum_{k=1}^K \frac{w_k \cdot g(\eta_k)}{g(\eta_k+\gamma)}$$
for $g(\eta_1,\eta_2) = \prod_{j=1}^d g(\eta_{j,1}, \eta_{j,2}) = \prod_{j=1}^d \exp\left(\frac{\eta_{j,1}^2}{4\eta_{j,2}^2}\right) \cdot \sqrt{-2\eta_{j,2}}$. 
\end{example}

\begin{example}[Binomial product mixture model]
Previously, \cite{suen2023modelingmissingrandomneuropsychological} introduced the binomial product mixture model to model multivariate discrete data.  Suppose that
$$p(x|A=a;w,p) = \sum_{k=1}^K w_k \cdot \prod_{j=1}^d \binom{N_j}{x_j} p_{k,j}^{x_j} (1-p_{k,j})^{N_j-x_j}, \quad \log \frac{P(A=a'|x)}{P(A=a|x)} = \gamma_0 + \gamma^\top x.$$
Then,
$$p(x|A=a';w',p') = \sum_{k=1}^K w_k' \cdot \prod_{j=1}^d \binom{N_j}{x_j} {p_{k,j}'}^{x_j} (1-{p_{k,j}'}^{x_j})^{N_j-x_j},$$
where
$$p_{k,j}' = \frac{\exp(\text{logit}(p_{k,j}) + \gamma_j)}{1+\exp(\text{logit}(p_{k,j}) + \gamma_j)} \quad \text{and} \quad w_k' = w_k \cdot \frac{g(\eta_k)}{g(\eta_k+\gamma)} \bigg/ \sum_{k=1}^K \frac{w_k \cdot g(\eta_k)}{g(\eta_k+\gamma)}$$
for $g(\zeta) = \prod_{j=1}^d \dfrac{1}{(1+\exp(-\zeta_j))^{N_j}}$.
\end{example}

\begin{example}[Gaussian kernel density estimator]
Suppose that $p(x|A=a)$ is fit nonparametrically using a kernel density estimator with a product Gaussian kernel as follows
$$\widehat{p}(x|A=a) = \frac{1}{nh^d} \sum_{i=1}^n \prod_{j=1}^d K\left(\frac{x_j-X_{i,j}}{h}\right),$$
where $K(t) = \frac{1}{\sqrt{2\pi}} \exp(-\frac{1}{2}t^2)$.

Thus, the KDE is a Gaussian mixture model with $n$ components (equally weighted), each being a multivariate Gaussian centered at each data point with covariance matrix $\text{diag}(h^2, h^2, \ldots, h^2)$.  Then, from Example \ref{example:GMM}, it follows that $\hat{p}(x|A=a')$ is a weighted Gaussian kernel density estimator.
\end{example}

%

The logistic model for odds is not the only possible model for conjugate odds; in Appendix \ref{sec::power}, we provide
an example of power law odds.

\section{Tree graphs and conjugate odds}

\label{sect:treegraphandconjugate}

With the conjugate odds, we develop an easy way to construct estimates of 1) the imputation distribution $p(x_{\bar{r}} | x_r, \bR=r)$ and 2) the conditional distribution $p(x|\bR=r)$.
We demonstrate that both of these tasks can be achieved in one shot by unifying the two frameworks (tree graph and conjugate odds) through the idea of Tukey's factorization.
A feature of tree graph is that our model  on $p(x|\bR=r)$ includes both observed variables as well as the missing variables. Therefore, the marginal distribution $p(x)$ can be obtain easily.

\begin{definition}[Tukey's factorization, \citep{Tukey1986}]
Consider a univariate $Y\in\mathbb{R}$ that is observed if $R=1$ and not observed if $R=0$.  We have the following factorization
$$p(y | R=r) = p(y|R=1) \cdot \frac{P(R=r|y)}{P(R=1|y)} \cdot \frac{P(R=1)}{P(R=r)}.$$
\end{definition}
Introduced by Tukey in a discussion \citep{Tukey1986}, the advantage of the above factorization is that identifies the missing data distribution as a product of two terms (one of which is $p(y|R=1)$ and can be estimated easily) and an odds term, which can be easier to think about and can naturally arise in many applications.  The key observation is that the above equation is reminiscent of an aforementioned factorization for tilting a distribution (Equation \eqref{eq:tilted}).  The term $p(y|R=1)$ is directly identifiable from the observed data.  The odds term $P(R=r|y)/P(R=1|y)$ depends on unobserved data, but can be identified using the tree graph framework.  From here, we can expect to borrow the tools from conjugate odds framework to tilt the complete case distribution $p(x|1_d)$.

\cite{Franks19045} previously built on the idea of Tukey's factorization as an alternative method from pattern-mixture models and selections models for modeling the full-data distribution.  In their work, they discuss this modeling strategy with a a single variable and two possible missing patterns.  We extend this work to handle the multivariate case.  Tukey's original factorization can be naturally generalized to a multivariate setting, as seen in the following definition that we propose.

\begin{definition}[Multivariate Tukey's factorization]
\label{def:multivariateTukey}
Consider a multivariate $X\in\mathcal{X}\subseteq\mathbb{R}^d$ with an associated missing pattern $R$.  We have the following factorization
$$p(x | \bR=r) =  p(x|\bR=1_d) \cdot \frac{P(\bR=r|x)}{P(\bR=1|x)} \cdot \frac{P(\bR=1_d)}{P(\bR=r)}.$$
\end{definition}

As in the univariate case, the above factorization demonstrates that $p(x|\bR=r)$ is proportional to a product of two terms: the complete case distribution $p(x|\bR=1_d)$ and an odds term $P(\bR=r|x)/P(\bR=1_d|x)$.  That is, this is another factorization for tilting a distribution as in Equation \eqref{eq:tilted}.  Importantly, this selection odds term is not directly identifiable without further assumptions.  However, Proposition \ref{prop:mnar} shows that under a tree graph assumption, these selection odds admits an elegant identification formula and can be estimated using the observed data.

\begin{assumption}[Absolute continuity with respect to the complete case distribution]
\label{assump:absolutecontinuity}
The distribution $p(x|\bR=r)$ is absolutely continuous with respect to the complete case distribution $p(x|\bR=1_d)$ for any $r\neq 1_d$.
\end{assumption}

When utilizing Tukey's factorization, one implicitly is making an assumption.  Assumption \ref{assump:absolutecontinuity} arises from the nonnegativity of the selection odds nonnegative: if $p(x|\bR=1_d)=0$, then $p(x|r) = 0$ must hold.  If the complete case distribution satisfies a positivity condition where $p(x|\bR=1_d)>0$ for all $x\in\mathcal{X}$, then this assumption will be trivially satisfied.  For instance, the mixture models presented in Section \ref{sect:conjugateodds} satisfy this positivity condition since each mixture component has positive probability on all of $\mathcal{X}$.


We now harmonize the two frameworks with the following theorem.

\begin{theorem}[Modeling pattern-specific joint distributions using tree graphs and conjugate odds]
\label{theorem:treegraphandconjugateodds}
Suppose the following conditions hold:
\begin{enumerate}
	\item The missingness mechanism is specified using a tree graph assumption.
	\item The odds model for the selection odds is conjugate to the $p(x|\bR=1_d)$ model.
\end{enumerate}
Then, the pattern-specific joint distributions $p(x|\bR=r)$ for all $r$ belong to the same family as $p(x|\bR=1_d)$.
\end{theorem}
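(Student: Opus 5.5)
The plan is to combine the multivariate Tukey factorization (Definition \ref{def:multivariateTukey}) with the identification formula of Proposition \ref{prop:mnar}, and then induct along the unique path in the tree graph. Fix a pattern $r_\ell \neq 1_d$ and let $1_d = r_0 \to r_1 \to \cdots \to r_\ell$ be its unique path, as guaranteed by Definition \ref{def:treegraph}. Assumption \ref{assump:absolutecontinuity} lets us write
$$p(x\mid \bR=r_\ell) \propto p(x\mid \bR=1_d)\,\frac{P(\bR=r_\ell\mid x)}{P(\bR=1_d\mid x)}.$$
Proposition \ref{prop:mnar} then factors the odds as a product of one-step odds $\prod_{i=1}^{\ell} O_i(x_{r_i})$, where $O_i(x_{r_i}) := P(\bR=r_i\mid x_{r_i})/P(\bR=r_{i-1}\mid x_{r_i})$.

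The main step is an induction on the depth $\ell$. The inductive hypothesis is that $p(x\mid \bR=r_{i-1}) \in \mathcal{P}$, which holds trivially at $i=1$. The same Tukey-type manipulation applied to the parent--child pair $(r_{i-1}, r_i)$ gives
$$p(x\mid \bR=r_i) \propto p(x\mid \bR=r_{i-1})\,\frac{P(\bR=r_i\mid x)}{P(\bR=r_{i-1}\mid x)}.$$
By the selection-odds factorization (\ref{eq:selection-patterngraph}), with $\text{PA}_T(r_i)=\{r_{i-1}\}$ from Proposition \ref{prop:equivprop}, this odds equals $O_i(x_{r_i})$. So each edge multiplies the current density by a single odds factor, and we are exactly in the setting of (\ref{eq:tilted}) with auxiliary variable $A=\bR$, $a=r_{i-1}$, $a'=r_i$. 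Condition 2 of the theorem, read through Definition \ref{def:conjugateodds}, says this odds family is conjugate to $\mathcal{P}$. Hence $p(x\mid \bR=r_i)\in\mathcal{P}$, which closes the induction. Proposition \ref{prop:conjugatemixtures} shows the argument carries over verbatim when $\mathcal{P}$ is a mixture family.

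For concreteness, I will also record the exponential-family case. By Proposition \ref{prop:expfam-conjugate}, a logistic odds in $T(x)$ restricted to $x_{r_i}$ shifts the natural parameter by $\gamma^{(i)}$, with the coefficients of the unobserved coordinates set to zero. Iterating along the path gives
$$\eta_{r_\ell} = \eta_{1_d} + \sum_{i\le \ell} \gamma^{(i)},$$
which exhibits membership in the family explicitly.

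The main obstacle is interpretive. Definition \ref{def:conjugateodds} is phrased as ``if both conditionals lie in $\mathcal{P}$, the odds lie in $\mathcal{O}(\mathcal{P})$,'' but the induction needs the converse direction: a $\mathcal{P}$-member tilted by an $\mathcal{O}(\mathcal{P})$ odds stays in $\mathcal{P}$. Proposition \ref{prop:expfam-conjugate} proves this ``if and only if'' property for the exponential family. I would therefore read condition 2 in that two-sided sense, together with the requirement that the shifted parameter stays in the parameter space (for example, $\eta_{1_d}+\sum_i\gamma^{(i)}\in\mathcal{H}$). I would state this reading explicitly at the start of the proof.

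A secondary check is that each one-step odds depends only on $x_{r_i}$ while the tilting acts on the full $x$. This is harmless, since a function of a sub-vector of the sufficient statistic is still a member of the odds model. Normalizing constants are absorbed into the proportionality at every step.
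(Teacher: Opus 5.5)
Your proposal is correct and matches the paper's proof. The paper also inducts on distance from $1_d$, over the layers $\mathcal{R}_k$ rather than along a single path, which is equivalent: it writes $p(x\mid\bR=s)\propto p(x\mid\bR=r')\,P(\bR=s\mid x)/P(\bR=r'\mid x)$ for the unique parent $r'$ and invokes conjugacy at each edge. Your reading of condition 2 as a closure property (a member of $\mathcal{P}$ tilted by an odds in $\mathcal{O}(\mathcal{P})$ stays in $\mathcal{P}$) is exactly what the paper uses implicitly, so stating it up front is a useful clarification rather than a departure.
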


Theorem \ref{theorem:treegraphandconjugateodds} is very powerful because it combines a tree graph and the conjugate odds property in the missing data context and demonstrates how that can lead to elegant modeling of the pattern-specific joint distributions $p(x|\bR=r)$.

\begin{example}[Tree graph with logistic regression and Gaussian model]
Suppose that a tree graph assumption holds, and the selection odds can all be modeled using a logistic regression.  Then, if $p(x|\bR=1_d)$ belongs to an exponential family, then $p(x|\bR=r)$ for any $r\in\mathcal{R}$ is also exponential family.  In particular, suppose that
\begin{itemize}
	\item $p(x|\bR=1_d)$ is a multivariate Gaussian
	\item all the selection odds can be modeled using logistic regression ($\log[P(\bR=r|x)/P(\bR\in \text{PA}_T(r)|x)] = \gamma_{0,r} + \gamma_{r}^\top T(x_r)$ with $T(\cdot)$ being a linear function)
\end{itemize}
then all the pattern-specific joint distributions $p(x|\bR=r)$ for all $r$ are multivariate Gaussian.  This idea generalizes to exponential family models and mixtures of exponential family models due to Propositions \ref{prop:conjugatemixtures} and \ref{prop:expfam-conjugate}.

\end{example}

A further consequence of Definition \ref{def:multivariateTukey} is that imposing a tree graph assumption and models for the odds leads to an explicit closed-form expression form $p(x|\bR=r)$.  Notably, this distribution factorizes as $p(x_{\bar{r}}|x_r,r)p(x_r|r)$, so the aforementioned procedure models both the observed data distribution and the missing data distribution in one shot.  This is an advantage over other methods such as \textit{mice}, which are able to generate Monte Carlo estimates from the imputation distribution but do not specify a form for the density of the observed or missing data distributions.  Because we obtain a specific form for the pattern-specific joint distribution $p(x|\bR=r)$ due to conjugacy, it is easier to interpret and also perform imputation without having to refit anything.

\subsection{Imputation via a conjugate odds approach}



In some settings, estimating a joint model $p(x,r)$ is not the end goal.  For example, some might want to complete the data using an imputation.  As previously mentioned, we are able to obtain a closed-form expression for the imputation distribution due to conjugacy.  We outline this in Algorithm \ref{alg:conjugateoddsimputation}.

\begin{algorithm}
{\bf Require:} $\{(\bX_{i,\bR_i}, \bR_i)\}_{i=1}^n$, a tree graph $T$
\begin{algorithmic}[1]
	\State{Fit the complete case model $p(x|1_d;\theta)$ with natural parameter $\eta$.}
	\For{$r\neq 1_d$}
	\State{Fit the logistic regression model $O_r(x_r; \beta_r ) := P(\bR=r|x_r)/P(\bR\in\text{PA}_{T}(r)|x_r)$ under the tree graph $T$}
	\EndFor
	\For{$r\neq 1_d$}
	\State{Compute the selection odds with respect to the source $1_d$ via $O_{r:1_d}(x;\lambda_r) := \prod_{r'} O_{r'}(x_{r'}; \beta_{r'})$.}
	\State{Form the parameter of the distribution $p(x|\bR=r; \theta_r)$ via $\theta_r := \eta^{-1}(\eta+\lambda_r)$.}
	\EndFor
	
	\For{$m=1,2,\ldots,M$}
	\For{$i=1,2,\ldots,n$}
	\If{$\bR_i\neq1_d$}
	\State{Impute $\tilde{\bX}_{i,\bar{\bR}_i}^{(m)}$ using the distribution $p(x_{\bar{r}}|x_r,\bR=r)$.}
	\State{Set the $m$-th imputed data to be $\tilde{\bX}_{i}^{(m)} := (\bX_{i,\bR_i}, \tilde{\bX}_{i,\bar{\bR}_i}^{(m)})$.}
	\EndIf
	\EndFor
	\EndFor
	
	\State{\textbf{return} $\left\{ \tilde{\bX}_{i}^{(m)} \right\}_{\substack{i=1,\ldots,n \\ m=1,\ldots,M}}$}
\end{algorithmic}
\caption{Conjugate odds imputation under logistic regression and exponential family}
\label{alg:conjugateoddsimputation}
\end{algorithm}

\subsection{Rejection sampling for imputation}

When odds are not modeling using a conjugate odds, then there may be challenges in finding a closed form expression for the imputation distribution.  However, provided that the odds terms are bounded away from infinity, it is possible to perform rejection sampling.  The key requirement is that there exists a constant $U_r$ such that the target density $f(x)$ satisfies $f(x) \leq U_r \cdot g(x)$ for all $x$, where $g(x)$ is the proposal density.  Here we would simply take the proposal distribution to be the complete case distribution $p(x_{\bar{r}}|x_r,\bR=1_d)$ and the target distribution to be $p(x_{\bar{r}}|x_r,\bR=r)$, our true imputation distribution.  Since the odds terms are bounded, this ensures that such an $U_r$ exists, making the rejection sampling procedure feasible. 
We outline this method in Algorithm \ref{alg:rejectionsampling}.  Although this approach may introduce additional computational overhead, it offers a flexible alternative when traditional sampling methods are not applicable due to the lack of a closed-form expression.

\begin{algorithm}
{\bf Require:} $\{(\bX_{i,\bR_i}, \bR_i)\}_{i=1}^n$, $U_r$ (an upper bound on the odds for pattern $r$), a tree graph $T$
\begin{algorithmic}[1]
	\State{Fit the complete case model $p(x|1_d;\theta)$ with natural parameter $\eta$.}
	\For{$r\neq 1_d$}
	\State{Fit the logistic regression model $O_r(x_r; \beta_r ) := P(\bR=r|x_r)/P(\bR\in\text{PA}_{T}(r)|x_r)$ under the tree graph $T$}
	\EndFor
	\For{$r\neq 1_d$}
	\State{Compute the selection odds with respect to the source $1_d$ via $O_{r:1_d}(x;\lambda_r) := \prod_{r'} O_{r'}(x_{r'}; \beta_{r'})$.}
	\State{Form the parameter of the distribution $p(x|\bR=r; \theta_r)$ via $\theta_r := \eta^{-1}(\eta+\lambda_r)$.}
	\EndFor
	
	\For{$m=1,2,\ldots,M$}
	\For{$i=1,2,\ldots,n$}
	\If{$\bR_i\neq1_d$}
	\While{$Y^{(m)}$ is not accepted}
	\State{Sample a proposal $Y^{(m)} \sim p(x_{\bar{r}}|x_r,\bR=1_d)$.}
	\State{Accept $Y^{(m)}$ with probability $\frac{p(Y^{(m)}|x_r,\bR=r)}{U_r \cdot p(Y^{(m)}|x_r,\bR=1_d)}$.}
	\EndWhile
	\State{Set the $m$-th imputed data to be $\tilde{\bX}_{i}^{(m)} := (\bX_{i,\bR_i}, Y^{(m)})$.}
	\EndIf
	\EndFor
	\EndFor
	
	\State{\textbf{return} $\left\{ \tilde{\bX}_{i}^{(m)} \right\}_{\substack{i=1,\ldots,n \\ m=1,\ldots,M}}$}
\end{algorithmic}
\caption{Imputation via rejection sampling under logistic regression and exponential family}
\label{alg:rejectionsampling}
\end{algorithm}

\section{Strategies for tree graph selection}

\label{sect:choosing-treegraph}

%

As established in Section~\ref{subsection:enumeration}, the number of tree graphs grows super-exponentially with the number of variables, making graph selection challenging. Moreover, Proposition~\ref{prop:mnar} shows that each tree graph encodes an MNAR assumption that cannot be rejected from observed data, underscoring the challenge for systematic selection strategies.

From the single-parent property of Proposition~\ref{prop:equivprop}, selecting a tree graph is equivalent to assigning each pattern a unique parent. This defines a function $\text{PA}_G : \mathcal{R}\backslash\{1_d\} \to \mathcal{R}$ with $\text{PA}_G(r) > r$ for all $r \in \mathcal{R}$, offering a compact and efficient way to encode tree structures. To guide practical construction, we propose three principles:		
\begin{enumerate}[leftmargin=*, label=\textbf{\arabic*.}]
	\item \textbf{Prior knowledge.} Select a parent for each pattern that follows prior or scientific knowledge.
	
	\item \textbf{Partial ordering.} Select a parent for each pattern based on an existing partial ordering principle (such as CCMV or NCMV). We discuss generalizations of the NCMV assumption to nonmonotone data in a later subsection.
	
	\item \textbf{Observed data distribution alignment.} If the observed data distributions under two missing patterns are similar, we may expect that the missing data distributions corresponding to the same missing patterns are similar as well.  Here we can use the data to identify most relevant parents to a given child. We provide two methods based on distributional distance. 
\end{enumerate}

In addition to the above three principles, one may randomly choose a tree graph and perform inference.
We provide a simple algorithm on how to sample a tree graph in Appendix \ref{sec::random}.


\subsection{Prior knowledge}

The first and most fundamental principle is to leverage prior knowledge when selecting a parent for each pattern. Scientific insights, domain expertise, or well-established theoretical foundations can provide strong guidance in determining plausible parent-child relationships. For instance, in a biological setting, hierarchical dependencies between genetic markers may be informed by known pathways or functional interactions. Similarly, in causal inference, domain knowledge may suggest directional dependencies between observed variables. By incorporating prior knowledge into the selection process, we ensure that the tree graph aligns with meaningful, interpretable structures that reflect real-world mechanisms.

\begin{example}[Longitudinal study with missingness due to dropout]
Consider a longitudinal study where the same test is measured with at regualar time intervals.  Then, suppose there is monotone missingness due to dropout.  We might hypothesize that individuals with missing pattern $\bR=1000$ and $\bR=0000$ are closely related because we might reason that individuals that never showed up to the study are most similar to individuals who only showed up to the first time point.  Then, one can connect the patterns $1000 \to 0000$.  This is related to the nearest-case missing value assumption (NCMV; \cite{Thijs}).
\end{example}

\begin{example}[Hierarchical data collection processes]
Suppose we have four collected variables: $X_1$, which corresponds to a routine check-up measure such as blood pressure, $X_2$, representing a disease state like chronic kidney disease (CKD), $X_3$, which measures swelling (a common symptom of CKD), and $X_4$, a clinical test result assessing kidney function. In medical settings, it is common for $X_3$ and $X_4$ to be recorded only when $X_2$ exceeds a certain threshold, indicating a more severe condition. Consequently, missing data patterns such as $R = 1000$, $0100$, and $1100$ may arise. Since individuals missing $X_3$ and $X_4$ are likely healthier, it is plausible to infer hierarchical relationships between these missing patterns, such as $1100 \to 1000$ and $1100 \to 0100$, where the presence of both symptom and test data informs cases where only one or neither is recorded.

\end{example}

\begin{example}[Group similarity]
Suppose we have three variables: $X_1$, a self-reported stress level, $X_2$, alcohol consumption (e.g., self-reported drinks per week), and $X_3$ exercise habits (e.g., frequency of physical activity per week).  There may be a social stigma associated with alcohol, which is related to underreporting and even missingness.  We posit that the groups $\bR=100$ and $101$ are similar in that they are more likely to suffer from such social stigma, so we may suggest a relationship $101\to100$.
\end{example}

\subsection{Partial ordering}

In monotone missing data problem, 
some assumptions, such as nearest case missing value (NCMV), utilizes  an ordering on missing data patterns and also admit scientific interpretations. 
This is possible because in the monotone missing data setting, and each pattern has a parent that is the unique pattern that contains exactly one more observed variable.
In the nonmonotone missing data setting, the possible parent is no longer unique because there are multiple possible patterns that contain one more observed variable.  For example, in the monotone missing data situation $1000$ would have parent $1100$, but in the nonmonotone missing data setting, it could have parent $1100$, $1010$, or $1001$.
To resolve this issue, we relax the ordering into partial ordering and propose the following generalization. 

\begin{definition}[Generalized nearest case missing value (GNCMV)]
\label{def:GNCMV}
A tree graph is called a \textit{generalized nearest case missing value} assumption if every pattern in the graph has a parent that contains exactly one more observed variable.
\end{definition}

The GNCMV is still a large class of tree graphs. 
To choose a reasonable tree under GNCMV, we consider two special cases: the \textit{leftmost first approach} (LNCMV)
and the  \textit{rightmost first approach} (RNCMV). 
LNCMV is the tree graph where the parent is the pattern where the leftmost first $0$ (missing varaible) is replaced by $1$. 
RNCMV is defined similarly but we replace the rightmost first $0$ by $1$.
For example, the pattern 01010 has three possible parents under GNCMV: 11010, 01110, 01011. 
The LNCMV chooses 11010 as its parent while RNCMV chooses 01011. 
We visualize both of these ideas in Figure \ref{fig:impute_one}.
%


\begin{figure}[t]
\centering
\includegraphics[width=0.65\textwidth]{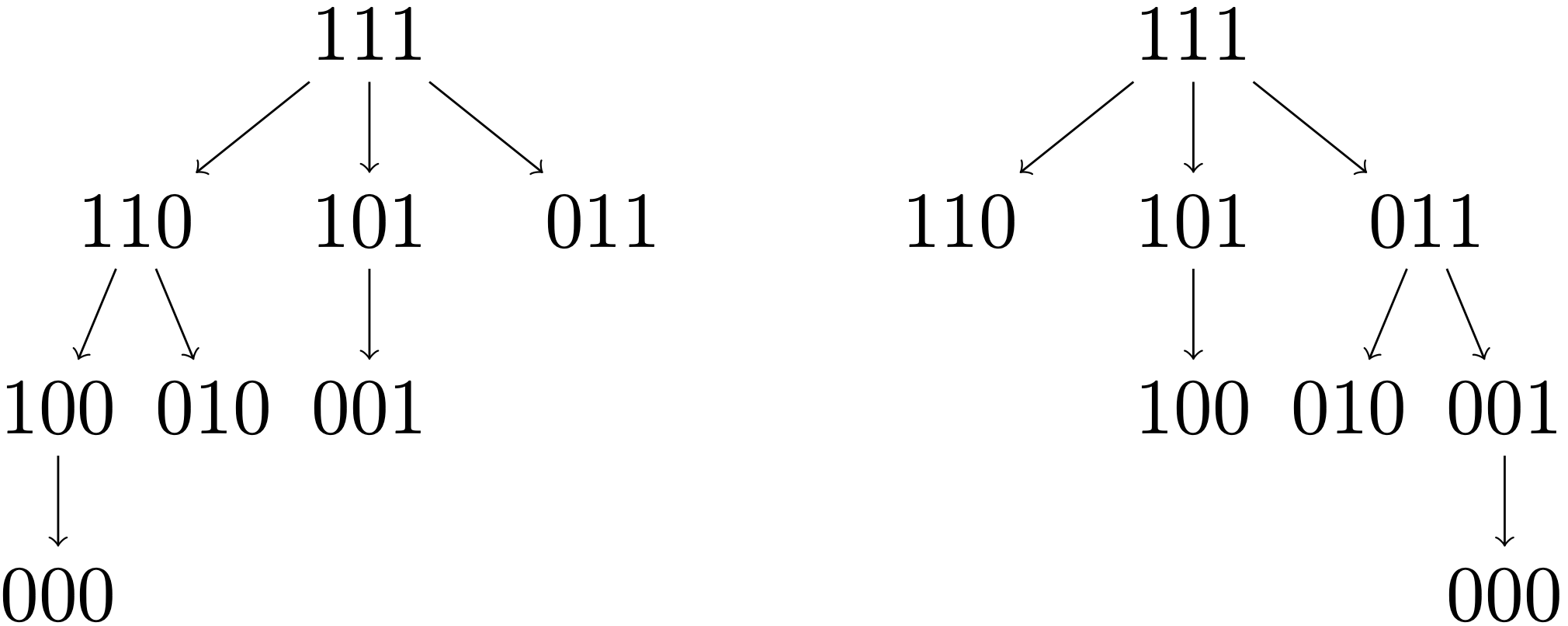}
\caption{The left panel corresponds to the tree graph for $d=3$ variables and the leftmost-first NCMV (LNCMV) assumption.  The right panel corresponds to the tree graph for $d=3$ variables and the rightmost-first NCMV (RNCMV) assumption.}
\label{fig:impute_one}
\end{figure}

\begin{proposition}[Generalized nearest-case missing value tree graphs]
Denote the subset of tree graphs that exhibit the GNCMV property as $\mathcal{T}_{\text{GNCMV}}$.  We have $\log |\mathcal{T}_{\text{GNCMV}}| \gtrsim 2^d$.  Moreover, for any $T \in \mathcal{T}_{\text{GNCMV}}$, $T$ exhibits the following properties:
\begin{enumerate}
	\item It achieves the maximum possible depth of $d$.
	\item Every pattern $r$ in $T$ is positioned at the maximum possible distance from the source node $1_d$, thereby corresponding to the most information flow.
\end{enumerate}
\label{prop:GNCMV}
\end{proposition}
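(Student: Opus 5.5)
The proof splits into a counting part and a structural part, and both rest on the single-parent characterization of Proposition \ref{prop:equivprop}. A tree graph is determined by choosing, independently for each pattern $r\neq 1_d$, one parent $\text{PA}_T(r)>r$. Write $k=|r|$ for the number of observed variables in $r$. Under GNCMV the admissible parents of $r$ are exactly the $d-k$ patterns obtained by flipping a single $0$ of $r$ to $1$. Any such choice yields a valid regular pattern graph with single source $1_d$: every arrow increases the number of ones by exactly one, so following parents from any $r$ climbs to $1_d$ in finitely many steps. Hence
\begin{align*}
|\mathcal{T}_{\text{GNCMV}}| = \prod_{k=0}^{d-1} (d-k)^{\binom{d}{k}}, \qquad \log|\mathcal{T}_{\text{GNCMV}}| = \sum_{k=0}^{d-1}\binom{d}{k}\log(d-k).
\end{align*}
For the lower bound I will keep only the terms with $d-k\ge 2$, i.e.\ $k\le d-2$, each contributing at least $\log 2$. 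This gives
\begin{align*}
\log|\mathcal{T}_{\text{GNCMV}}| \ge (2^d-d-1)\log 2 \gtrsim 2^d.
\end{align*}
A sharper bound of order $2^d\log d$ is available by restricting to $k\le d/2$, but it is not needed.

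For the structural claims, the key observation is that along any directed path in a regular pattern graph the number of observed variables strictly decreases at each edge, by regularity ($s>r$). So a path from $1_d$ to $r$ has length at most $d-|r|$. This is the maximum possible distance of $r$ from the source in any tree graph. In a GNCMV tree each edge decreases $|r|$ by exactly one, so the unique path from $1_d$ to $r$ (Definition \ref{def:treegraph}) has length exactly $d-|r|$, attaining the bound. This proves property 2. Property 1 follows by taking $r=0_d$, whose distance is $d$; no pattern can be farther away, since $d-|r|\le d$. So the depth is $d$, the maximum possible.

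I expect the only delicate point to be the interpretation of ``most information flow''. I will formalize it as the path-length maximality just proved, namely that each step transfers information from a pattern with exactly one additional observed variable. I will also note that the converse holds: a tree attaining distance $d-|r|$ for every $r$ must have all edges of size one, so GNCMV is exactly this class. Beyond that, the argument is elementary bookkeeping. The one real check is that every independent parent choice gives a valid tree graph, which I justified above via the strict decrease of $|r|$ along edges.
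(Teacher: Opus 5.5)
Your proposal is correct and follows the same route as the paper. Both count trees as a product over patterns of the number of admissible single parents, bound the logarithm below by a multiple of $2^d$, and then use that each GNCMV edge removes exactly one observed variable, so the path from $1_d$ to $r$ has length equal to its number of zeros. Your version is more careful than the paper's in two places:
\begin{enumerate}
\item \emph{The count.} The paper's proof takes $\binom{d}{m+1}$ admissible parents for a pattern with $m$ observed variables. That counts every pattern with one more observed variable, not just the $d-m$ supersets of $r$, so it overcounts ($81$ instead of $24$ when $d=3$). Your factor $(d-k)$ is the correct one.
\item \emph{Maximality.} The paper's maximality argument breaks off mid-sentence. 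Your regularity observation supplies exactly the missing step: $|r|$ strictly decreases along every edge, so the distance from $1_d$ to $r$ is at most $d-|r|$ in any tree graph.
\end{enumerate}
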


Observe that by pruning the nonmonotone patterns from each graph, we recover the tree structure that would exist under the NCMV assumption with monotone missingness. Moreover, the $k$-th layer contains $\binom{n}{k}$ patterns. This assumption stands in direct contrast to the CCMV assumption, as each pattern is positioned at the maximum possible distance from the source, representing the opposite structural arrangement.

\subsection{Observed data distribution alignment}

When attempting to infer the structure of a tree graph from data, a natural question is: how should we identify the most plausible parent nodes for a given node?  Our data-driven method offers a principled way to rank candidate parents using observed distributions.  The data-driven method is an approach one can use to rank potential parents from the data, thereby informing the tree graph structure from the existing data.  By assumption, the tree graph asserts the following equality for every pattern $r$
\begin{equation*}
p(x_{\bar{r}}|x_r,\bR=r) \stackrel{T}{=} p(x_{\bar{r}}|x_r,\bR=\text{PA}_{T}(r)).
\end{equation*}
Thus, one natural idea is to only match extrapolation distributions if the observed data distributions under both $\bR=r$ and $\bR\in\text{PA}_{T}(r)$ are similar.  More precisely, we would desire $d(p(x_r|\bR=r), p(x_r|\bR=\text{PA}_{T}(r)))$ to be small for some probability metric or divergence $d$.  This motivates two possible matching approaches.  While we describe them in the context of a likelihood method, we note that matching approaches can be more general.

\textbf{Parent-based alignment.}  In the first, suppose we obtain data $X_{1,r}, X_{2,r}, \ldots, X_{n_r,r} \sim p(x_r|r)$ and attempt to determine which parent distribution $p(x_r|s)$ has the best fit, among all possible parents $s$.  In practice, for each $s\in\text{PPA}(r)$, we fit a parametric model for $p(x_r|s)$ and estimate the expected log-likelihood calculated on the data.  This procedure can be expressed in the population version as
$$\text{argmax}_{s\in\text{PPA}(r)}\E_{X_r\sim p(x_r|\bR=r)}[\log p (X_r|\bR=s)].$$

The KL divergence provides an alternate perspective.  Through a series of equalities, we have
\begin{align*}
\text{argmin}_{s\in\text{PPA}(r)}  \ D_{\text{KL}}(p(x_r|\bR=r) \ || \ p(x_r|\bR=s))
&= \text{argmin}_{s\in\text{PPA}(r)}  \int_{-\infty}^\infty p(x_r|r) \log\frac{p(x_r|r)}{p(x_r|s)} \ dx_r \\
&= \text{argmax}_{s\in\text{PPA}(r)} \int_{-\infty}^\infty p(x_r|r) \log p(x_r|s) \ dx_r \\
&= \text{argmax}_{s\in\text{PPA}(r)} \E[\ell(X_r|s)].
\end{align*}
This highlights the fact that the maximization procedure we propose is directly equivalent to picking the pattern that minimizes the sample version of the KL divergence between $p(x_r|\bR=r)$ and $p(x_r|\bR=s)$.  Implementation of this procedure in practice can be most efficiently done by first estimating each model $p(x_r|r)$ for all $r\in\mathcal{R}$ and then storing each model.  We present connections to the KL divergence, but we also note that one could certainly extend this to other distances.  More generally, other $f$-divergences or metrics such as the Wasserstein distance could be explored, particularly when distributional smoothness or support mismatch is a concern.  For example, the Hellinger distance can also be utilized and has the nice property that it is a bounded metric.  While the KL divergence is easy to implement with a given model, it is generally not possible in nonparametric settings.  In those settings, one could consider distances between distributions via an energy-based approach. We provide an example of how this could be done in Appendix \ref{appendix:MARsim}.

\textbf{Child-based alignment.} There is an alternative approach through a child-based alignment approach.  In contrast to the above, suppose we obtain data $X_{1,s}, X_{2,s}, \ldots, X_{n_s,s} \sim p(x_s|s)$ and attempt to determine which child distribution $p(x_r|r)$ has the best fit.  We outline this in further detail in the Appendix.  Provided the fitted models are stored in memory, both the parent-based and child-based approaches have similar computational complexity, but the parent-based method has an illuminating theoretical interpretation when using the KL divergence.  Note that if a proper distance/metric is used to compare distributions, the parent-based and child-based alignment be the same; their difference is due to the asymmetry of the KL divergence. In simulation, we demonstrate that both the parent-based and child-based modeling approaches are able to learn the correct tree graph given enough sample size in some settings.  This is discussed in Appendix \ref{sect:simulations}.

\section{Real data}

\label{sect:realdata}
Here we illustrate the applicability of our method using an Alzheimer's disease data with a mixture of binomial product model. 
We also provide an example of using KDE on wine data in Appendix \ref{appendix:kde}.

\subsection{NACC data}

We consider the analysis of neuropsychological test scores in the database of the National Alzheimer's Coordinating Center (NACC)\footnote{https://naccdata.org/}.
The National Alzheimer's Coordinating Center (NACC), funded by the NIH and NIA, oversees the largest longitudinal database on Alzheimer's disease in the United States. It serves as a coordinating hub for 33 Alzheimer's Disease Research Centers (ADRCs) across the country.  This data set comprises individuals of varying cognitive status: cognitively normal to mild cognitive impairment (MCI) to dementia.  Each individual is assigned a CDR (clinical dementia rating) from clinician with $0$ corresponding to cognitively normal, 0.5 corresponding to mild cognitive impairment, and 1, 2, and 3 corresponding to mild, moderate, and severe dementia, respectively.

Typically, neuropsychological assessments are conducted annually, but incomplete outcome data is common for various reasons. In some cases, specific tests are discontinued over time and substituted with alternative measures.  In others, missing scores may result from documentation errors or from participants being too unwell to complete further testing.

\subsubsection{Description of outcome variables and covariates}

Our main goal is to measure and model the cognitive ability of the Alzheimer's disease patients.  We focus on the following variable UDSBENTD, which is the total score for ten to fifteen minute delayed drawing of Benson figure.  In the Benson figure test, a participant is presented with a diagram of a complex figure and is asked to copy it.  After a period of about ten to fifteen minutes, they are asked to recopy it again from memory, and they are assigned a score from 0 to 17 based on how well it resembles the original figure.  This test measures visuospatial, visual memory, and executive abilities.  We look at individuals who entered the study from the years 2015 to 2019 and follow them for five years total, examining the repeated delayed Benson figure test score each year.  We do not use the CDR score in the model, but we use to help report and interpret the results.

We first plot the missing pattern distribution in Figure \ref{fig:missingpatterndist}.  We can initially observe that the complete cases are very small with $n_{cc}=271$ individuals out of a possible $n=13440$.  Additionally, every possible pattern of the 16 possible is observed, ranging from $\bR=10000$ to $\bR=11111$.  The distribution is primarily dominated by the monotone missing patterns $10000$, $11000$, $11100$, $11110$, and $11111$, likely due to dropout.  Of primary interest, we will examine the patterns $10000$, $11000$, $10100$, and $10010$ because they are some of the larger patterns.

\begin{figure}
\centering
\includegraphics[height=3in]{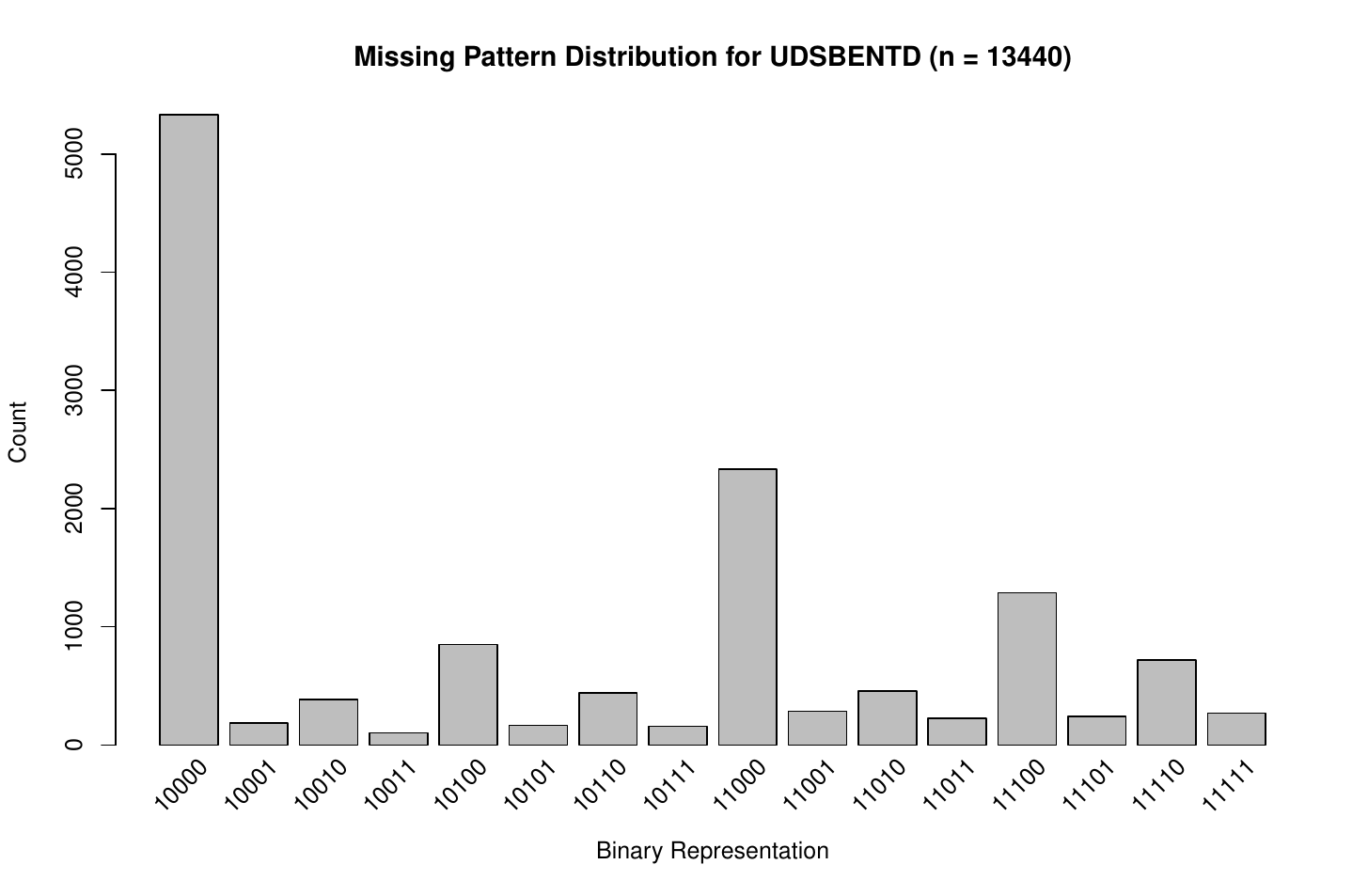}
\caption[Missing pattern distribution for the UDSBENTD variable over 5 years.]{This is the missing pattern distribution for the UDSBENTD variable over 5 years.}
\label{fig:missingpatterndist}
\end{figure}

\subsubsection{Analysis of NACC data}

We next plot four different tree graphs of interest: LNCMV, RNCMV, parent-based modeling, and child-based modeling.  These four tree graphs are reported in Figure \ref{fig:treegraphs_NACC}.  Interestingly, they all share similar maximum depth.  The parent-based modeling is able to generally able to recover the LNCMV principle for many of the patterns, including most of the monotone missing patterns.  On the other hand, the child-based modeling appears to incorporate a mix of both LNCMV and RNCMV principles when assigning a pattern to a given parent.  Because the patterns $10000$, $11000$, $10100$, and $10010$ each share similar ancestors in both the LNCMV and parent-based modeling tree graph, we would expect that the two tree graphs lead to similar fitted distributions at the end.

\begin{figure}
\centering
\includegraphics[width=0.47\textwidth]{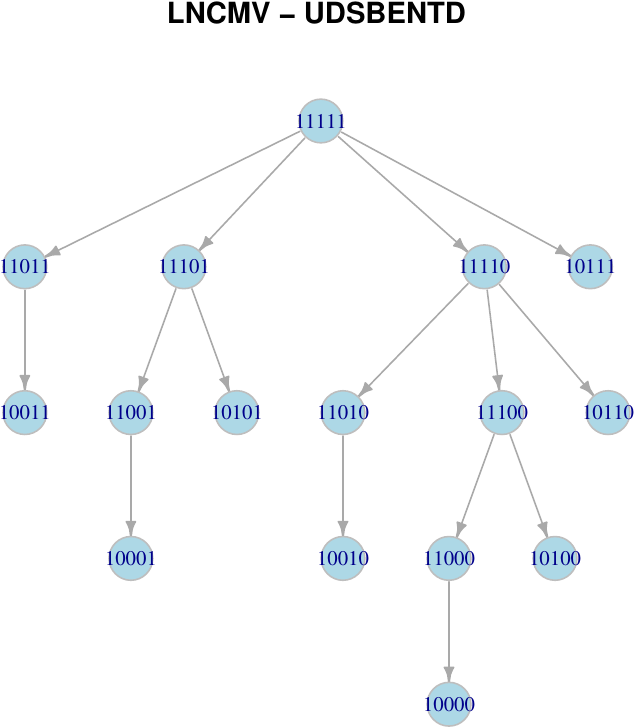}
\includegraphics[width=0.47\textwidth]{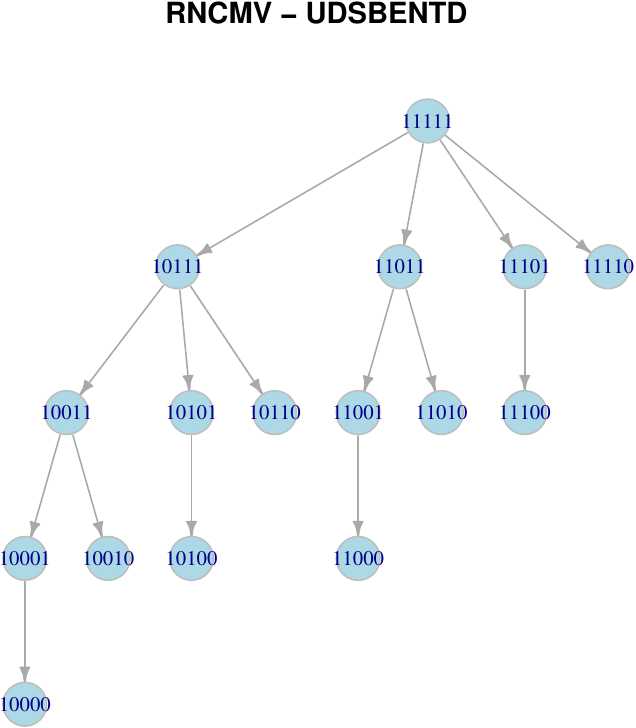}
\vskip 10pt
\includegraphics[width=0.47\textwidth]{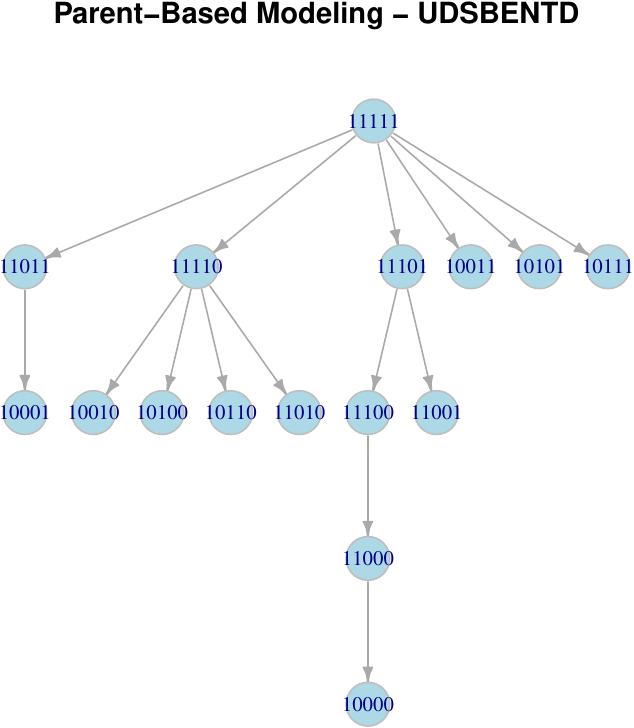}
\includegraphics[width=0.47\textwidth]{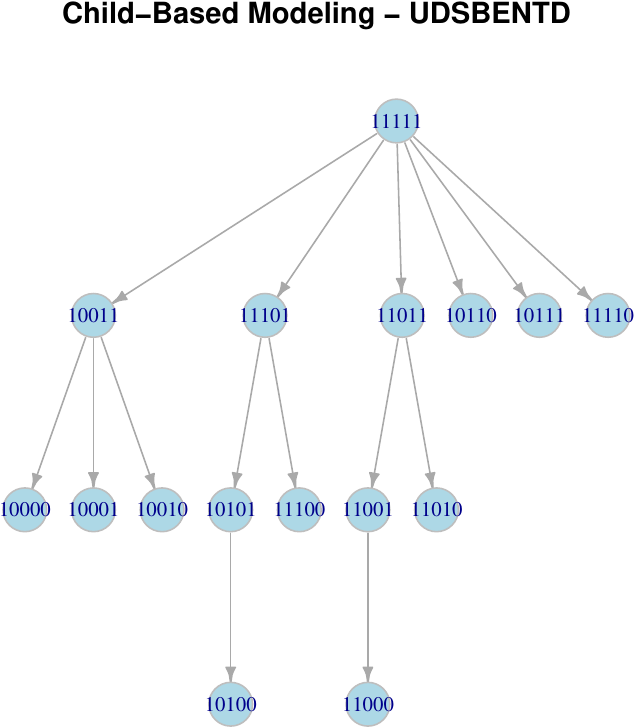}
\caption[Tree graphs for UDSBENTD data analysis]{These are the tree graphs for the UDSBENTD data analysis obtained via different methods: LNCMV, RNCMV, parent-based modeling, and child-based modeling.}
\label{fig:treegraphs_NACC}
\end{figure}

In Figure \ref{fig:cc_marginal_NACC}, we plot the results of the fitted model $p(x|\bR=1_d)$ using a mixture of binomial products.  We fit it using 5 clusters because of the recommendation from BIC.  From the first five panels, we can see that it is roughly able to capture the shape of the marginal distributions.  In the sixth panel, we include plots of the five clusters we obtain as latent trajectories over the five time points.  Each cluster is represented by a curve with the observed average CDR score reported at each dot.  A given dot corresponds to the predicted mean UDSBENTD score from the model for a given cluster and year.  Because the average CDR score is close to 0, the first two clusters represent cognitively normal people.  There is also some evidence of a practice effect between years 1 and 3 because the scores increase over those years \citep{Goldberg2015Practice}.  The third clusters can be interpreted as mild cognitively impaired people because the average CDR score is close to 0.5.  The fourth cluster appears to represent mild cognitively impaired people transitioning to dementia.  The fifth cluster appears to be mostly mildly cognitive impaired or dementia individuals.

\begin{figure}
\centering
\includegraphics[height=2.5 in]{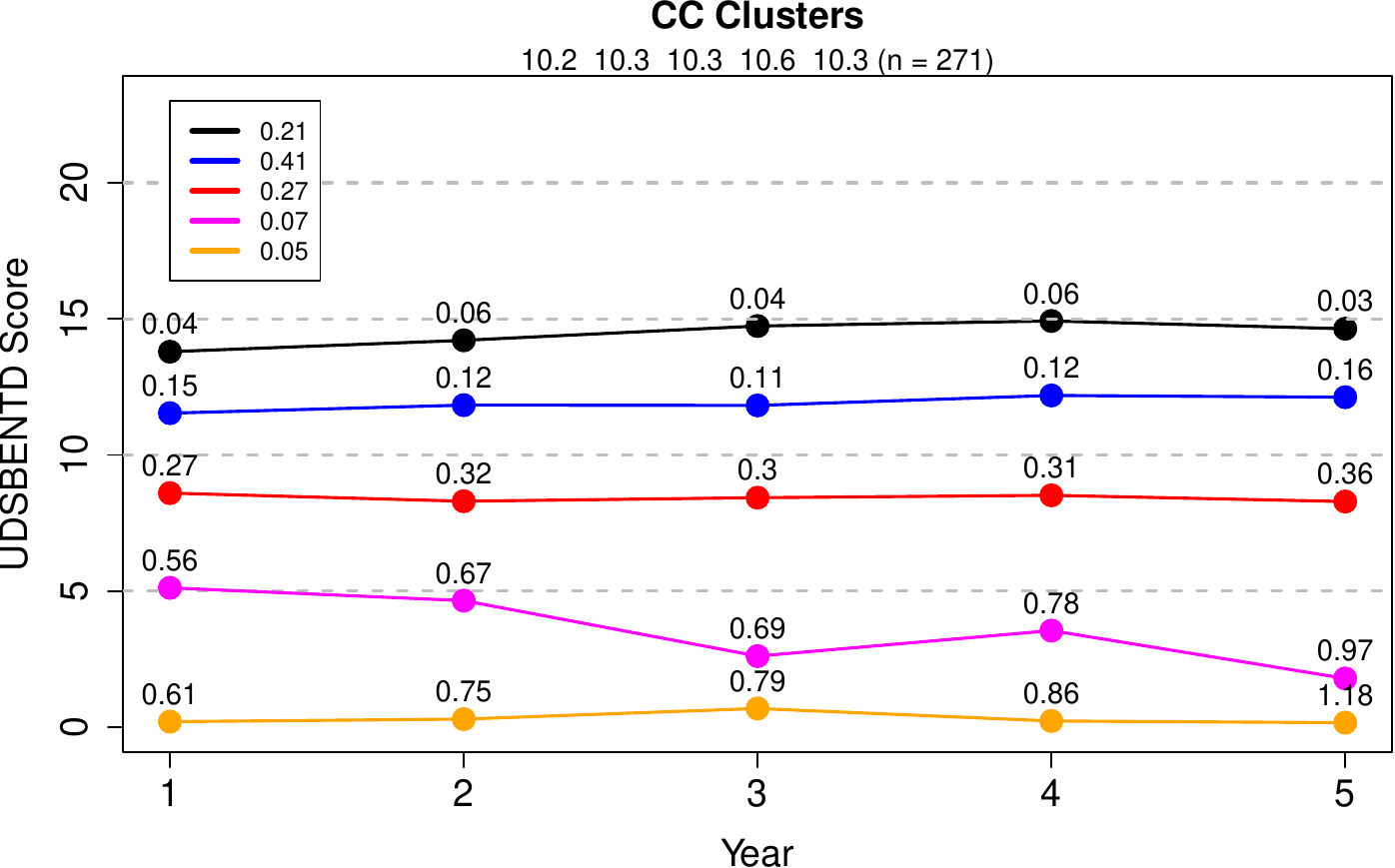}
\caption{Fitting the mixture of binomial products model on the complete case distribution with $k=5$ components. 
	Each component's parameters correspond to a curve. The numbers on top of each parameter value indicates
	the average CDR score, a clinical measurement of the cognitive decline level, which was not used in our model fitting (the CDR score serves as a external validation of 
	our fitted model). Note that for the pink component, it shows a learning effect on year 3 to year 4 that the score was improving while the clinical assessment (CDR score) 
	of the cognitive ability is declining. }
\label{fig:cc_marginal_NACC}
\end{figure}

In Figure \ref{fig:obs_dist_NACC}, we plot our fitted model using the conjugate odds method against the observed marginal distributions as a diagnostic check.  Because our method models both the imputation distribution and observed data distribution in one shot, it is important to perform this diagnostic check to have confidence in the imputation distribution results.  The parent and LNCMV graphs have similar results while the CCMV and RNCMV graphs have similar results.  So, we report results from RNCMV, parent-based modeling, and child-based modeling.  All resulting models for the observed-data distribution appear to fit to the data reasonably well in the majority of settings and the complete-case distribution (CCA) generally fails to capture the peak at $0$ for most of the observed data distributions.

\begin{figure}
\centering
\includegraphics[width=0.47\textwidth]{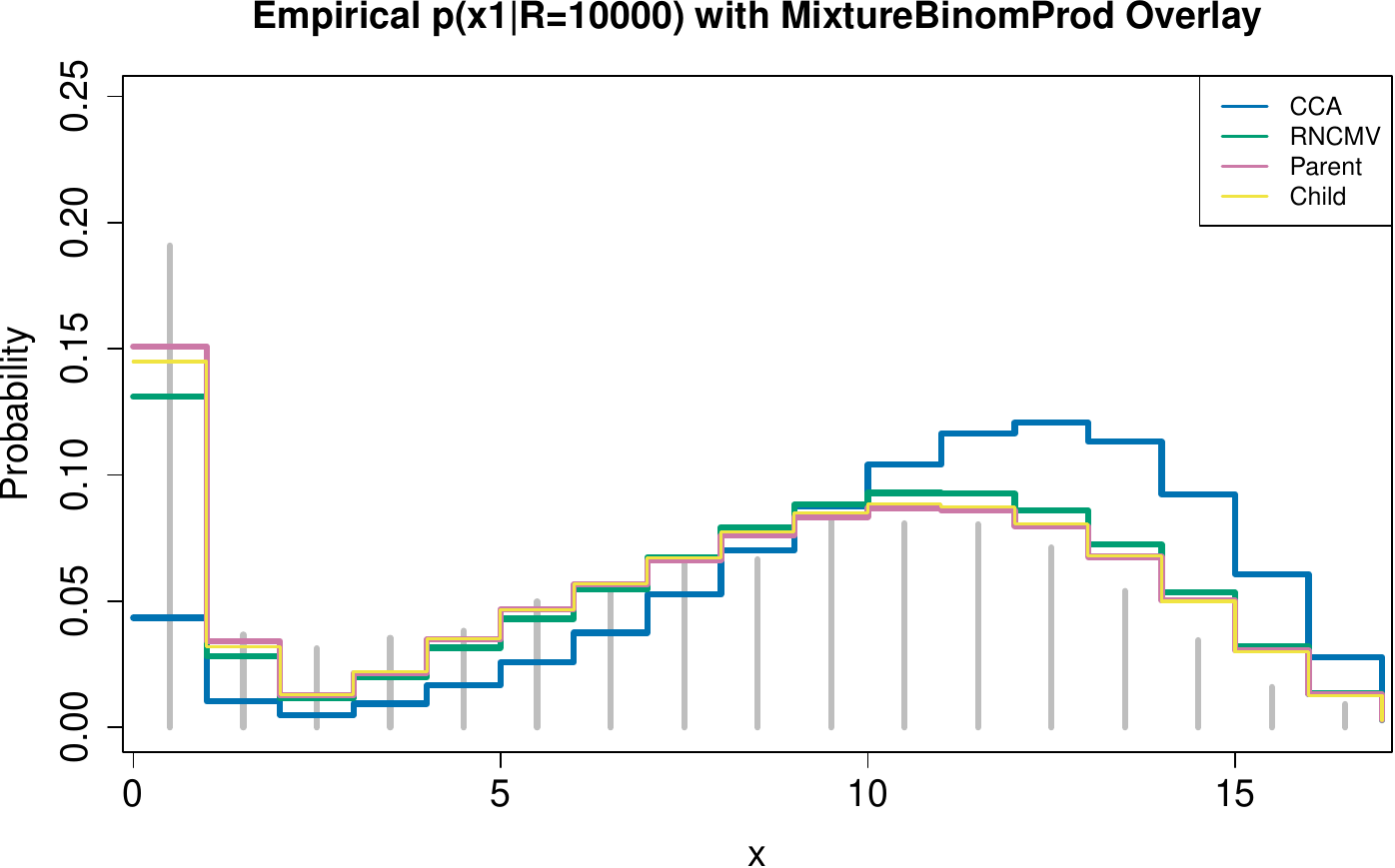}
\includegraphics[width=0.47\textwidth]{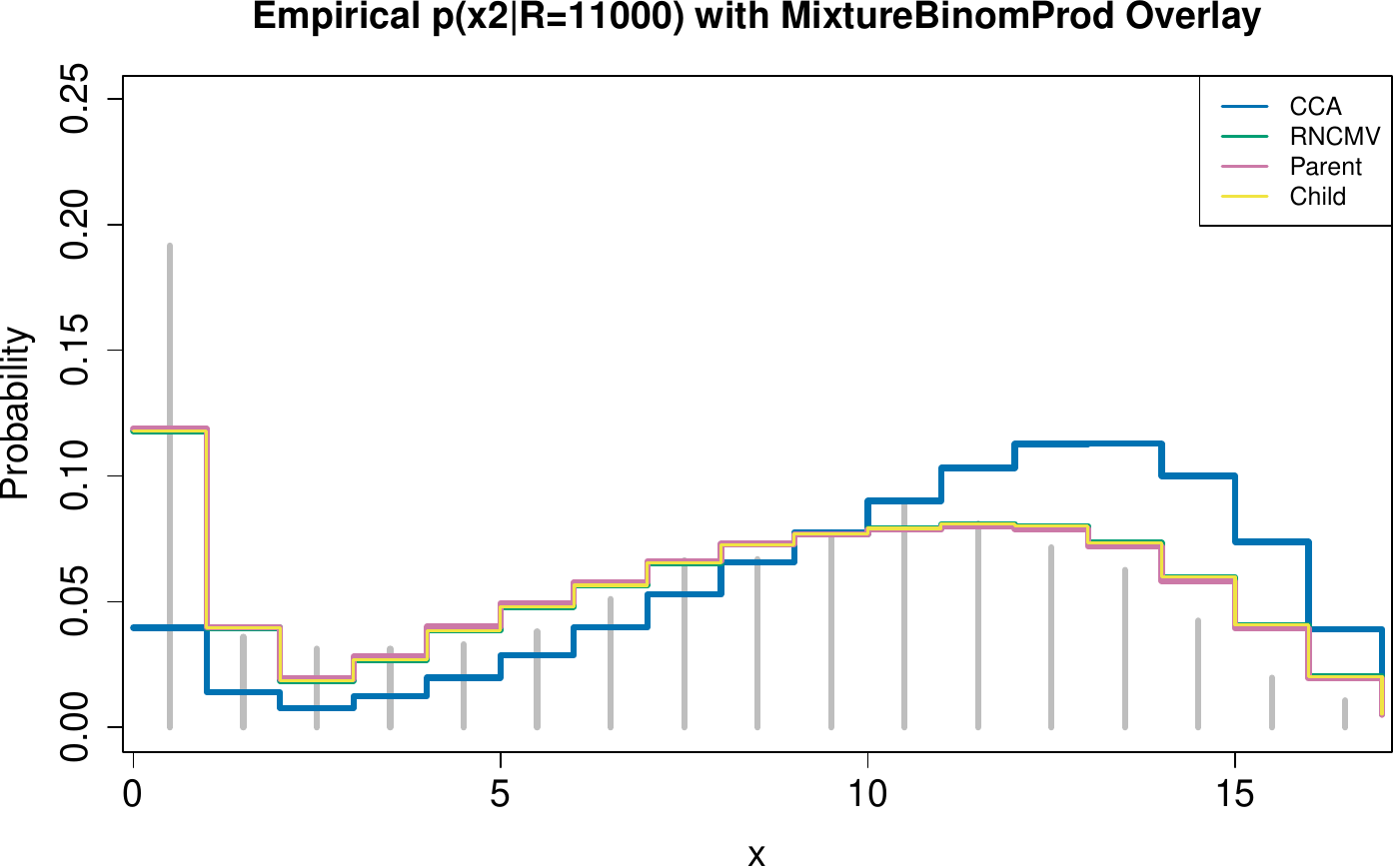}
\caption{
	Two examples of how the observed-data distribution model is improved by the conjugate odds. 
	The grey vertical lines indicates the empirical distribution of the variable. 
	The four colored histograms indicate the fitted distribution on each variable from: CCA, RNCMV, parent-based, and child-baesd. 
	The latter three methods are tree graph with conjugate odds and they all show a huge improvement over the CCA. 
}
\label{fig:obs_dist_NACC}
\end{figure}

For the observed marginal distributions for patterns $11000$ and $10010$, the fit from the different tree graphs is comparable.  From the marginal distributions $p(x_1|\bR=10000)$ and $p(x_1|\bR=10100)$, we generally see that the parent-based modeling tree graph yields fitted models that generally approximate the observed distributions better.  Thus, for the following plots in Figure \ref{fig:clusteringNACC}, we report the fitted result for $p(x|\bR=r)$ using parent-based modeling and contrast it with imputing with \textit{mice} and then fitting a mixture of binomial products model.  We see that the clusters across the different patterns $10000, 11000, 10100$, and $10010$ are generally very similar for parent-based, but the weights change.  For example, for $\bR=10000$, there is more weight towards the unhealthier clusters.  We also note that \textit{mice} yields relatively similar clusters as well in terms of trends, but it suffers from the model incompatibility problem \citep{meng1994multiple}, being longer to fit, and cannot handle MNAR data.

In the complete data (Figure \ref{fig:cc_marginal_NACC}), we observe a learning effect for the pink component.
Such learning effect was visible when we perform imputation via tree graphs (left column).
However, for the MICE, this effect was only observed in the case of $\bR=10010$ (bottom-right panel). 
Note the average CDR score (the number on top of each dot) is only observed partially because when the individual is missing from that year's data, the CDR score is missing as well. 

\begin{figure}
\centering
\includegraphics[width=0.47\textwidth]{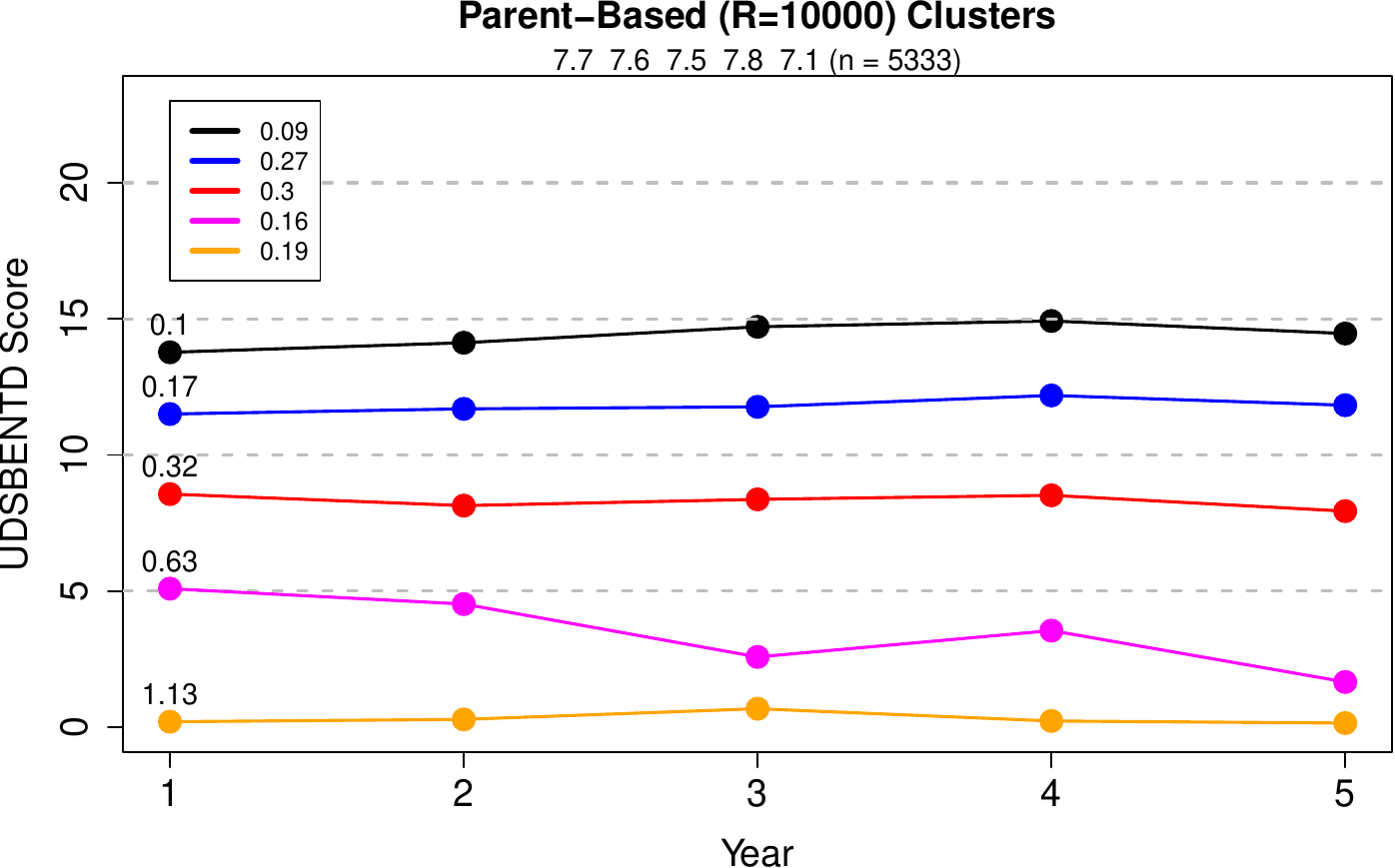}
\hspace{0.2in}
\includegraphics[width=0.47\textwidth]{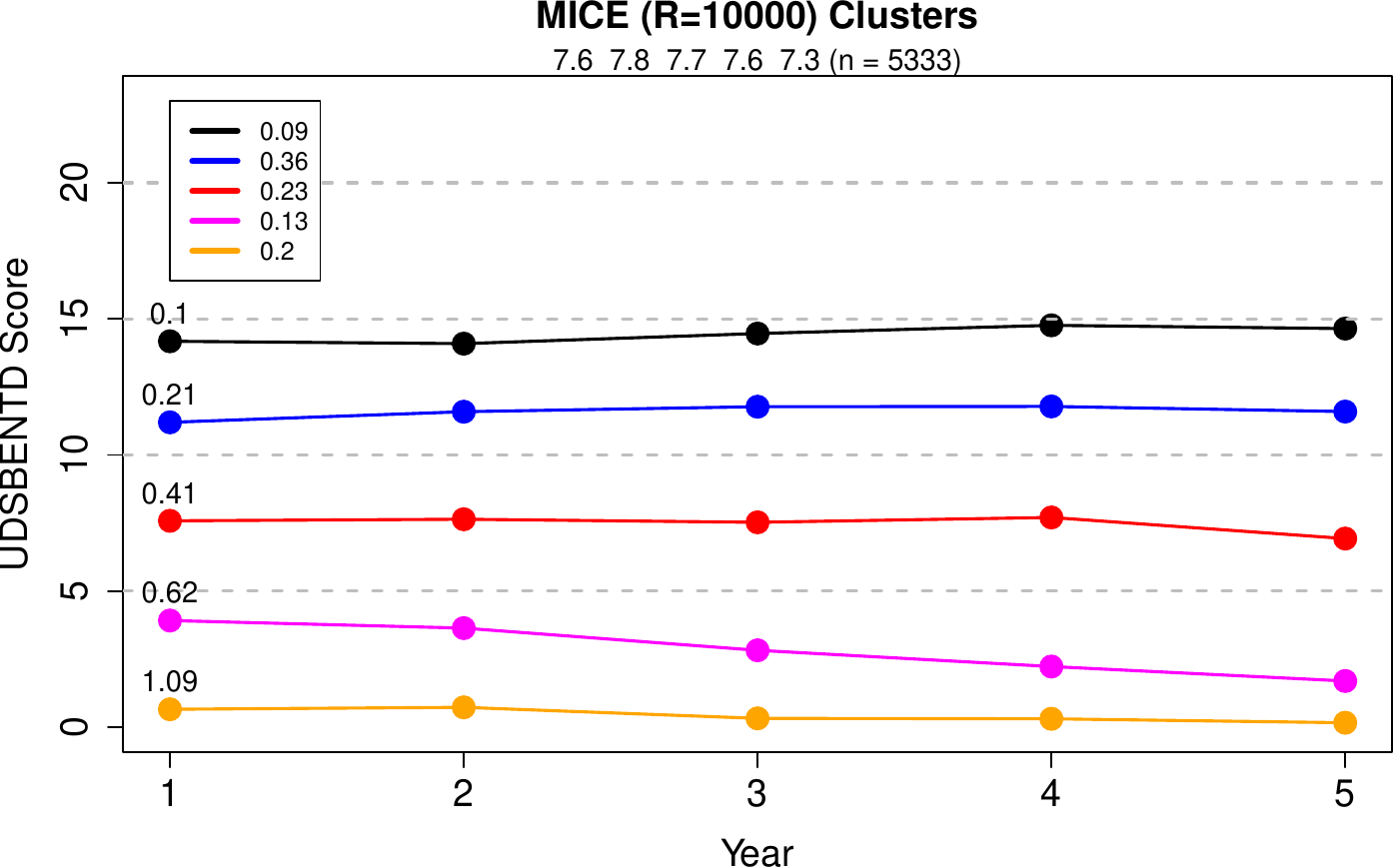}
\vskip 10pt
\includegraphics[width=0.47\textwidth]{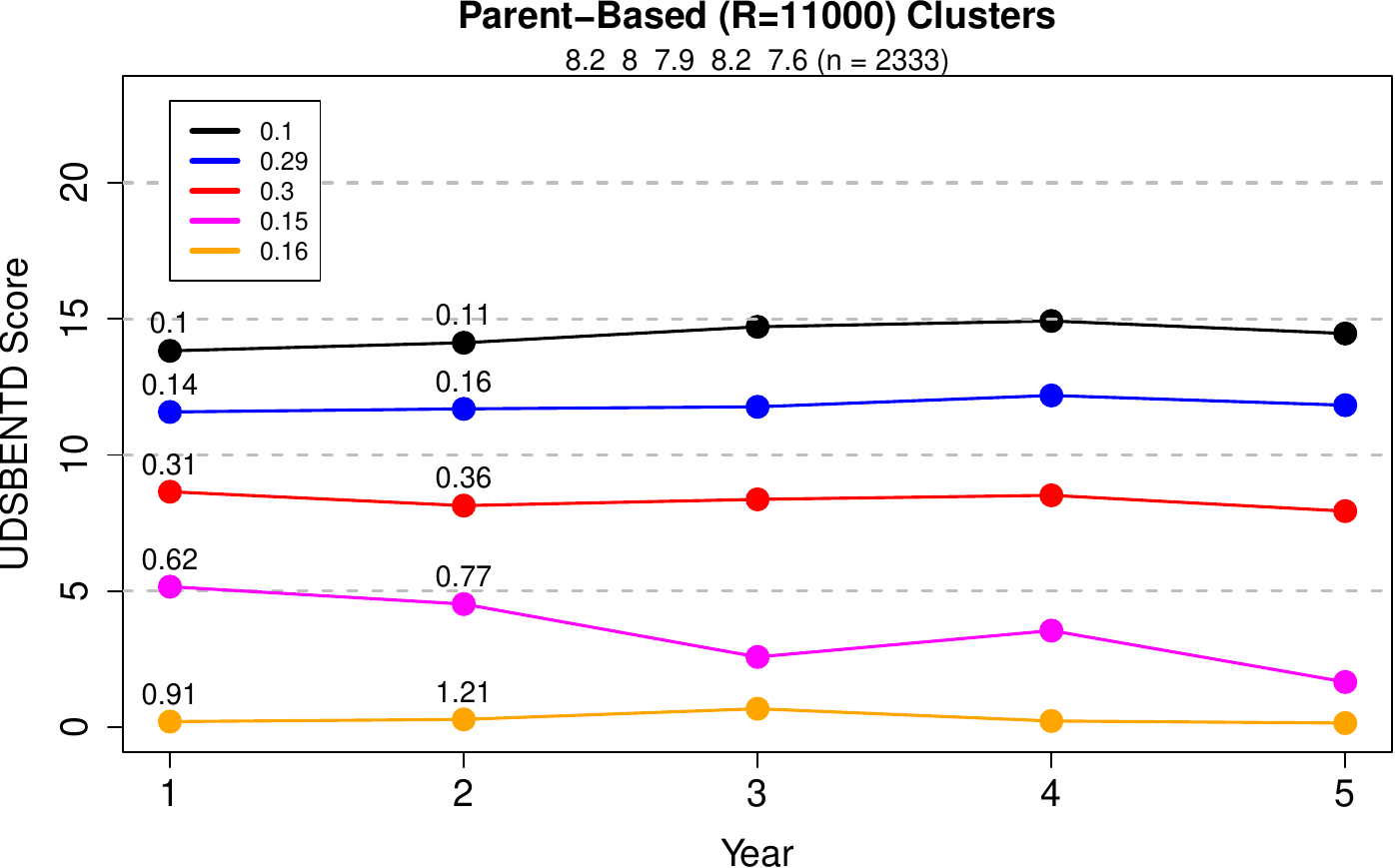}
\hspace{0.2in}
\includegraphics[width=0.47\textwidth]{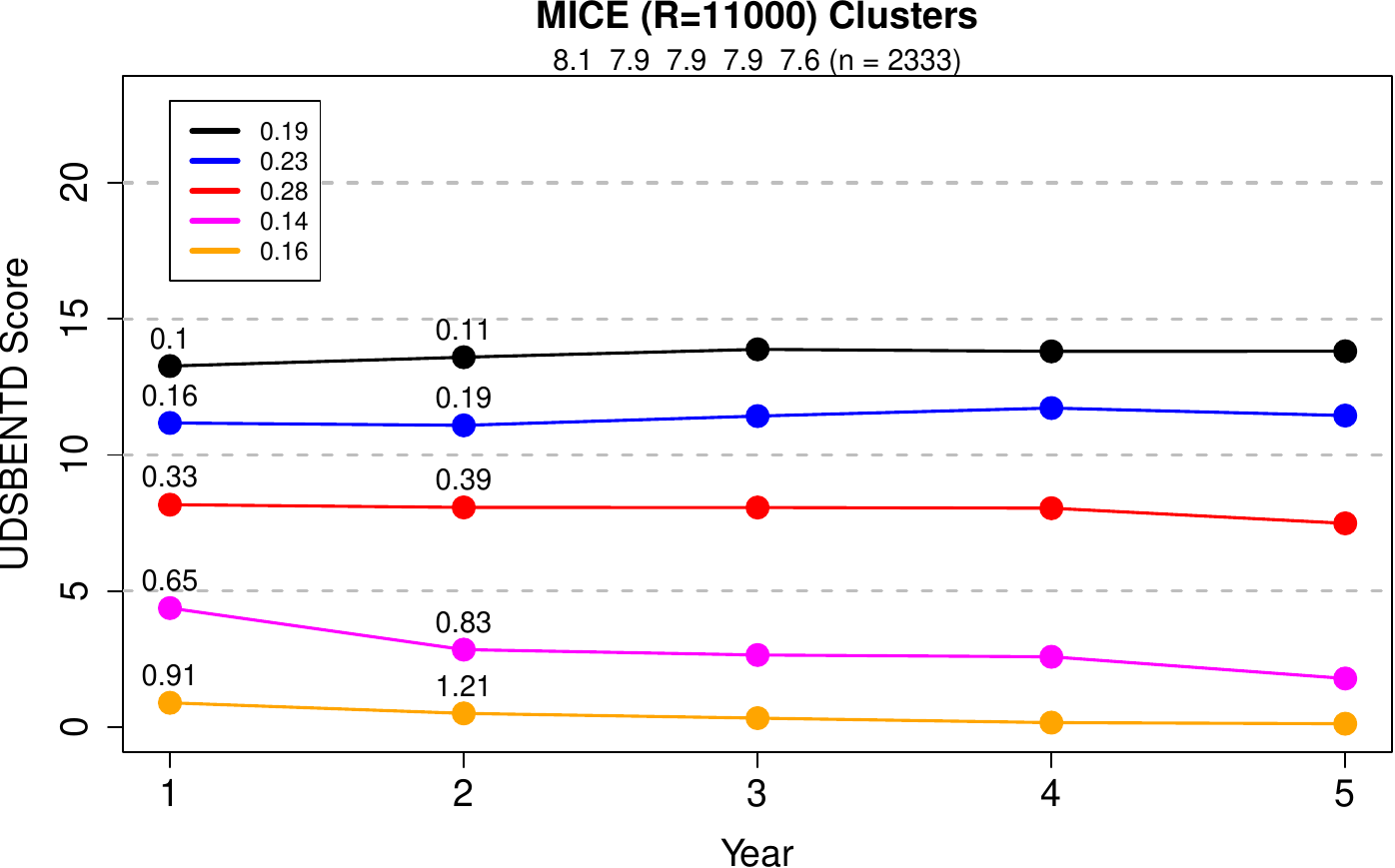}
\vskip 10pt
\includegraphics[width=0.47\textwidth]{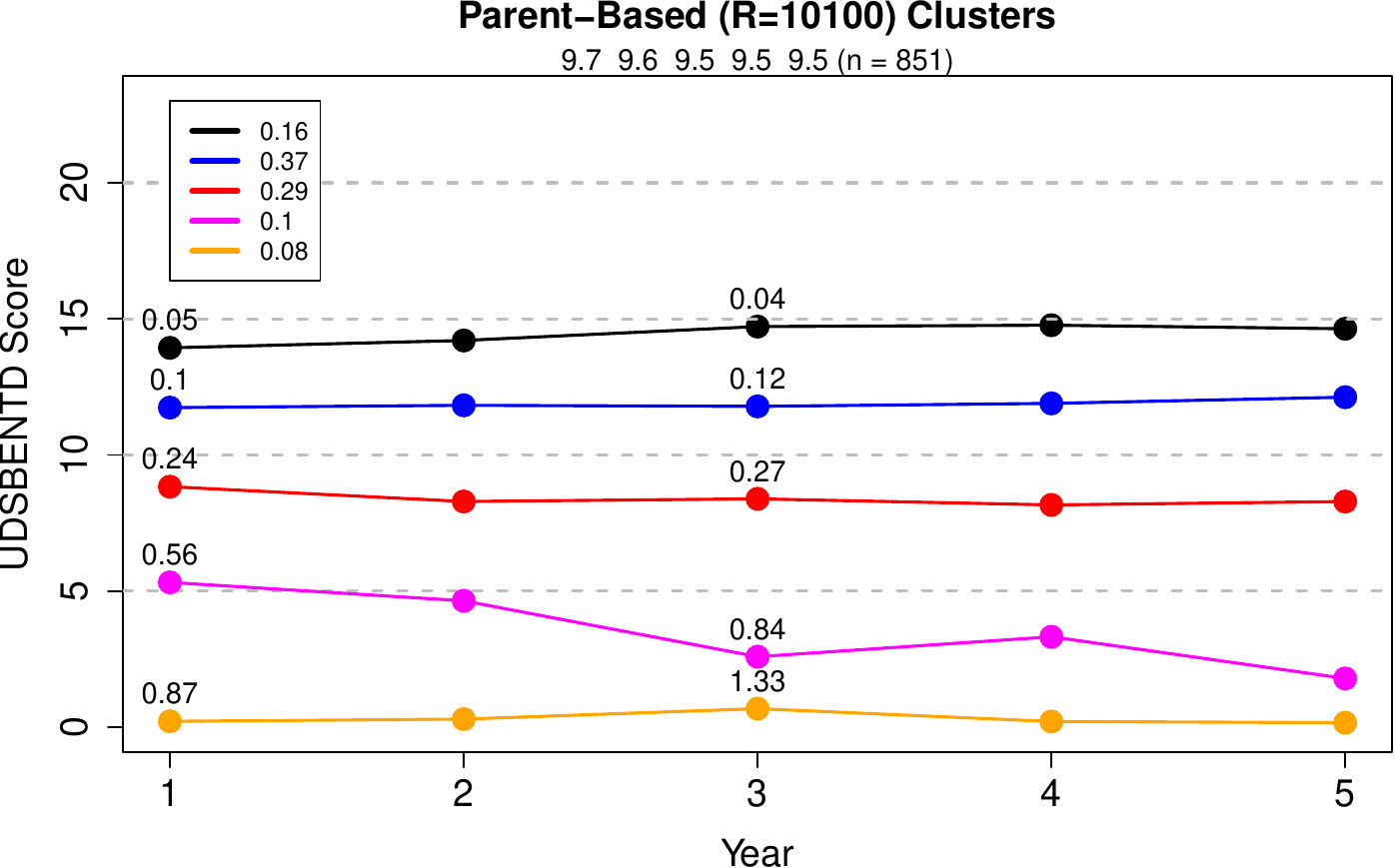}
\hspace{0.2in}
\includegraphics[width=0.47\textwidth]{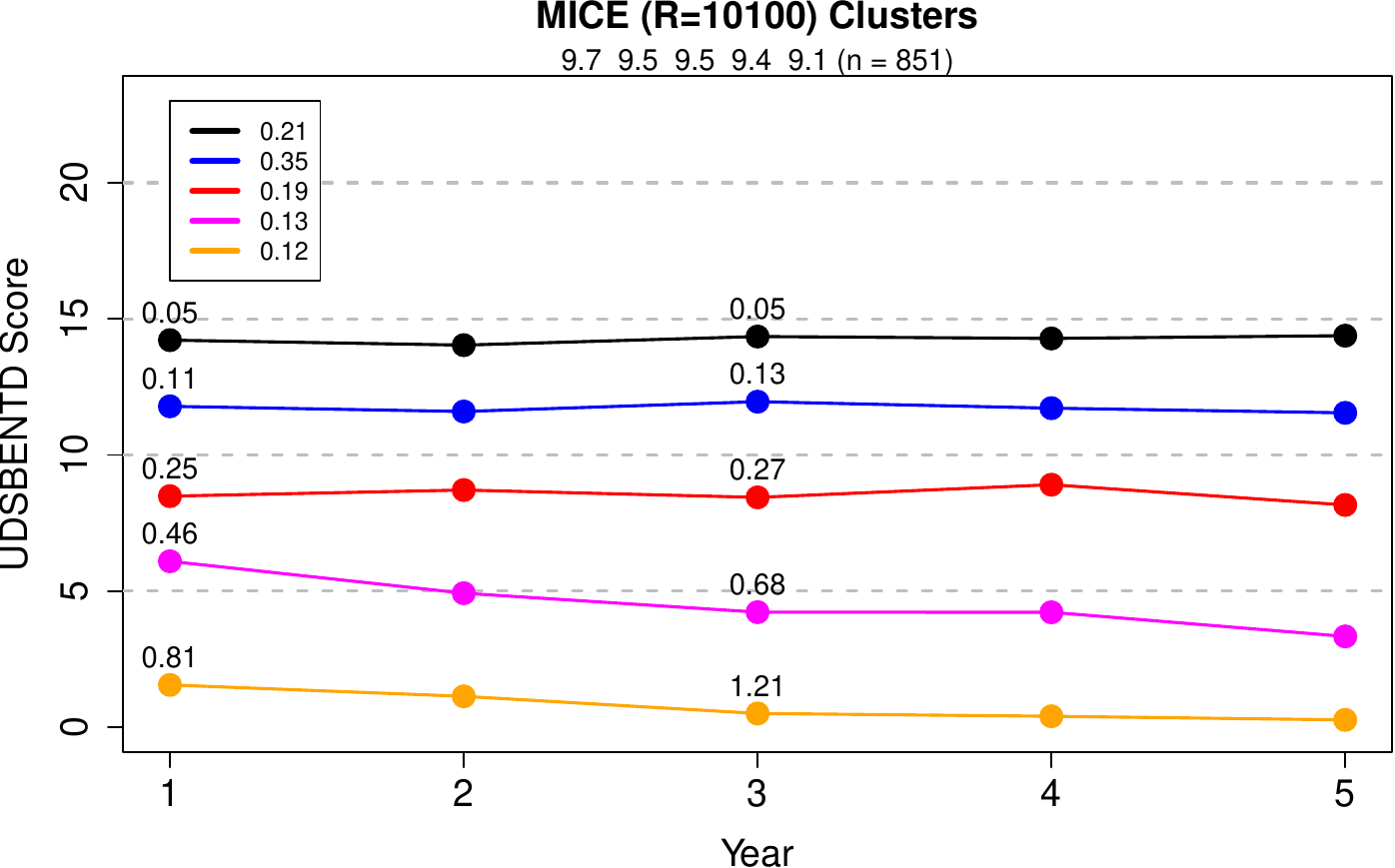}
\vskip 10pt
\includegraphics[width=0.47\textwidth]{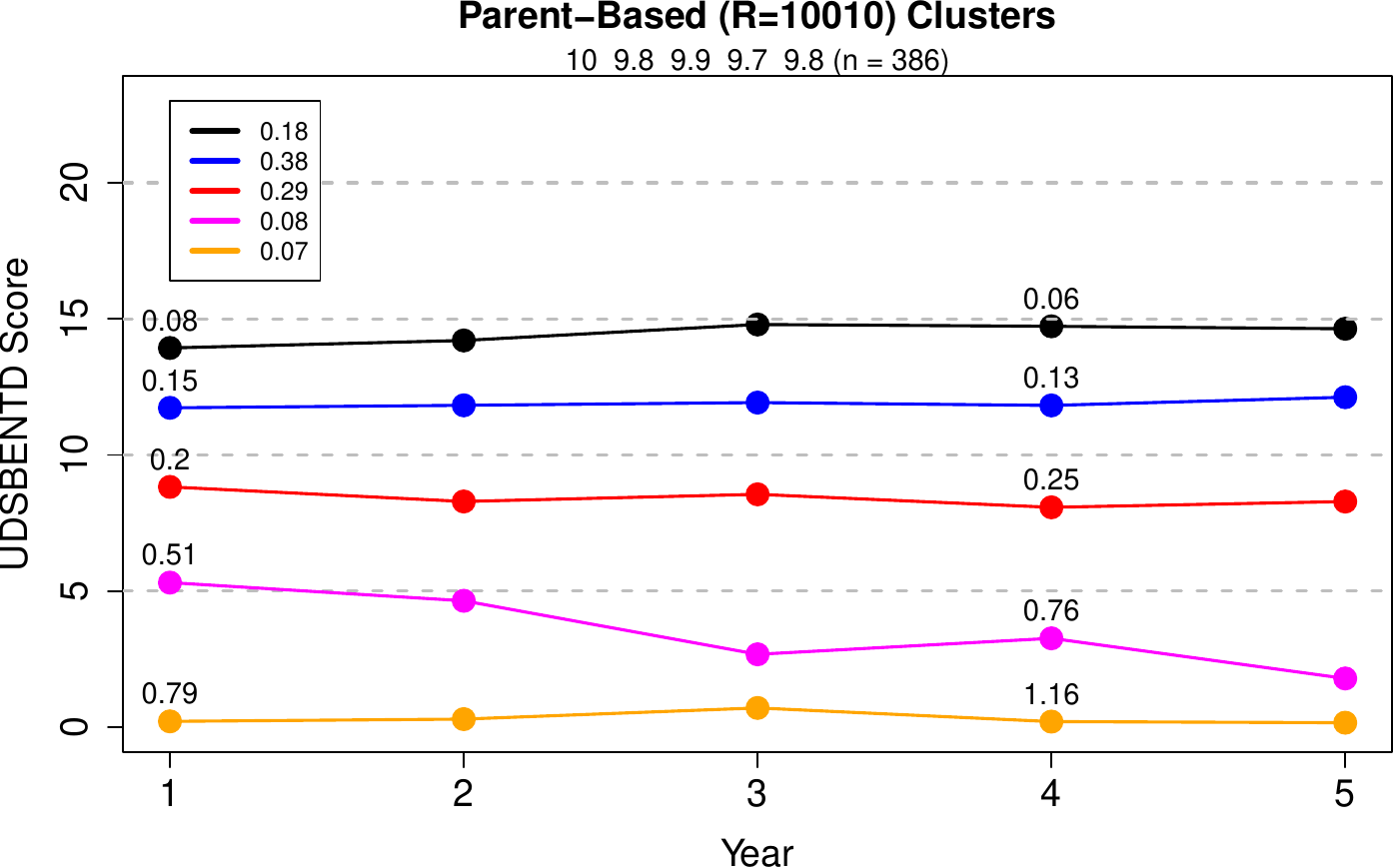}
\hspace{0.2in}
\includegraphics[width=0.47\textwidth]{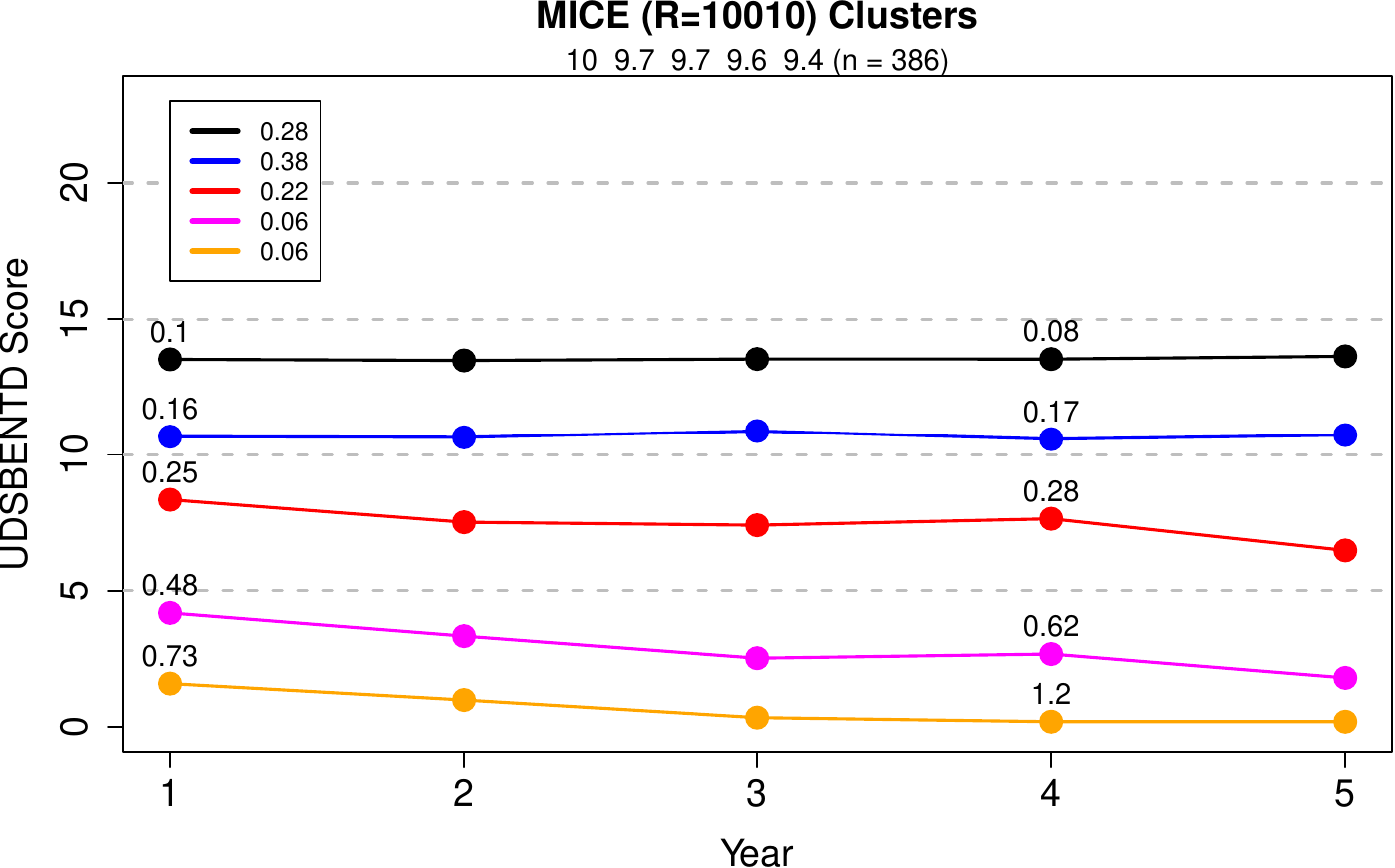}
\caption{These are the results of the fitted trajectories.  Each row corresponds to a pattern with $10000$, $11000$, $10100$, and $10010$, respectively.  Column 1 is the parent-based modeling tree graph assumption, and column 2 is fitting the model after \textit{mice} imputation.}
\label{fig:clusteringNACC}
\end{figure}

\section{Discussion}

In this paper, we introduced a new strategy for modeling multivariate missing not at random data.  This strategy combines two frameworks: 1) the tree graph framework for identifying the selection odds and 2) the conjugate odds property to ensure simple modeling.  We demonstrate that the tree graph is an incredibly rich subclass of the general pattern graph.  Each tree graph represents a missing not at random assumption and provides an elegant form of the selection odds, thereby overcoming a shortcoming of a general pattern graph.  Moreover, the conjugate odds property is introduced and used to model all distributions of the form $p(x|r)$.  We provide examples of the conjugate odds property with applications to mixtures of exponential family models.  Furthermore, we provide simulations to argue finite sample performance of our method, and we analyze two data sets comprising multivariate discrete and multivariate continuous data.

There are several ways to extend the ideas in this paper.  As presented, our framework works using mixture of exponential family models with logistic odds and mixture of Pareto distributions with power law odds.  Previous proposed models from the literature such as mixture of binomial products \citep{suen2023modelingmissingrandomneuropsychological} and the Rasch model \citep{rasch1960probabilistic} could be utilized here.  It would be interesting to explore other parametric families and determine what others might fall under this framework.  Since the data is longitudinal by nature, there may be a more sophisticated way to incorporate time in the $p(x|1_d)$ model.

Furthermore, while we discussed multiple methods for choosing a tree graph and performing sensitivity analysis, this remains an active area of research.  Since a tree graph is a nonparametric identifying restriction that cannot be rejected by the observed data, it is critical to choose it in such a way that is reasonable.  We have outlined a few different principles, but there may be more extensions.  For example, when performing a parent-based or child-based modeling approach, one could consider distributional distances such as the Wasserstein or Hellinger metrics.  A natural way to conduct sensitivity analysis is through exponential tilting \citep{kim2011semiparametric, shao2016semiparametric, zhao2017semiparametric}, but there may be more other methods that exploit the geometry of the pattern graph space to interpolate between different tree graph assumptions.  We leave this for future work.

\clearpage

\newpage

\appendix

\section{Empirical analysis: Kernel density estimation}

\label{appendix:kde}
We now consider a data in the continuous setting.  We consider white vinho verde wine samples from the north of Portugal. This data can be downloaded from the UCI repository.  This data consists of 4898 observations and was originally collected to model wine quality based on physicochemical tests.  We select three continuous variables to study the modeling effect: pH, sulphate, and alcohol levels.  Initially, we normalize the data such that it has a mean 0 and standard deviation 1.

\subsection{Missing Not at Random Mechanism}

\label{appendix:MNARkde}

First, we generate the missing data via a missing not at random mechanism 100 times through a tree graph and a prespecified selections odds model.  On each iteration, we consider four density estimators for each conditional distribution $p(x|\bR=r)$.  The first is a multivariate kernel density estimator using a Gaussian kernel on the complete-case data.  Then, we construct our tree graph KDE, where we exploit the conjugate odds property with the Gaussian kernel.  We also include an available case marginal Gaussian KDE estimator, where we fit the distribution based on all data that is observed for that dimension.  For example, if we are considering dimension $3$, then we pool the data from patterns $111$, $101$, and $001$, and fit a one-dimensional KDE $\hat{p}(x_3|\bR\in\{001,101,111\})$.  One clear disadvantage of the available case KDE is that we are unable to construct a joint KDE.  Additionally, when there is missingness, we also perform \textit{mice} imputation 20 times and construct the multivariate KDE on the \textit{mice} imputed data.

In Figure \ref{fig:realdata-wine}, for each pattern-dimension pair, we plot the marginalized KDEs averaged over all 100 iterations.  Of primary interest, we plot the tree graph KDE obtained after applying the conjugate odds property.  For comparison, we also plot the complete case KDE, the available case KDE, and when the data is missing, the mice KDE.  Since we have access to the true data and generate the missingness ourselves, we also can construct the oracle KDE, based on the true data.  Thus, we include the oracle KDE, which is the KDE fitted using the true data.  We expect the tree graph KDE to agree with the oracle KDE, and largely, we observe that the tree graph KDE is able to generally identify the same shape as the oracle KDE.  In contrast, the competing kernel density estimators generally do not capture the correct shape of the distribution, and it is clear they have different means and modes.  We note that imputing with \textit{mice} and fitting a KDE provides a similar result to the available case KDE, but it is not similar to the oracle KDE.  This provides further evidence of the need to be careful when applying \textit{mice}, especially when the data is MNAR.

\begin{figure}[!b]
	\centering
	{\color{blue}\fbox{\includegraphics[width=0.31\textwidth]{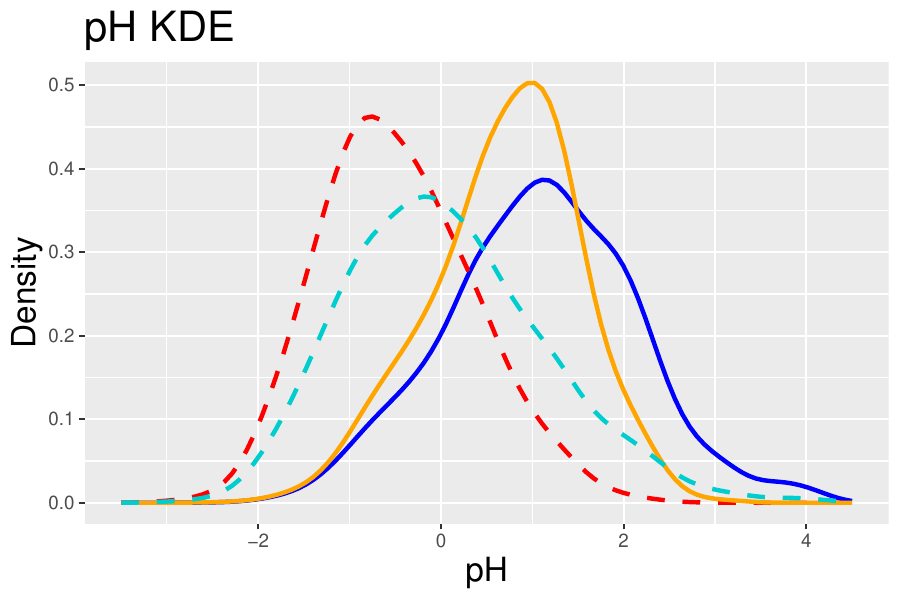}}}
	{\color{blue}\fbox{\includegraphics[width=0.31\textwidth]{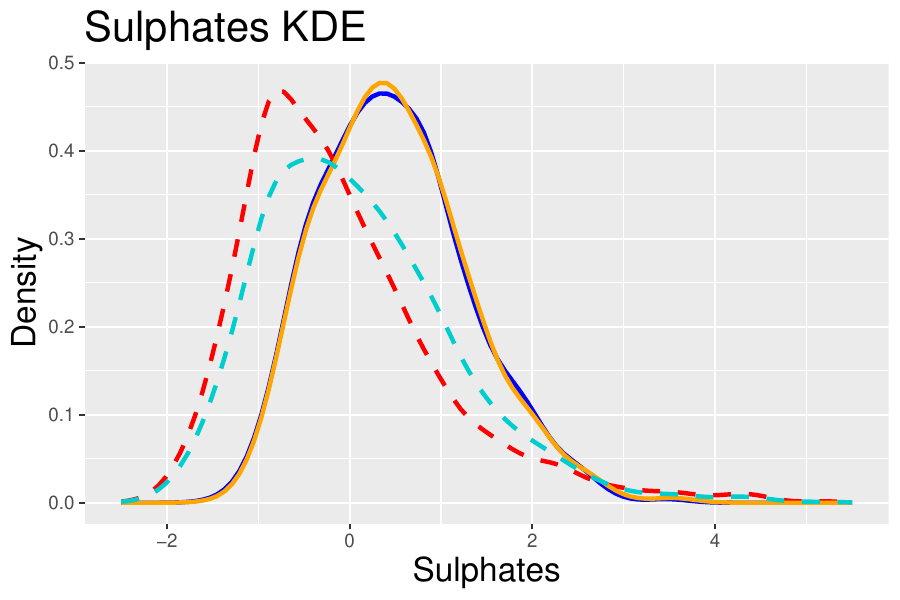}}}
	{\color{magenta}\fbox{\includegraphics[width=0.31\textwidth]{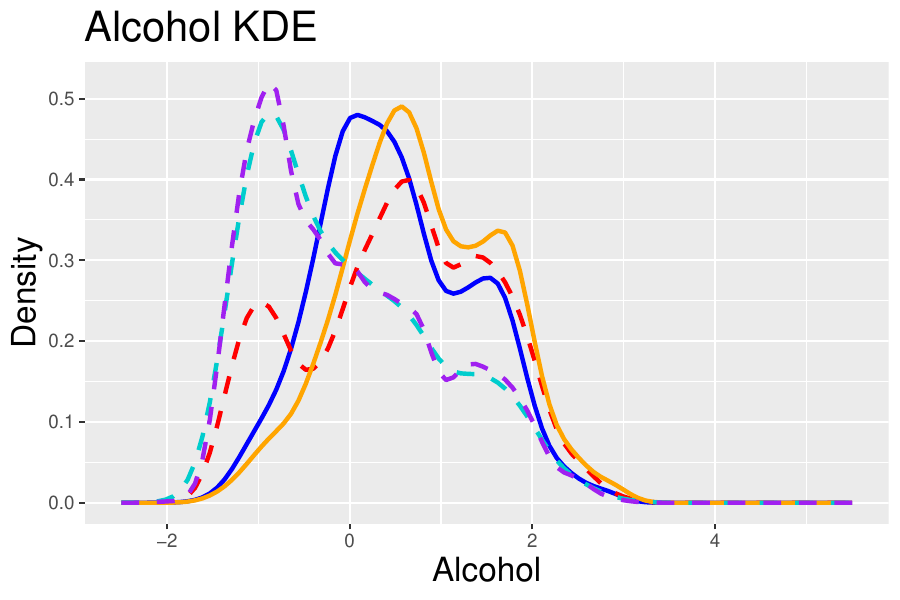}}}
	\vskip 5pt
	{\color{blue}\fbox{\includegraphics[width=0.31\textwidth]{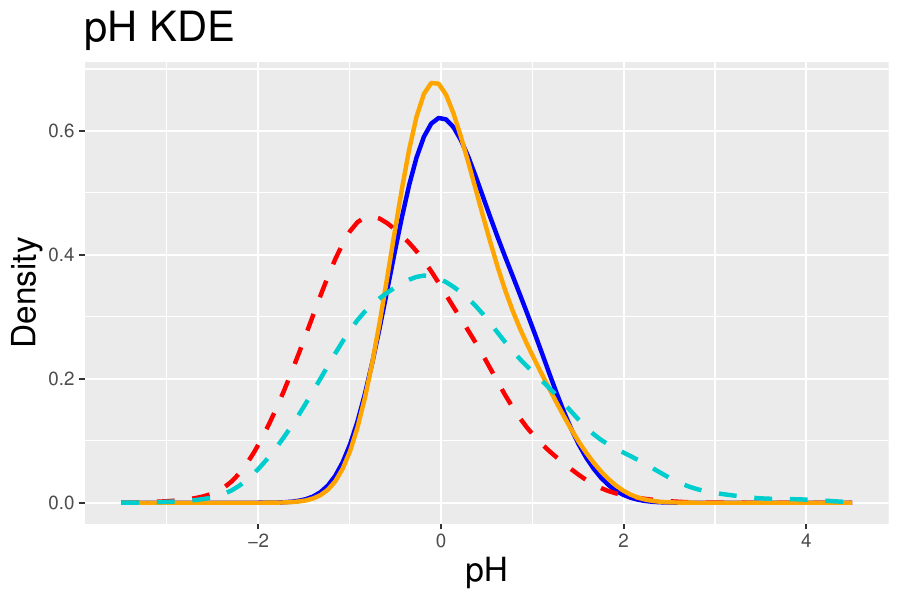}}}
	{\color{magenta}\fbox{\includegraphics[width=0.31\textwidth]{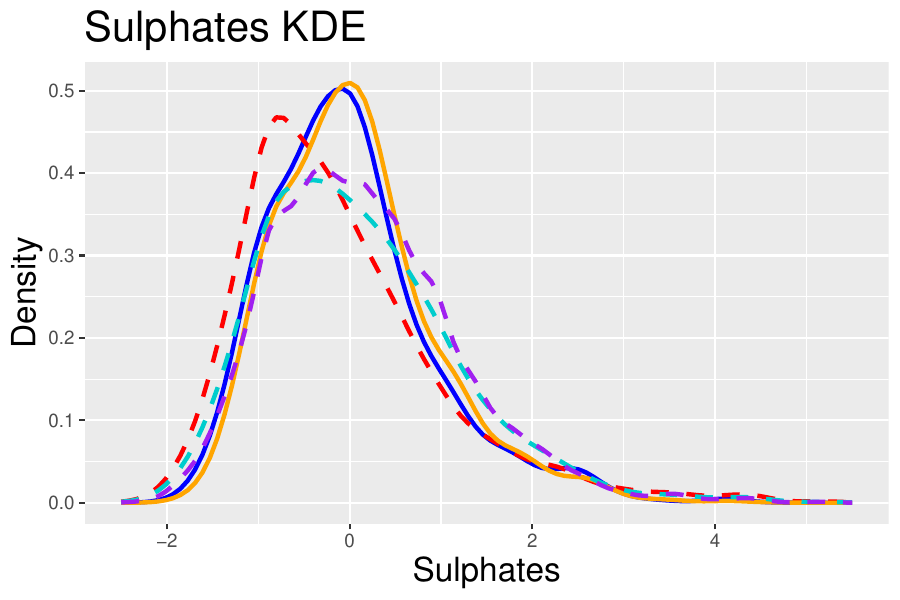}}}
	{\color{blue}\fbox{\includegraphics[width=0.31\textwidth]{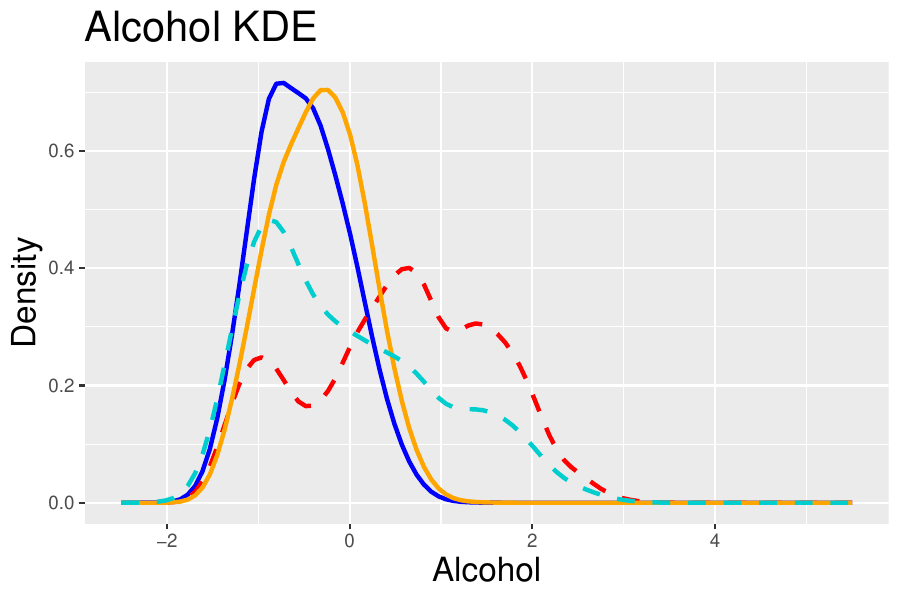}}}
	\vskip 5pt
	{\color{magenta}\fbox{\includegraphics[width=0.31\textwidth]{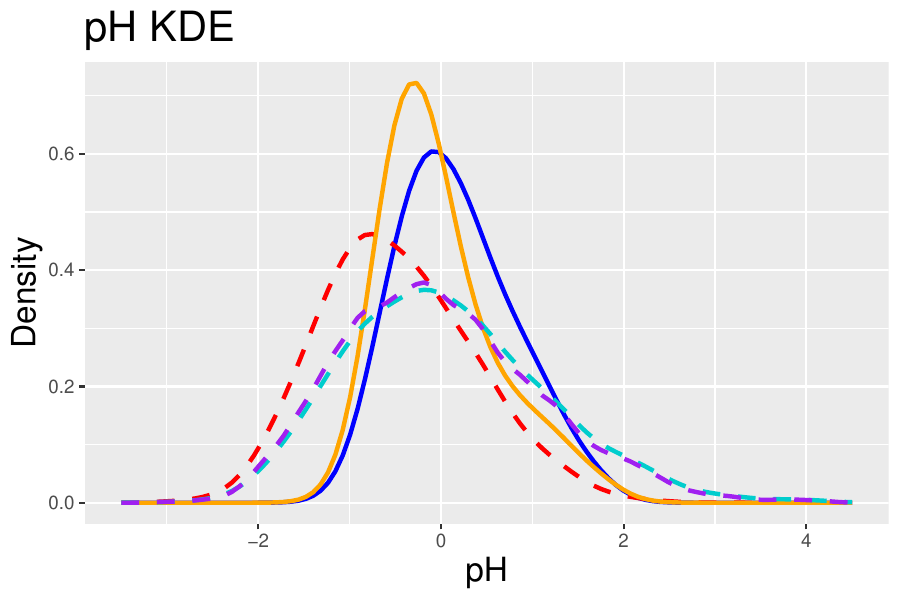}}}
	{\color{magenta}\fbox{\includegraphics[width=0.31\textwidth]{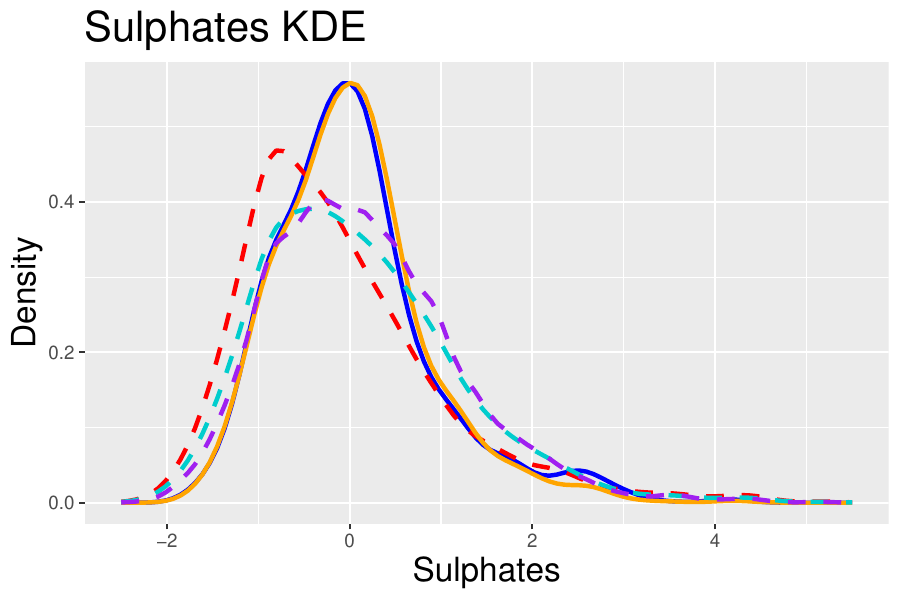}}}
	{\color{blue}\fbox{\includegraphics[width=0.31\textwidth]{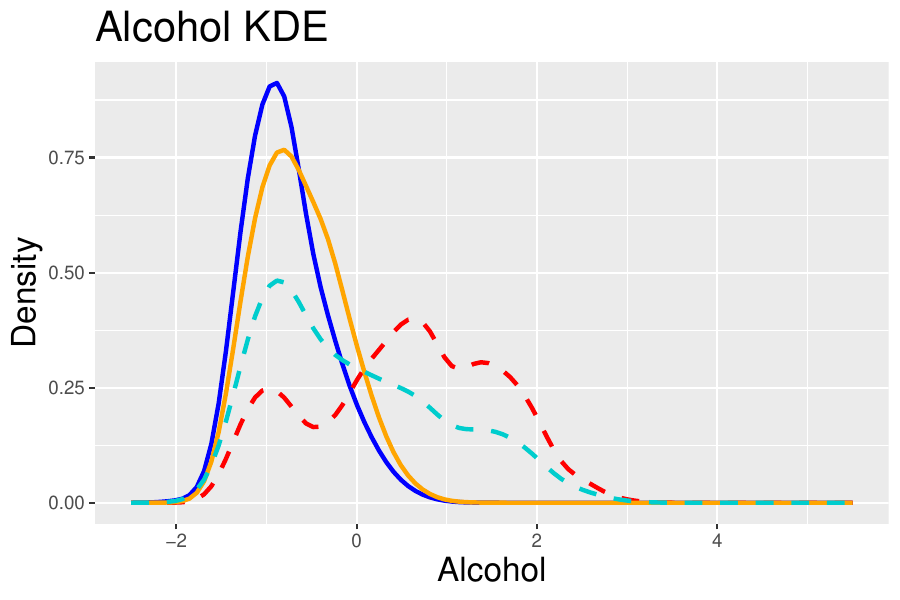}}}
	\vskip 5pt
	\includegraphics[width=0.25\textwidth]{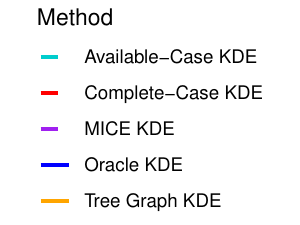}
	\hspace{0.5in}
	\includegraphics[width=0.15\textwidth]{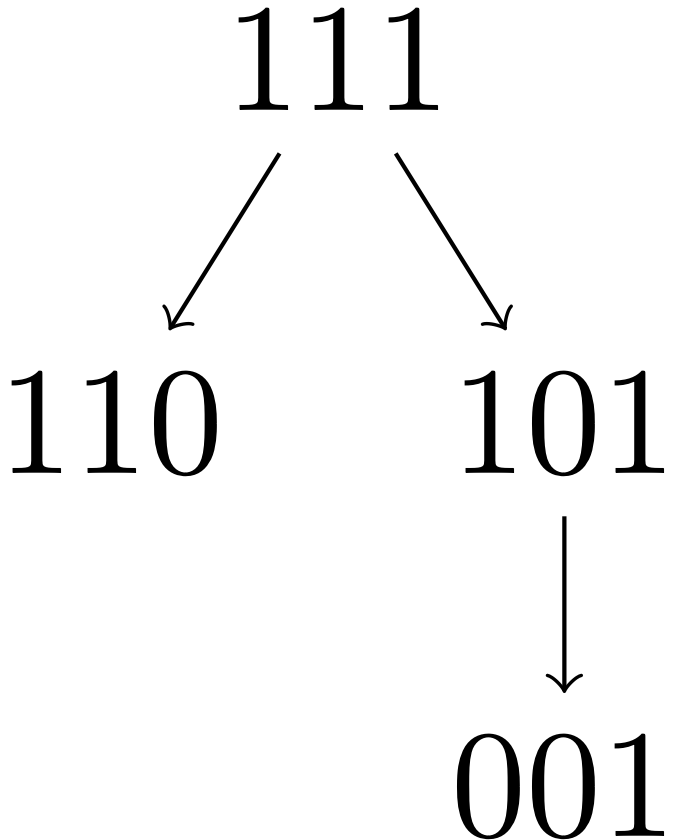}
	\caption{Each column corresponds to dimensions 1, 2, and 3, respectively.  Rows 1, 2, and 3 depicts the fitted KDEs for the pattern $110$, $101$, and $001$, respectively.  The last row includes the plot of the tree graph used to generate the missingness.}
	\label{fig:realdata-wine}
\end{figure}

As we mentioned before, our method can model the missing data distribution and the observed data distribution in one shot, so we report both results.  In the cases where we modeling the missing data distribution, we outline the plot in magenta.  In the cases where we model the observed data distribution, we outline the plot in blue.  Plots that are outlined in blue can be viewed more as diagnostic plots.

Each row of Figure \ref{fig:realdata-wine} corresponds to the marginal distributions for patterns $110$, $101$, and $001$, respectively.  For pattern $110$, the first and second plots correspond to marginal observed data distributions, and the third plot corresponds to a marginal missing data distribution.  For pattern $101$, the first and third plots correspond to marginal observed data distributions, and the second plot corresponds to a marginal missing data distribution.  For pattern $001$, the third plot corresponds to a marginal observed data distribution, and the first and second plots correspond to marginal missing data distributions, respectively.

\subsection{Missing at Random Mechanism}

\label{appendix:MARsim}

As in Appendix \ref{appendix:MNARkde}, we consider a simulation using the same real data.  However, we generate the missingness to be missing at random using the following logistic regression
\begin{align*}
	& \frac{P(\bR=110|x)}{P(R\neq110|x)} \propto \exp(0.6x_1 - 0.3x_2), \\
	& \frac{P(\bR=101|x)}{P(R\neq101|x)} \propto \exp(-0.6x_1 + 0.3x_3), \\
	& \frac{P(\bR=001|x)}{P(R\neq001|x)} \propto \exp(0.8x_3)
\end{align*}
with proportionality constants chosen such that $P(\bR=001) \approx 0.2$, $P(\bR=101) \approx 0.2$, $P(\bR=110) \approx 0.3$, and $P(\bR=111) \approx 0.3$.

For each pattern-dimension pair, we plot the marginalized KDEs averaged over all 100 iterations.  The provided KDEs are the same as those in Section \ref{appendix:kde}.  In this case, we have to learn a tree graph, so we run a data-driven parent-based approach, using energy distance on the empirical distributions.  For two distributions $P_X$ and $P_Y$, the energy distance can be written as
$$D_{\text{EN}}(P_X,P_Y) = 2\E\lVert X- Y \rVert - \E\lVert X- X'\rVert - \E\lVert Y - Y'\rVert$$
for $X,X'\sim P_X$ and $Y,Y'\sim P_Y$ and where $\lVert \cdot \rVert$ denotes the Euclidean norm.
We can estimate this using a sample version via
$$\widehat{D} = \frac{2}{n_X \cdot n_Y}\sum_{i,j} \lVert X_i - X_j\rVert - \frac{1}{n_X^2}\sum_{i,j}\lVert X_i - X_j\rVert - \frac{1}{n_Y^2}\sum_{i,j}\lVert Y_i - Y_j\rVert.$$

There are two possible tree graphs we can learn, and we provide a visualization of them in Figure \ref{fig:realdata-wine-MAR}.  Tree Graph 1 is the deepest possible graph, and Tree Graph 2 is the shallowest possible, corresponding to a CCMV assumption.  Tree Graph 1 was learned 100 times out of the total 100 randomly generated data sets, and Tree Graph 2 (CCMV) was never learned.  Therefore, we do not plot the results of fitting a CCMV graph and only of the first tree graph.  In Figure \ref{fig:realdata-wine-MAR}, we refer to the KDE from this learned tree graph as tree graph KDE.

We plot the tree graph KDE obtained after applying the conjugate odds property.  As before, we also plot the complete case KDE, the available case KDE, and when the data is missing, the mice KDE.  As we have accesss to the true data and generate the missingness ourselves, we also can construct the oracle KDE, based on the true data.  Therefore, we also include the oracle KDE, which is the KDE fitted using the true data. Surprisingly, the tree graph KDE and mice KDE generally agrees with the oracle KDE in most scenarios.  This suggests in some scenarios, there may be some robustness of our tree graph method to missingness generated via missing at random.  Additionally, since the tree graph KDE method is more computationally tractable than the mice KDE method, there may also be scenarios where it is preferred.

\begin{figure}[!b]
	\centering
	{\color{blue}\fbox{\includegraphics[width=0.31\textwidth]{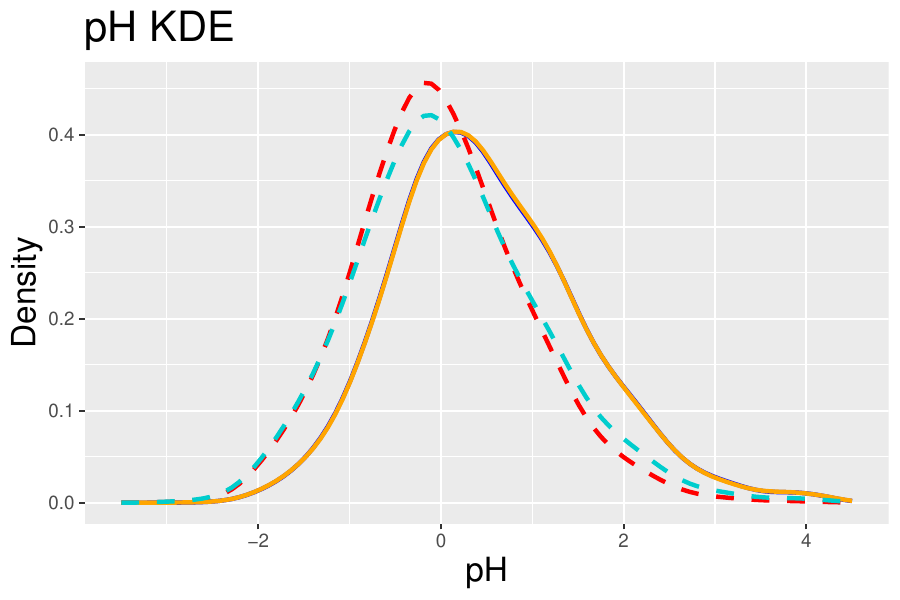}}}
	{\color{blue}\fbox{\includegraphics[width=0.31\textwidth]{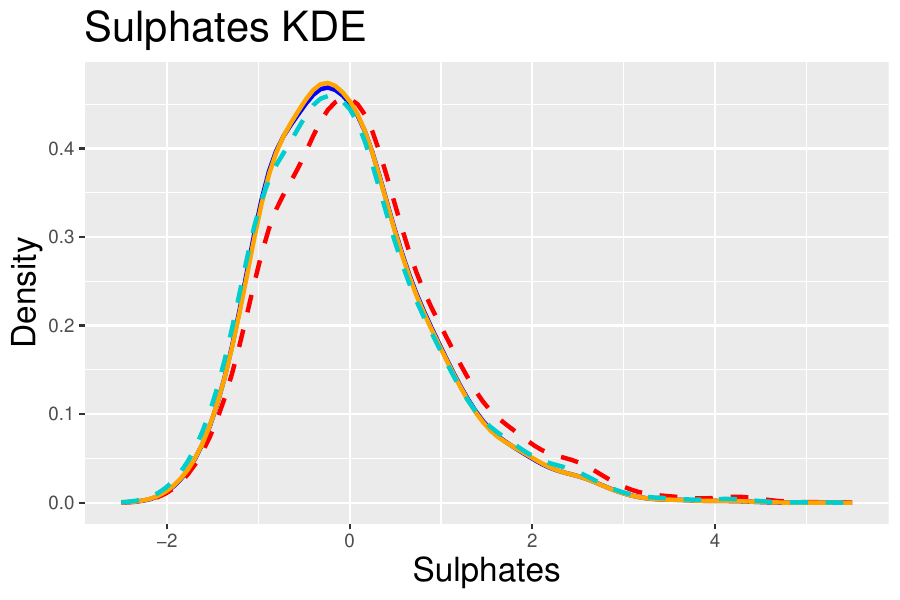}}}
	{\color{magenta}\fbox{\includegraphics[width=0.31\textwidth]{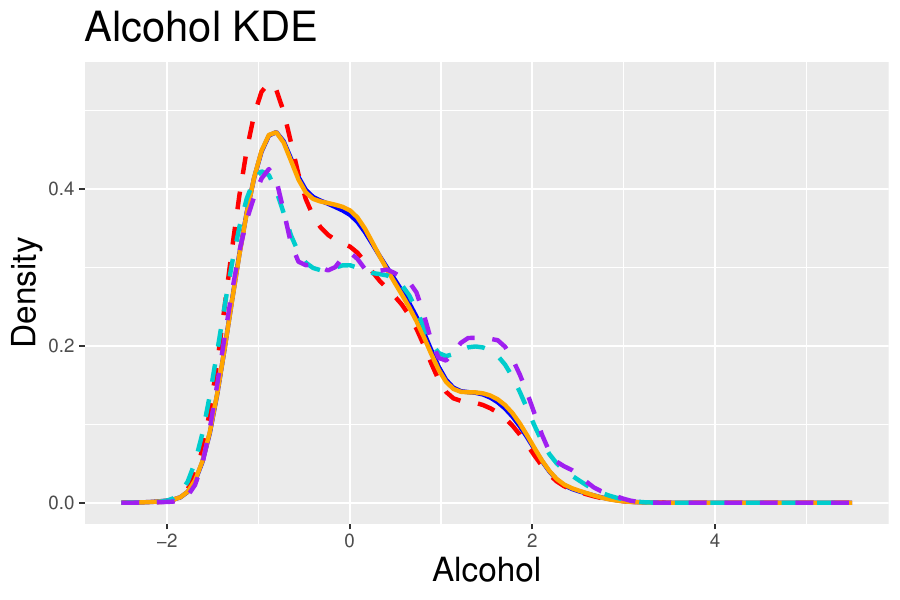}}}
	\vskip 5pt
	{\color{blue}\fbox{\includegraphics[width=0.31\textwidth]{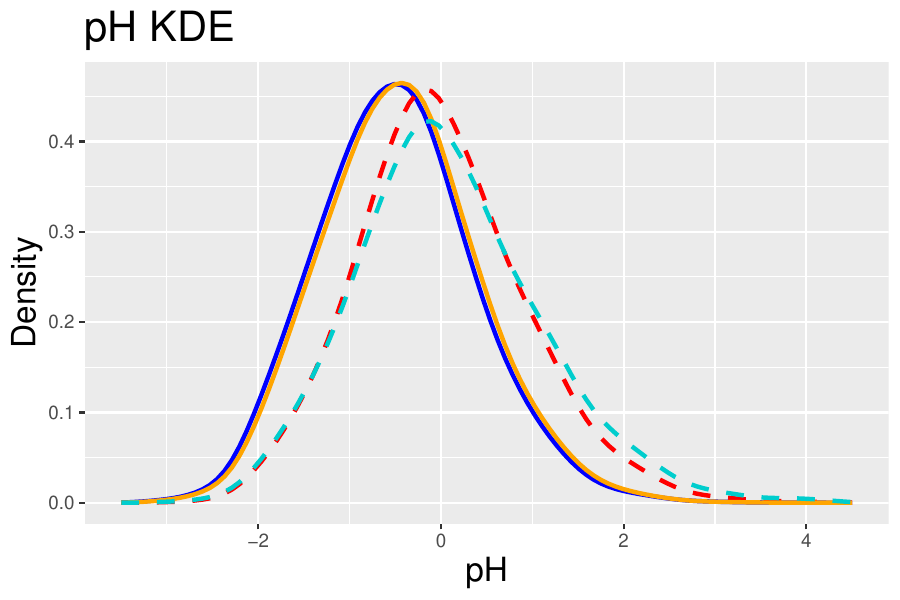}}}
	{\color{magenta}\fbox{\includegraphics[width=0.31\textwidth]{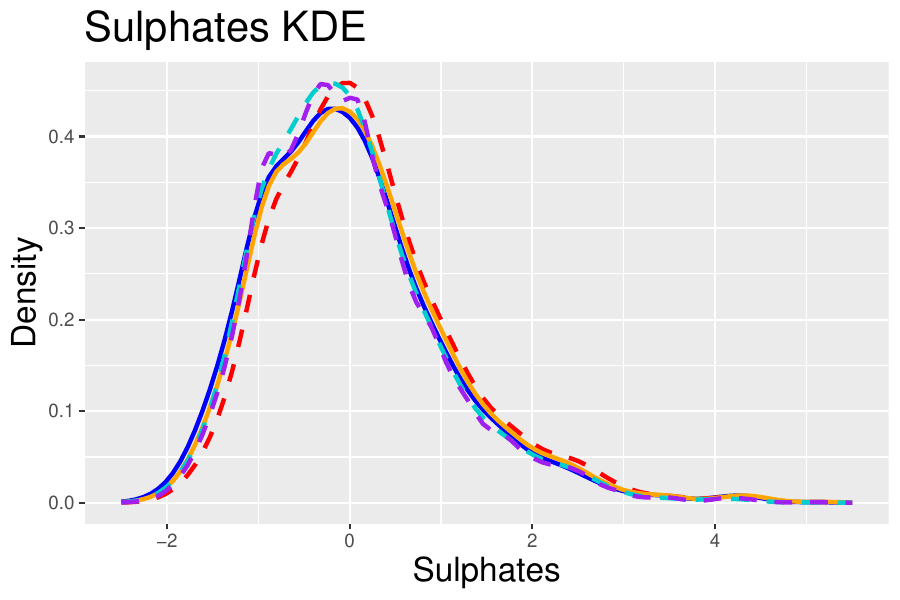}}}
	{\color{blue}\fbox{\includegraphics[width=0.31\textwidth]{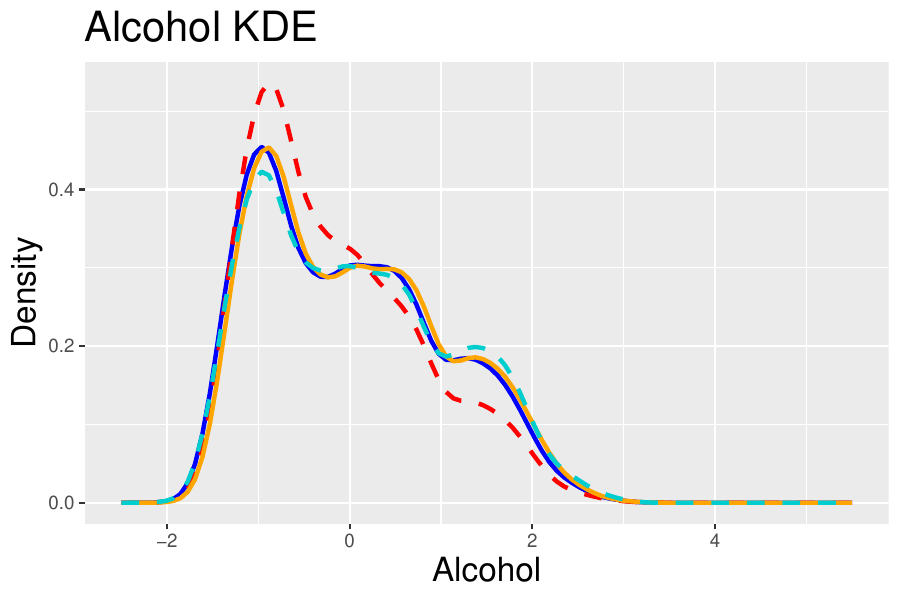}}}
	\vskip 5pt
	{\color{magenta}\fbox{\includegraphics[width=0.31\textwidth]{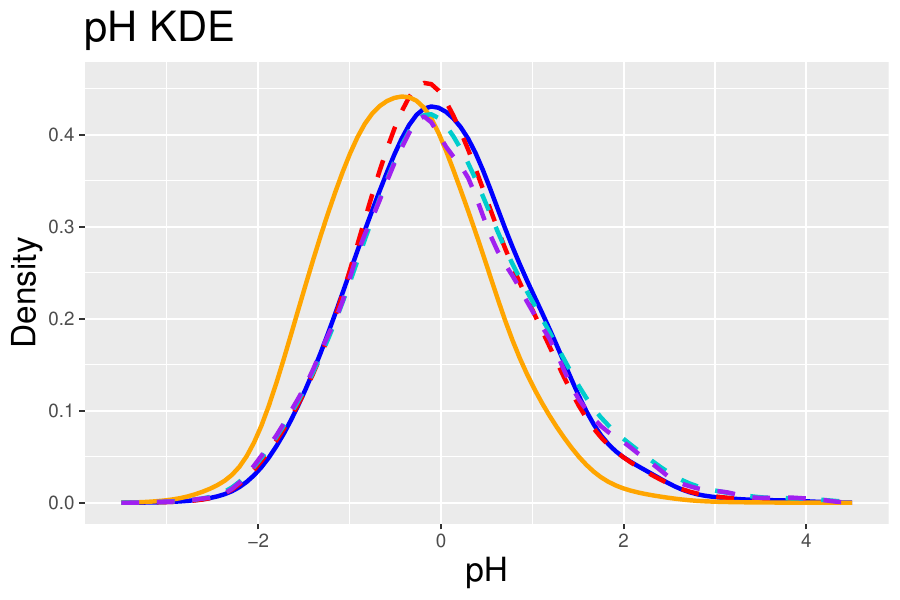}}}
	{\color{magenta}\fbox{\includegraphics[width=0.31\textwidth]{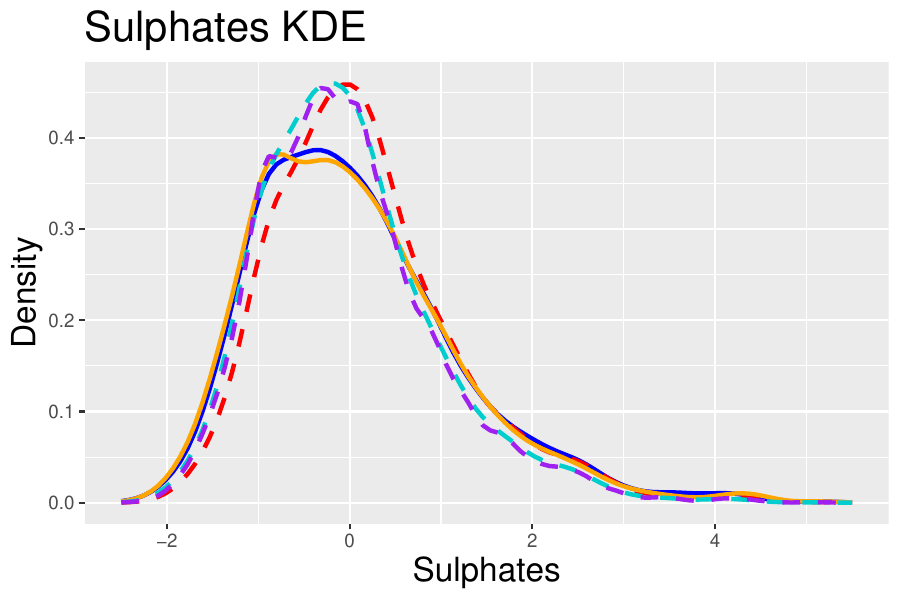}}}
	{\color{blue}\fbox{\includegraphics[width=0.31\textwidth]{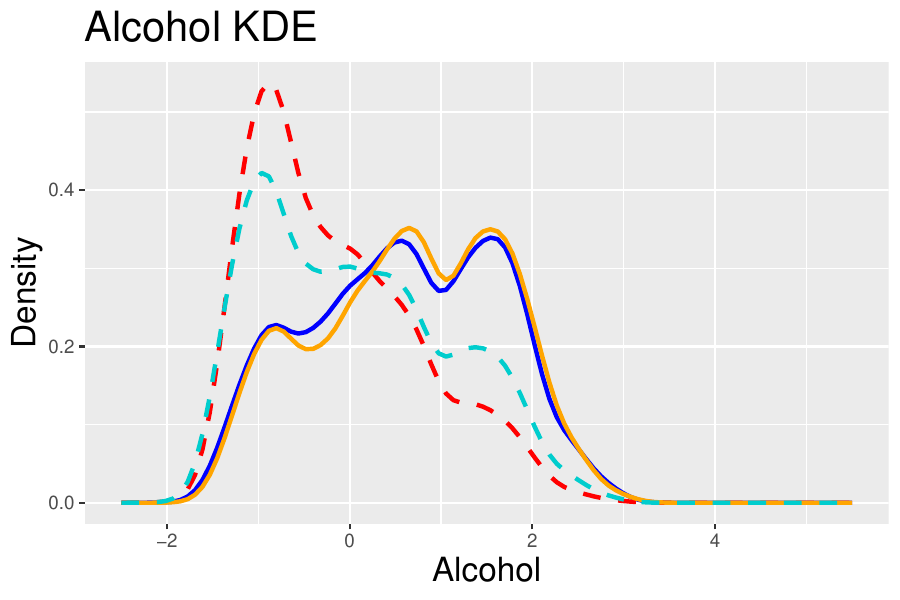}}}
	\vskip 5pt
	\includegraphics[width=0.25\textwidth]{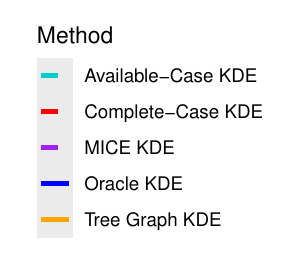}
	\hspace{0.5in}
	\includegraphics[width=0.15\textwidth]{figures/kde_treegraph.png}
	\hspace{0.5in}
	\includegraphics[width=0.3\textwidth]{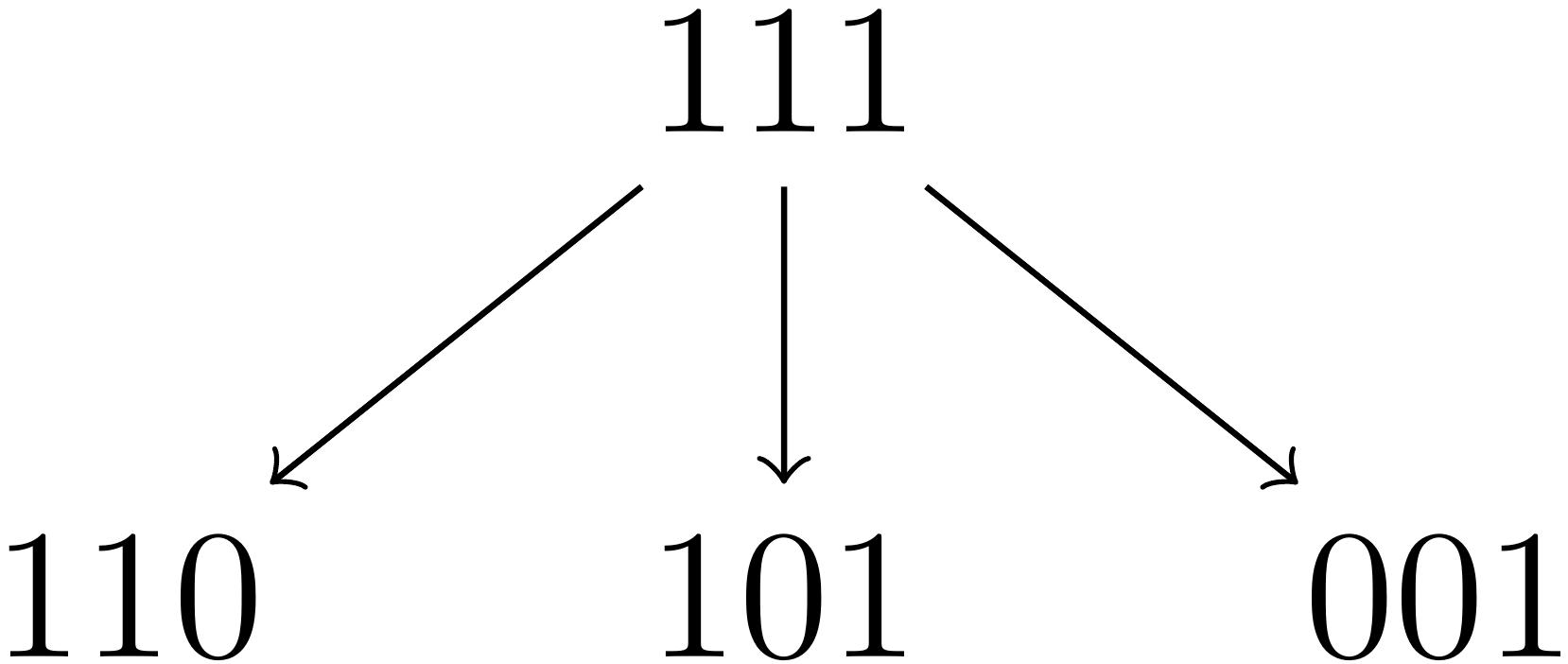}
	\caption{These are the fitted KDEs under a simulation where the data was generated MAR. Each column corresponds to dimensions 1, 2, and 3, respectively.  Rows 1, 2, and 3 depicts the fitted KDEs for the pattern $110$, $101$, and $001$, respectively.  The last row includes the plot of the two tree graphs that could have been learned from the data-driven algorithm.}
	\label{fig:realdata-wine-MAR}
\end{figure}

\section{Simulations}

\label{sect:simulations}

In our simulation study, we consider a setting with \( d = 3 \) bounded discrete variables. The data-generating process assumes that \( p(x | R = 1_d) \) follows a mixture of binomial product distributions, while the selection mechanism is modeled such that the selection odds \( P(R = r | x) / P(R = 1_d | x) \) follow a logistic regression model for all missing data patterns \( r \).  Under correct specification of the true tree graph, we assess consistency and coverage using an empirical bootstrap procedure. 

For a given tree graph, the simulation procedure consists of the following steps:  

\begin{itemize}
	\item \textbf{Data Generation:} We specify \( p(x \mid R = 1_d) \) as a mixture of binomial products, set the probabilities \( P(R = r) \) for each missing data pattern, and specify the logistic regression coefficients \( \gamma_r \) and intercepts \( \gamma_{0,r} \). This setup ensures that each conditional distribution \( p(x \mid R = r) \) remains a mixture of binomial products.  
	
	
	We generate the data according to the following parameters:
	\begin{center}
		\begin{tabular}{c|cccccc}
			$R$        & 111 & 110 & 101 & 100 & 010 & 001 \\ \hline
			$P(R)$   & 0.3 & 0.2 & 0.1 & 0.15 & 0.15 & 0.1
		\end{tabular}
	\end{center}
	\begin{align*}
		p(x \mid R = \mathbf{1}_d; \bm{w}_{cc}, \bm{\theta}_{cc}) &\quad\text{with} \quad
		\bm{w}_{cc} = \begin{bmatrix} 0.3 & 0.5 & 0.2 \end{bmatrix}, \quad
		\bm{\theta}_{cc} = 
		\begin{bmatrix}
			0.70 & 0.75 & 0.70 \\
			0.50 & 0.50 & 0.40 \\
			0.20 & 0.30 & 0.10
		\end{bmatrix}
	\end{align*}
	
	The missingness mechanism is modeled using logistic regressions:
	\begin{align*}
		&\frac{P(R = 110 \mid x)}{P(R = 111 \mid x)} \propto \exp(0.1 x_1 + 0.1 x_2), \qquad \frac{P(R = 101 \mid x)}{P(R = 111 \mid x)} \propto \exp(0.3 x_1 + 0.1 x_3), \\
		&\frac{P(R = 100 \mid x)}{P(R = 101 \mid x)} \propto \exp(-0.1 x_1), \qquad
		\frac{P(R = 010 \mid x)}{P(R = 110 \mid x)} \propto \exp(0.1 x_2), \\
		&\frac{P(R = 001 \mid x)}{P(R = 101 \mid x)} \propto \exp(0.1 x_3).
	\end{align*}

	\item \textbf{Consistency Assessment:} For each given sample size, we generate $U=200$ random data sets based on the data generating process.  We estimate model parameters using an expectation-maximization (EM) algorithm for \( p(x \mid R = 1_d) \) and standard logistic regression for the selection mechanism.  Then, we report the estimated MSE over all $U=200$ point estimates.
	\item \textbf{Coverage Evaluation:} We report the estimated coverage for our bootstrap approach over all $U=200$ random data sets using Algorithm \ref{alg:empbootstrap} and $B=500$ bootstrap samples.  Confidence intervals are constructed by using estimating the standard errors with the bootstrap estimates and then, adding and subtracting them from the point estimate.
\end{itemize}

\begin{table}[!b]
	\label{table:sim}
	\centering
	\vskip 10pt
	\scalebox{0.9}{
		\begin{tabular}{lccccc|}
			\multicolumn{6}{c}{$\textbf{Mean Squared Error}~(\times 100)$} \\
			\cline{2-6}
			\multicolumn{1}{c|}{} & $n=500$ & $n=1000$ & $n=2000$ & $n=5000$ & $n=10000$ \\
			\hline
			\multicolumn{1}{|l|}{$\btheta_{cc}$} & 0.036 & 0.020 & 0.0092 & 0.0036 & 0.0019 \\
			\multicolumn{1}{|l|}{$\bm{w}_{cc}$} & 0.17 & 0.09 & 0.041 & 0.016 & 0.0073\\
			\multicolumn{1}{|l|}{$\bbeta$} & 0.29 & 0.14 & 0.07 & 0.026 & 0.013 \\
		\end{tabular}
	}
	\scalebox{0.9}{
		\begin{tabular}{lccccc|}
			\multicolumn{6}{c}{$\textbf{Estimated Coverage}$} \\
			\cline{2-6}
			\multicolumn{1}{c|}{} & $n=500$ & $n=1000$ & $n=2000$ & $n=5000$ & $n=10000$ \\
			\hline
			\multicolumn{1}{|l|}{$\btheta_{cc}$} & 0.94 & 0.94 & 0.94 & 0.95 & 0.94 \\
			\multicolumn{1}{|l|}{$\bm{w}_{cc}$} & 0.93 & 0.94 & 0.93 & 0.95 & 0.95 \\
			\multicolumn{1}{|l|}{$\bbeta$} & 0.98 & 0.98 & 0.98 & 0.97 & 0.98 \\
		\end{tabular}
		
	}
	\caption{These results are the estimated MSEs and estimated coverage after fitting a mixture model and running Algorithm \ref{alg:empbootstrap} for $U=200$ replicates.}
\end{table}

\begin{table}[!b]
	\label{table:datadriven}
	\centering
	\vskip 10pt
	\scalebox{0.9}{
		\begin{tabular}{lcccc|}
			\multicolumn{5}{c}{\textbf{Parent-Based Modeling}} \\
			\cline{2-5}
			\multicolumn{1}{c|}{} & $n=500$ & $n=1000$ & $n=2000$ & $n=5000$ \\
			\hline
			\multicolumn{1}{|l|}{Tree Graph 1} & 127 & 140 & 160 & 185  \\
			\multicolumn{1}{|l|}{Tree Graph 2} & 73 & 60 & 40 & 15
		\end{tabular}
	}
	\scalebox{0.9}{
		\begin{tabular}{lcccc|}
			\multicolumn{5}{c}{\textbf{Child-Based Modeling}} \\
			\cline{2-5}
			\multicolumn{1}{c|}{} & $n=500$ & $n=1000$ & $n=2000$ & $n=5000$ \\
			\hline
			\multicolumn{1}{|l|}{Tree Graph 1} & 183 & 194 & 198 & 200  \\
			\multicolumn{1}{|l|}{Tree Graph 2} & 15 & 6 & 2 & 0 \\
			\multicolumn{1}{|l|}{Tree Graph 3} & 2 & 0 & 0 & 0 \\
		\end{tabular}
		
	}
	\caption{These results are of the data-driven methods over $U=200$ random data sets.}
\end{table}

\begin{figure}[!h]
	\label{fig:sim-treegraphs}
	\centering
	\includegraphics[width=0.95\textwidth]{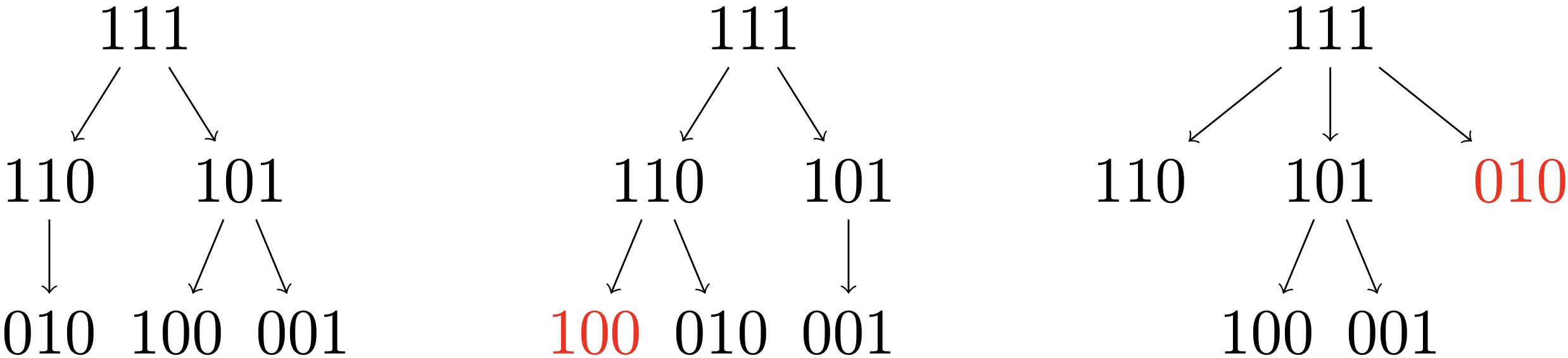}
	\caption{The left panel contains Tree Graph 1 (the tree graph used to generate the missing data).  The middle and right panels contain Tree Graphs 2 and 3, respectively, and they correspond to graphs incorrectly learned by the parent-based and child-based modeling approaches.  The nodes in the middle and right panels that are highlighted red indicate which ones were assigned to an incorrect parent.}
\end{figure}

In Table \ref{table:sim}, we generally see that the estimated MSE decreases at a linear rate, indicating that we have consistent performance.  Since our estimator is the MLE, it is also asymptotically efficient.  We also see that the coverage is roughly nominal and that the approach outlined in Algorithm \ref{alg:empbootstrap} works well.  For each data set, we also learn a tree graph using the parent-based and child-based modeling approaches, reporting the learned tree graphs in Figure \ref{fig:sim-treegraphs} and their frequencies in Table \ref{table:datadriven}.  From here, we can see that both data-driven methods generally select the correct tree graph with high frequency in high enough sample size.


\section{Power law odds}	\label{sec::power}

We provide an additional family of examples through a power law family.  When the odds can be modeled using a power law family, we can expand the family of distributions that we have conjugate odds for.  Modeling the odds using a power function is a nontraditional method, but it is similar to logistic regression in that it can be interpreted as a linear classifier with a more gradual boundary.

\begin{proposition}[Power function family, Pareto distribution]
	\label{prop:power-pareto}
	Suppose that $p(x|A=a)$ is a Pareto distribution
	$$p(x|A=a; \alpha, \beta) = \begin{cases} \dfrac{\alpha \beta^\alpha}{x^{\alpha+1}} & x \geq \beta \\ 0 & \text{o.w.} \end{cases}.$$
	Then, the associated odds model
	$$O_{a'}(x;\bm{\gamma}) := \frac{P(A=a'|x)}{P(A=a|x)} = \gamma_0x^\gamma, \quad \gamma_0 := \frac{P(A=a')}{P(A=a)} \cdot \frac{\alpha'}{\alpha} \cdot \beta^{\alpha'-\alpha}$$
	holds if and only if
	$$p(x|A=a'; \alpha', \beta) = \begin{cases} \dfrac{\alpha' {\beta}^{\alpha'}}{x^{\alpha'+1}} & x \geq \beta \\ 0 & \text{o.w.} \end{cases},$$
	where $\alpha' = \alpha - \gamma  \in \mathbb{R}^+$.
\end{proposition}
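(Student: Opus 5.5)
The argument runs through the tilting identity \eqref{eq:tilted} (Bayes' rule applied to $A$), which gives
$$p(x|A=a') = p(x|A=a)\cdot \frac{P(A=a'|x)}{P(A=a|x)}\cdot \frac{P(A=a)}{P(A=a')}.$$
Both directions of the equivalence are read off from this identity. The support $\{x\ge\beta\}$ is common to both densities, so it suffices to compare them on $x\ge\beta$. Off this set both sides vanish, and the odds need only be defined where $p(x|A=a)>0$.

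\emph{($\Rightarrow$)} Assume $O_{a'}(x)=\gamma_0 x^\gamma$ with $\gamma_0$ as stated. Substitute into the identity. For $x\ge\beta$,
$$p(x|A=a') = \frac{\alpha\beta^\alpha}{x^{\alpha+1}}\cdot \gamma_0 x^{\gamma}\cdot\frac{P(A=a)}{P(A=a')} = \frac{\alpha\beta^{\alpha}}{x^{\alpha-\gamma+1}}\cdot\frac{\alpha'}{\alpha}\beta^{\alpha'-\alpha} = \frac{\alpha'\beta^{\alpha'}}{x^{\alpha'+1}},$$
using $\alpha'=\alpha-\gamma$. For $x<\beta$ the expression is $0$. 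The result is therefore the Pareto$(\alpha',\beta)$ density, and it is a valid density exactly because $\alpha'>0$.

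A more intrinsic way to see the same thing is to note that $x^{-(\alpha'+1)}$ is integrable on $[\beta,\infty)$ iff $\alpha'>0$. Normalization then forces the constant to equal $\alpha'\beta^{\alpha'}$, which in turn pins down $\gamma_0$. This normalization step is why the condition $\alpha'\in\mathbb{R}^+$ appears in the statement, and it is the one place where some care is needed.

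\emph{($\Leftarrow$)} Assume $p(x|A=a')$ is Pareto$(\alpha',\beta)$ with $\alpha'=\alpha-\gamma>0$. Invert the identity to get
$$\frac{P(A=a'|x)}{P(A=a|x)} = \frac{P(A=a')}{P(A=a)}\cdot\frac{p(x|A=a')}{p(x|A=a)} = \frac{P(A=a')}{P(A=a)}\cdot\frac{\alpha'\beta^{\alpha'}}{\alpha\beta^{\alpha}}\,x^{\alpha-\alpha'}$$
on $x\ge\beta$. Since $\alpha-\alpha'=\gamma$, this is exactly $\gamma_0 x^{\gamma}$ with the stated $\gamma_0$.

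There is no real obstacle here beyond bookkeeping. The two points to check are that the supports coincide, which requires the same $\beta$ on both sides, and that the exponent shift keeps the density integrable, i.e.\ $\alpha'>0$.
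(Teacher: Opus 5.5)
Your proposal is correct and takes essentially the paper's route: both substitute the Pareto density and the power-law odds into the tilting identity $p(x|A=a')\propto p(x|A=a)\,P(A=a'|x)/P(A=a|x)$ and read off the Pareto$(\alpha-\gamma,\beta)$ form. The differences are cosmetic. You carry the exact Bayes constant $P(A=a)/P(A=a')$, whereas the paper computes the normalizing integral over $[\beta,\infty)$ and then solves for $\gamma_0$. You also spell out the converse by inverting the identity, which the paper leaves implicit.
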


\begin{remark}
	We can also consider an odds model of the form
	$$\frac{P(A=0|x)}{P(A=1|x)} = \beta_0 + \beta_1 x^{\beta_2},$$
	which can be viewed as a weaker form of logistic regression.  They share similar properties in that the odds are always nonnegative for $x>0$.
	
	If the odds model is generalized to a sum of $K$ terms, then the resulting distribution $p(x|A=a')$ will be a mixture of Pareto distributions with the same shape parameter $\beta$.  Since the original distribution $p(x|A=a)$ has support in the positive reals, fitting the odds model with a polynomial can be done, provided the polynomial is strictly nonnegative.  We provide further examples in Appendix \ref{appendix:examples}.
\end{remark}

%



\section{Random sampling of tree graphs and connections to model averaging}	\label{sec::random}

A tree graph can also be generated randomly from the set $\mathcal{T}$.  First, we present Algorithm \ref{alg:uniformsample}, where we show how to sample a tree graph uniformly from $\mathcal{T}$.  We can randomly sample from the distribution of parents for each pattern $r\neq 1_d$.  By considering the set of possible parents for each pattern $r$ and choosing one uniformly at random, one can form a tree graph.  Every tree graph in $\mathcal{T}$ will be equally likely to be selected.

If one performs this sampling and constructs the corresponding point estimator many times, the set of point estimators may be averaged to form a final estimate.  We can view this as a form of model averaging.

\begin{algorithm}
	{\bf Require:} A set of missing patterns $\mathcal{R}$
	\begin{algorithmic}[1]
		\For{$r\in \mathcal{R}$}
		\If{$r\neq 1_d$}
		\State{Define $\text{PPA}_r := \{s : s > r, s\in\mathcal{R}\}$ as the set of potential parents of pattern $r$.}
		\State{Uniformly sample $s_r\sim \text{PPA}_r$.}
		\State{Form the parent set of pattern $r$ for graph ${T}$ as follows $\text{PA}_{T}(r) = \{s_r\}$.}
		\EndIf
		\EndFor
		\State{\textbf{return} Tree graph ${T}$}
	\end{algorithmic}
	\caption{Sampling a tree graph uniformly at random} \label{alg:uniformsample}
\end{algorithm}

We can extend this algorithm to randomly sample from an arbitrary distribution by combining Algorithm \ref{alg:uniformsample} with a rejection sampling scheme.  We present Algorithm \ref{alg:randomsample}, which serves as a minor modification to Algorithm \ref{alg:uniformsample} by introducing an acceptance criterion but generalizes the sampling to arbitrary distributions over $\mathcal{T}$.

\begin{algorithm}
	{\bf Require:} A set of missing patterns $\mathcal{R}$ and a distribution $p(t)$ over $\mathcal{T}$
	\begin{algorithmic}[1]
		\State{AcceptFlag = 0}
		\While{AcceptFlag = 0}
		\State{Sample $T$ uniformly from the space of all tree graphs using Algorithm \ref{alg:uniformsample}.}
		\State{Accept $T$ with probability $p(T)/\max_t p(t)$.}
		\If{Accepted}
		\State{AcceptFlag = 1}
		\EndIf
		\EndWhile
		\State{\textbf{return} Tree graph $T$}
	\end{algorithmic}
	\caption{Sampling a tree graph from an arbitrary PMF $p(t)$} \label{alg:randomsample}
\end{algorithm}

\begin{remark}[Bayesian and frequentist perspective]
	Algorithms \ref{alg:uniformsample} and \ref{alg:randomsample} allow a data analyst to place a prior on the set of tree graphs and combine them into a single point estimate.  However, we emphasize that while this somewhat mimics a Bayesian approach, this is not a Bayesian method because a prior is not place on the parameters and the final result is not a distribution.  The output remains a point estimate, thereby exhibiting frequentist properties.
\end{remark}


\section{Inference}	\label{sec::inference}


In this section, we describe a procedure for constructing confidence intervals.  Recall that in the tree graph and conjugate odds framework, we fit two types of models: a complete case model $p(x|1_d)$ and conjugate odds models $O_r(x_r) := P(\bR=r|x_r)/P(\bR\in\text{PA}_T(r)|x_r)$ for every $r\neq 1_d$.  This implicitly models the full-data distribution $p(x,r)$, which thereby implies specific forms for the distributions $p(x|r)$.  While constructing confidence intervals for the parameters of $p(x|1_d)$ and $O_r(x)$ is fairly straightforward, it is more challenging to construct confidence intervals for $p(x|\bR=r)$ for an arbitrary $r$.  This is because such intervals require accounting for the full joint sampling distribution of the parameters, incorporating joint uncertainty across both model components.  In the following subsection, we also describe an empirical bootstrap approaches to quantify uncertainty.




\begin{definition}[Primary model]
	In the tree graph and conjugate odds setting, we have two types of models: a complete case model $p(x|1_d)$ and an odds model $O_r(x_r)$.  We use the term primary model for pattern $r$ to refer to the model that corresponds to the pattern $r$.  If $r=1_d$, then the primary model is the complete case model $p(x|1_d)$.  Otherwise, it is the odds model $O_r(x_r)$.
\end{definition}

Each of the primary models described above is fit using observed data from at most two patterns.  This suggests that certain MLE parameters for the complete case model $p(x|1_d)$ and the odds models $O_r(x_r)$ may be independent.  We formalize that in Proposition \ref{prop:MLEindep}.

\begin{proposition}[Independence of certain MLE parameters]
	\label{prop:MLEindep}
	Let $\beta_r$ and $\beta_s$ be the parameters associated with the primary models for distinct patterns $r$ and $s$.  Suppose neither of these conditions hold:
	\begin{enumerate}
		\item One pattern is the parent of the other.
		\item The two patterns are siblings.
	\end{enumerate}
	Then, the MLE estimators $\hat{\beta}_r$ and $\hat{\beta}_s$ are independent.
\end{proposition}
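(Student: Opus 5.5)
The plan is to identify, for each primary model, exactly which observations its MLE depends on. Then we show that the conditions excluded in the statement are precisely those under which two primary models share data. The result then follows from the i.i.d. sampling assumption combined with a conditioning argument on the pattern labels.

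\textbf{Data used by each primary model.} The complete-case model $p(x|1_d)$ is fit using only the units with $\bR_i=1_d$. For $r\neq 1_d$, the single-parent property of Proposition \ref{prop:equivprop} gives $\text{PA}_T(r)=\{s\}$ for a unique $s$. The odds model $O_r(x_r)=P(\bR=r|x_r)/P(\bR=s|x_r)$ is then a binary logistic regression fit on the units with $\bR_i\in\{r,s\}$, using the covariates $x_r$. Both $r$ and $s$ observe these covariates, since $s>r$. Hence $\hat\beta_r$ is a measurable function of the set $D_r:=\{X_{i,r}:\bR_i\in\{r\}\cup\text{PA}_T(r)\}$ together with the labels, with the convention $\text{PA}_T(1_d)=\emptyset$. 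Write $S_r:=\{r\}\cup\text{PA}_T(r)$ for the patterns used by the primary model of $r$.

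\textbf{Combinatorial step.} We show that $S_r\cap S_s=\emptyset$ unless one of the two stated conditions holds. Since $r\neq s$, the only ways the two sets can meet are the following:
\begin{enumerate}
\item $r\in\text{PA}_T(s)$ or $s\in\text{PA}_T(r)$, which is a parent relation;
\item $\text{PA}_T(r)\cap\text{PA}_T(s)\neq\emptyset$, which means $r$ and $s$ are siblings.
\end{enumerate}
If neither holds, $S_r$ and $S_s$ are disjoint. This covers the case $r=1_d$ as well: the set $S_{1_d}=\{1_d\}$ meets $S_s$ only if $1_d$ is the parent of $s$.

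\textbf{Probabilistic step, and the main obstacle.} The estimators are computed from disjoint subsets of the units. However, independence is not automatic, because the random pattern labels determine which units fall into each subset, and the subset sizes are jointly multinomial. I would first argue conditionally on the labels $(\bR_1,\dots,\bR_n)$. Given the labels, the $X_i$ are independent, with $X_i\sim p(x|\bR_i)$. The estimator $\hat\beta_r$ depends on $\{X_i:\bR_i\in S_r\}$ and $\hat\beta_s$ on $\{X_i:\bR_i\in S_s\}$. These are disjoint collections of independent variables, so $\hat\beta_r$ and $\hat\beta_s$ are conditionally independent.

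To remove the conditioning, one needs the conditional law of $\hat\beta_r$ to depend on the labels only through quantities independent of those governing $\hat\beta_s$. This fails exactly, because the multinomial counts $(n_{S_r},n_{S_s})$ are negatively dependent. I would handle this in one of two ways. The first is to state the result conditionally on the pattern counts, which is the natural setting given that the primary models are fit pattern-by-pattern. The second is a Poissonization argument: with a Poisson sample size $N$, the counts in disjoint pattern groups are independent, which yields exact unconditional independence. In the fixed-$n$ case, this gives asymptotic independence, since the covariance induced by the multinomial counts is $O(1/n)$ relative to the leading term. I expect this step to be the only delicate point, and I would present the conditional version as the precise statement.
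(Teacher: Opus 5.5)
Your combinatorial step is the paper's argument. The paper also writes $\mathcal{X}_r$ for the data in pattern $r$ and notes that the primary model for $r$ is fit on $\mathcal{X}_r\cup\mathcal{X}_{r'}$, where $r'$ is the unique parent. It then checks that ruling out the parent and sibling relations forces $s\neq r'$, $r\neq s'$ and $r'\neq s'$, so the two data pools are disjoint. Your convention $\text{PA}_T(1_d)=\emptyset$ handles the complete-case model more cleanly than the paper does.

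The real difference is the final step. The paper simply asserts that models fit on disjoint data are independent, calling it ``a form of sample splitting.'' You notice that the pools are random and their sizes are multinomially dependent. So at fixed $n$, what actually follows is independence conditional on the labels $(\bR_1,\dots,\bR_n)$, or exact independence only after Poissonization. That qualification is correct. The spread of $\hat\beta_{1_d}$ depends on $n_{1_d}$, while that of $\hat\beta_s$ depends on the number of units in $\{s\}\cup\text{PA}_T(s)$, and these counts are negatively dependent. Your conditional version is therefore the precise form of the proposition, and the paper's claim should be read as conditional on the split or as asymptotic.

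The asymptotic version is all that the block-structure corollary uses. For it, replace your informal ``$O(1/n)$'' remark with an influence-function argument. Each $\sqrt{n}(\hat\beta_r-\beta_r)$ is asymptotically linear with a mean-zero influence function $\phi_r(X,\bR)$ that vanishes unless $\bR\in S_r$. Hence $\phi_r\phi_s^\top\equiv 0$, the joint CLT gives a block-diagonal Gaussian limit, and count fluctuations enter only at second order.

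Finally, you use $s$ both for the parent of $r$ and for the second pattern in the statement; rename one of them.
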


Based on the results in Proposition \ref{prop:MLEindep}, we can specify exactly the form of the asymptotic covariance matrix through an undirected graph.  We describe the idea in Corollary \ref{cor:blockstructure} and Example  \ref{example:asymptoticcovariance} describes how this can be applied.

\begin{corollary}[Block structure of the asymptotic covariance matrix]
	\label{cor:blockstructure}
	In a tree graph $T$, convert each edge to an undirected edge, and add an undirected edge between every pair of siblings.  (This is similar to the idea of moralizing a directed graph except we connect the siblings rather than the parents.)  Call the resulting undirected graph $U$.  Then, the maximal cliques of $U$ exactly determine the block structure of the asymptotic covariance matrix.
\end{corollary}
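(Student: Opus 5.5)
Corollary~\ref{cor:blockstructure} will follow directly from Proposition~\ref{prop:MLEindep}. First I make the stacked estimator precise: write $\hat\beta = (\hat\beta_r)_{r\in\mathcal{R}}$ for all primary-model parameters, with $\hat\beta_{1_d}$ the complete case model and $\hat\beta_r$ the odds model $O_r$ for $r\neq 1_d$. Under standard MLE regularity conditions, $\sqrt{n}(\hat\beta-\beta)$ is asymptotically normal with covariance $\Sigma$. I view $\Sigma$ as a block matrix indexed by patterns, with blocks $\Sigma_{rs}=\lim_n n\,\mathrm{Cov}(\hat\beta_r,\hat\beta_s)$. The claim is that $\Sigma_{rs}$ can be nonzero only when $r$ and $s$ are adjacent in $U$ (or $r=s$). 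In that case the sparsity pattern of $\Sigma$ is the adjacency structure of $U$, and every group of mutually dependent blocks lies within a maximal clique of $U$.

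The key step is a case analysis. If $r$ and $s$ are not adjacent in $U$, then by construction neither is the parent of the other and they are not siblings, since $U$ contains exactly the tree edges and the sibling edges. Proposition~\ref{prop:MLEindep} then gives independence of $\hat\beta_r$ and $\hat\beta_s$, and so $\Sigma_{rs}=0$. To make this transparent I would reproduce the mechanism. $\hat\beta_r$ is a function only of observations with $\bR\in\{r\}\cup\text{PA}_T(r)$. For $r=1_d$ it uses only $\bR=1_d$, and the odds fit $O_r$ is a logistic regression of $r$ versus its parent on $x_r$. The corresponding score functions are therefore supported on $\{\bR=r\}\cup\{\bR=\text{PA}_T(r)\}$. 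Two such pattern sets $\{r,\text{PA}_T(r)\}$ and $\{s,\text{PA}_T(s)\}$ intersect iff $r=\text{PA}_T(s)$, $s=\text{PA}_T(r)$, or $\text{PA}_T(r)=\text{PA}_T(s)$. The products of the influence functions, $\psi_r\psi_s^\top$, vanish pointwise whenever the supports are disjoint, because $\bR$ takes a single value per observation. This gives zero asymptotic cross-covariance even without exact finite-sample independence.

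Conversely, I note that adjacent pairs generically have nonzero covariance through the shared pattern's data, so $U$ is the exact dependency graph. The block structure is read off as follows: ordering patterns so that the maximal cliques of $U$ are contiguous where possible, $\Sigma$ is supported on the union of the blocks $\{(r,s): r,s\in C\}$ over maximal cliques $C$. Every nonzero off-diagonal block corresponds to an edge of $U$, and every edge lies in some maximal clique.

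The main obstacle is interpretive. ``Exactly determine the block structure'' must be read as ``the support of $\Sigma$ is the union of clique blocks,'' not as block-diagonality. Cliques can overlap: a parent with several children forms a sibling clique together with itself, and it is also linked to its own parent. I would state this reading explicitly. I would also make sure that the influence-function argument above, rather than finite-sample independence, is what carries the proof, since the MLEs are random through the multinomial counts of patterns.
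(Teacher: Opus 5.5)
Your proposal is correct and takes the same route as the paper. Two patterns that are not adjacent in $U$ have primary models fit on disjoint sets of patterns, so Proposition~\ref{prop:MLEindep} makes the corresponding covariance block vanish, and the remaining, generically nonzero, blocks are exactly those inside the maximal cliques of $U$.

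Your influence-function argument for the zero blocks is a real sharpening of the paper's reliance on exact independence, which fails in finite samples because the pattern counts are multinomial; $\psi_r\psi_s^\top\equiv 0$ on disjoint supports gives exactly the asymptotic statement needed. Your reading of ``block structure'' as the support of the covariance matrix, with overlapping cliques rather than a block-diagonal form, is the right one. The paper's own proof says ``path'' where it must mean ``edge'': $U$ contains the tree and is therefore connected, so a path-based criterion would make every block nonzero.
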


\tikzset{c/.style={fill=gray}}
\tikzset{mystyle/.style={matrix of nodes,
		nodes in empty cells,
		row sep=-\pgflinewidth,
		column sep=-\pgflinewidth,
		nodes={draw,minimum width=0.85cm,minimum height=0.5cm,anchor=center}}}

\begin{example}[L-NCMV for 3 variables and its asymptotic covariance structure]
	\label{example:asymptoticcovariance}
	In this example, we consider the L-NCMV tree graph for 3 variables.  The results are recorded in Figure \ref{fig:exampleLNCMV}.  We obtain 4 maximal cliques: $\{111, 110, 101, 011\}$, $\{110, 100, 010\}$, $\{101, 001\}$, and $\{100, 000\}$.  
	
	\begin{figure}
		\label{fig:exampleLNCMV}
		\includegraphics[width=0.95\textwidth]{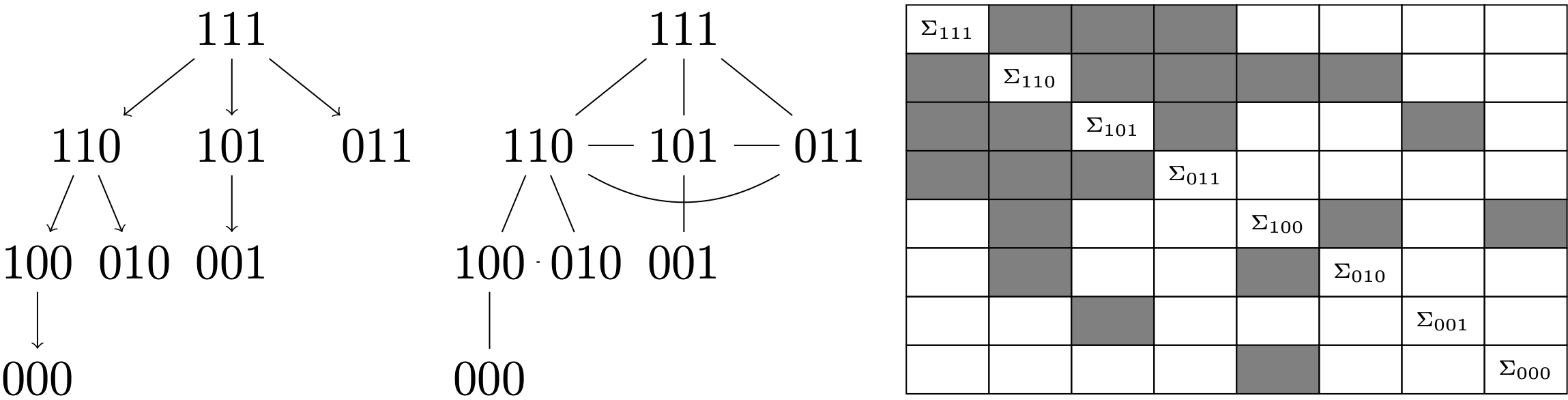}
		\caption{The left panel has the original tree graph, and the middle panel shows the resulting undirected graph after connecting the siblings.  Let $\Sigma_r$ denote the asymptotic covariance of the parameters associated with the primary model for pattern $r$.  The block structure of the asymptotic covariance matrix is depicted in the right panel, where the white regions refer to blocks of $0$s.  All other regions are not guaranteed to be $0$.}
		\label{fig:asymptoticMLE}
	\end{figure}
	
\end{example}

An interesting observation follows from Corollary \ref{cor:blockstructure}.  Since under the CCMV assumption, all models are fit using the complete case data, all MLEs will be correlated.  In contrast, under a GNCMV assumption, all models are fit with minimal data shared.  This leads to the idea of densest and sparsest asymptotic covariance matrices in Proposition \ref{prop:covariancematrix}.

\begin{proposition}
	\label{prop:covariancematrix}
	CCMV leads to densest asymptotic covariance matrix.  Any tree graph assumption belonging to $\mathcal{T}_{\text{GNCMV}}$ leads to the sparsest asymptotic covariance matrix.
\end{proposition}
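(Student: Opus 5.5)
The plan is to reduce both claims to a counting problem on the undirected graph $U$ of Corollary \ref{cor:blockstructure}. Two primary models for patterns $r\neq s$ can have a nonzero asymptotic cross-covariance block only if $r$ and $s$ are adjacent in $U$, that is, parent--child or siblings. Otherwise Proposition \ref{prop:MLEindep} forces the block to vanish. So I would measure the density of the asymptotic covariance matrix by the edge set of $U$: the $2^d-1$ tree edges plus the sibling pairs. By Proposition \ref{prop:equivprop} every tree graph has exactly $2^d-1$ edges, so only the sibling pairs vary. Writing $c_v$ for the number of children of node $v$, the number of sibling pairs is $\sum_v \binom{c_v}{2}$, subject to $\sum_v c_v = 2^d-1$. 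The problem is therefore to maximize or minimize $\Phi(T):=\sum_v\binom{c_v}{2}$ over $\mathcal{T}_d$.

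\textbf{Densest (CCMV).} Under CCMV every $r\neq 1_d$ has parent $1_d$. Hence $c_{1_d}=2^d-1$, every pair of non-source patterns are siblings, and every non-source pattern is adjacent to $1_d$. So $U$ is the complete graph on $\mathcal{R}$, and no block is forced to zero. I would add a one-line remark that this is the maximum, since no graph on $2^d$ vertices has more edges. Alternatively, convexity of $c\mapsto\binom{c}{2}$ shows that $\Phi$ is maximized by concentrating all children at a single node. The regularity constraint allows this only at $1_d$, because it is the only pattern above every other pattern.

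\textbf{Sparsest (GNCMV).} Here I would use the layer structure. Let $L_k$ be the set of patterns with exactly $k$ zeros, so $|L_k|=\binom{d}{k}$. In a GNCMV tree every node of $L_k$ has its parent in $L_{k-1}$. The plan is an exchange argument in two steps.

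\begin{enumerate}
\item Start from an arbitrary tree $T$ containing a node $r\in L_k$ whose parent $s$ lies in $L_j$ with $j<k-1$. Choose an intermediate pattern $t$ with $s>t>r$ and $t\in L_{k-1}$. Reassign $r$ to $t$.
\item Show that a suitable choice of $t$ does not increase $\Phi$. The change is $c_t-(c_s-1)$, so I need some $t$ on the chain between $s$ and $r$ with $c_t\le c_s-1$.
\end{enumerate}

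Iterating these moves terminates at a GNCMV tree with $\Phi$ no larger than that of $T$.

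The main obstacle is step 2, together with the final comparison among GNCMV trees themselves. The clean way to justify the statement as worded is to show that within $\mathcal{T}_{\text{GNCMV}}$ the sparsity pattern is as small as the regularity constraint permits: every edge of $U$ joins patterns in the same or adjacent layers. I would then argue that any non-GNCMV tree must contain at least one cross-layer tree edge linking layers two or more apart, whose block cannot be zero. For step 2 itself, I would first try choosing $t$ in $L_{k-1}$ to be a pattern currently carrying the fewest children among those between $s$ and $r$, and use a pigeonhole count on the $\binom{d}{k-1}$ candidates. If the pure edge-count comparison among GNCMV trees fails, I would fall back to this layer-locality characterization of sparsity.
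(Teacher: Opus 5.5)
\textbf{Verdict: there is a genuine gap in the GNCMV half.} Your CCMV half is fine and matches the paper: under CCMV the graph $U$ is complete, so no off-diagonal block is forced to vanish.

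The GNCMV gap is not a missing lemma. With the density measure you chose (edges of $U$, i.e.\ $2^d-1+\Phi(T)$), the statement you are trying to prove is false. So neither step~2 nor the ``final comparison among GNCMV trees'' can be completed.

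For $d=3$, LNCMV gives $111$ the children $110,101,011$, gives $110$ the children $100,010$, and gives $101$ and $100$ one child each. So $\Phi=3+1=4$, which matches the 11 edges of $U$ in the paper's own example. The GNCMV tree with $100\to110$, $010\to011$, $001\to101$, $000\to100$ has $\Phi=3$. Two members of $\mathcal{T}_{\text{GNCMV}}$ therefore have different numbers of nonzero blocks.

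For $d=4$ it is worse. LNCMV has $\Phi=6+4+1=11$. Now take the tree with the four one-zero patterns attached to $1111$ (as they must be), plus $1100\to1110$, $1010\to1011$, $1001\to1101$, $0101\to1101$, $0110\to0111$, $0011\to0111$, $1000\to1100$, $0100\to0101$, $0010\to1010$, $0001\to0011$, and the non-GNCMV edge $0000\to1001$. It has $\Phi=6+1+1=8$. So a tree outside $\mathcal{T}_{\text{GNCMV}}$ has strictly fewer nonzero blocks than one inside it.

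Separately, your local move in step~2 can fail. Take $d=4$, let $s=1110$ have the single child $r=1000$, and set $0100\to1100$, $0010\to1010$. The only $t\in L_2$ with $s>t>r$ are $1100$ and $1010$. Both have $c_t=1>c_s-1=0$, so either reassignment raises $\Phi$. A pigeonhole over $\binom{d}{k-1}$ patterns does not help, because only the $k-j$ patterns on the chain between $s$ and $r$ are eligible.

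Your fallback does not close the gap either. ``Every edge of $U$ joins equal or adjacent layers'' is just the GNCMV definition restated. Likewise, a non-GNCMV tree having an edge that spans two or more layers says nothing about how many blocks vanish: the $d=4$ tree above has such an edge and fewer nonzero blocks than LNCMV.

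The paper's proof does not supply the missing idea. It asserts that for GNCMV trees $U$ has only maximal cliques of size $2$, which is false. Every pattern with a single zero has $1_d$ as its only possible parent, so $1_d$ and its $d$ children always form a $(d+1)$-clique, e.g.\ $\{111,110,101,011\}$ in the paper's own LNCMV example.

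A provable version requires first changing the claim. One option is the layer-locality property itself. Another is the statement that the minimum count is attained by some GNCMV tree; that would still need its own proof and cannot go through the single-node move.
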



\subsection{Empirical bootstrap}

In the bootstrap, we can overcome performing any analytic computation.  We have access to the joint bootstrap distribution, which mimics the joint sampling distribution.  Our empirical bootstrap approach utilizes resampling from the empirical distribution \citep{efron1979bootstrap}.  We describe the process of generating bootstrap samples and refitting the model to obtain bootstrap estimates in Algorithm \ref{alg:empbootstrap}.  Because we are operating under a smooth parametric model, the bootstrap is asymptotically valid, and an argument similar to the one provided by \cite{suen2023modelingmissingrandomneuropsychological} that uses the Berry-Esseen bound can be followed.

\begin{algorithm}
	{\bf Require:} $\{(\bX_{i,\bR_i}, \bR_i)\}_{i=1}^n$, $\hat{\btheta}$, $B$ (a large number, say 1,000)
	\begin{algorithmic}[1]
		\For{$b\in1,\ldots,B$}
		\State{Sample $n$ draws uniformly with replacement from $\{1, 2, \ldots, n\}$.  Put these into index set $I_b$.}
		\State{Set the $b$th bootstrapped data set $D_b^* = \{(\bX_{i,\bR_i}, \bR_i)\}_{i\in I_b}$.}
		\State{Fit the complete case distribution $p(\bx|\b\bR=1_d;\btheta^{*(b)})$ on $D^*_b[\b\bR=1_d]$.}
		\State{Obtain the selection odds $O_r(x;\hat{\bbeta}^{*(b)}) := P(\bR=r|x) / P(\bR=\text{PA}(r)|x)$ for every pattern $r\in\mathcal{R}$ on the data $D^*_b[\b\bR=r] \cup D^*_b[\b\bR=\text{PA}(r)]$.}
		\EndFor
		
		\State{\textbf{return} $\{\bbeta^{*(b)}\}_{b=1}^B, \{\btheta^{*(b)}\}_{b=1}^B$}
	\end{algorithmic}
	\caption{Empirical bootstrap procedure for obtaining confidence intervals}
	\label{alg:empbootstrap}
\end{algorithm}

Since we are under a parametric model, every statistical functional is a function of the parameters $\bbeta$ and $\btheta$.  In the situation that the statistical functional does not have a simple analytical form, we recommend computing a multiple imputation estimator, which serves as a Monte Carlo approximation.  For every bootstrap estimate $(\bbeta^{*(b)}, \btheta^{*(b)})$, we can construct the imputation distributions $p(x_{\bar{r}} | x_r, \bR=r)$ for every pattern $r$ and multiply impute.  After obtaining a completed data set, then we can compute the statistical functional by computing it on the multiply imputed data set.  Then, afterwards, we may pool these estimates together to construct a confidence interval.  We summarize this procedure in Algorithm \ref{alg:empbootstrapMI}.  This approach is generally computationally expensive because within each bootstrap iteration, we have to perform a multiple imputation step, but it overcomes the difficulty of finding a closed-form analytic expression for any general statistical functional we care about.  We note that another procedure could take an inverse probability weighting approach.  In general, however, we recommend a multiple imputation approach because this will be asymptotically more efficient than an IPW method.

\begin{algorithm}
	{\bf Require:} $\{(\bX_{i,\bR_i}, \bR_i)\}_{i=1}^n$, $\hat{\btheta}$, $B$ (a large number, say 1,000), $M$, $S(\cdot)$ (statistical functional)
	\begin{algorithmic}[1]
		\For{$b=1,2,\ldots,B$}
		\State{Sample $n$ draws uniformly with replacement from $\{1, 2, \ldots, n\}$.  Put these into index set $I_b$.}
		\State{Set the $b$th bootstrapped data set $D_b^* = \{(\bX_{i,\bR_i}, \bR_i)\}_{i\in I_b}$.}
		\State{Form the bootstrap empirical distribution $\hat{P}^{*(b)}$.}
		\State{Fit the complete case distribution $p(x|\bR=1_d;\btheta^{*(b)})$ on $D^*_b[\b\bR=1_d]$.}
		\State{Obtain the selection odds $O_r(x;\hat{\bbeta}^{*(b)}) := P(\bR=r|x) / P(\bR=\text{PA}(r)|x)$ for every pattern $r\in\mathcal{R}$ on the data $D^*_b[\b\bR=r] \cup D^*_b[\b\bR=\text{PA}(r)]$.}
		\For{$r\neq1_d$}
		\State{Construct the imputation distribution $p(x_{\bar{r}} | x_r, \bR=r; \bbeta^{*(b)}, \btheta^{*(b)})$ by renormalizing $O_r(x;\hat{\bbeta}^{*(b)}) \cdot p(x|\bR=1_d;\btheta^{*(b)})$.}
		\EndFor
		\For{$m=1,2,\ldots,M$}
		\For{$i=1,2,\ldots,n$}
		\If{$\bR_i\neq1_d$}
		\State{Impute $\tilde{\bX}_{i,\bar{\bR}_i}^{(b,m)}$.}
		\EndIf
		\EndFor
		\EndFor
		\State{Form the bootstrap empirical distribution $\hat{P}^{*(b),M}$.}
		\State{Compute the statistical functional $S(\hat{P}^{*(b),M})$ on the completed imputed data.}
		\EndFor
		
		\State{\textbf{return} $\{S(\hat{P}^{*(b),M})\}_{b=1}^B$}
	\end{algorithmic}
	\caption{Empirical bootstrap procedure with multiple imputation for obtaining confidence intervals}
	\label{alg:empbootstrapMI}
\end{algorithm}

\section{Sensitivity analysis}


In practical data analysis, it is essential to evaluate the influence of missing data assumptions on statistical estimators. Since such assumptions dictate the structure of the missing data mechanism, any misspecification can lead to biased or misleading inferences. In this paper, we focus on the tree graph as the primary missing data assumption and consider a structured sensitivity analysis framework to assess its impact.

Broadly, sensitivity analysis approaches can be categorized into deviations within the tree graph set and deviations outside the tree graph set. The former considers alternative graph structures that remain within the tree graph set while the latter relaxes the tree structure entirely, allowing for more flexible relationships. Both types of deviations allow one to assess the robustness of the estimator to different levels of structural perturbation.  We generally consider the former because that is most within the scope of this paper.

In all of these settings, the complete case model remains unchanged. However, models involving missing patterns (those dependent on assumptions about the missing data mechanism) are subject to perturbations.  By systematically examining these perturbations, we aim to quantify the sensitivity of inference to the assumed missing data structure. This approach provides a principled way to assess the degree to which conclusions depend on specific assumptions.

\label{section:sensitivity}

\subsection{Deviation within the tree graph set}


A natural approach to evaluating deviations in the tree graph framework is to consider a set of plausible tree graphs, denoted as $\widetilde{\mathcal{T}}$. These alternative graphs can be constructed by incorporating prior knowledge, existing partial orderings (such as the GNCMV framework) and data-driven methods. Exploring multiple tree structures allows us to assess the sensitivity of statistical inferences to different assumptions about the missing data mechanism.

Because the tree graph structure permits the use of conjugate odds imputation, as discussed earlier, we can perform statistical analyses for each tree in $\widetilde{\mathcal{T}}$, obtaining $|\widetilde{\mathcal{T}}|$ point estimates of the target parameter. Comparing these estimates provides insight into the impact of tree specification on inference, helping to determine whether certain structural choices lead to significant variation in results.



\subsection{Perturb selection odds models via exponential tilting}

Alternatively, one may consider deviating from the tree graph set, and there are multiple approaches one may take.  We discuss a straightforward one here in terms of the exponential tilting of the selection odds models.

A given selection odds model $O_r(x_r) = \frac{P(\bR=r|x_r)}{P(\bR=s|x_r)}$ can be commonly estimated using logistic regression, especially under our conjugate odds framework. To incorporate sensitivity analysis and assess the robustness of inferences under potential deviations from the assumed selection model, we introduce a perturbation mechanism via exponential tilting  \citep{kim2011semiparametric, shao2016semiparametric, zhao2017semiparametric}.  With many odds model and many variables, there can be an exponential number of sensitivity parameters one can have.  One approach is to consider variable-wise sensitivity parameters, where we have a sensitivity parameter for each of the $d$ variables.  The sensitivity parameter vector can be $\rho := (\rho_1, \rho_2, \ldots, \rho_d)$.

Specifically, we modify the selection odds model by multiplying it with an exponential adjustment term, leading to the perturbed selection odds model
$$O_r'(x) := O_r(x_r) \cdot \exp(\rho_{\bar{r}}^\top x_{\bar{r}}),$$
where $\rho_{\bar{r}}$ consists of the sensitivity parameters (one for each missing covariate under pattern $r$).  The exponential tilting formulation allows for a flexible and interpretable perturbation of the selection model. By appropriately choosing $\rho_{\bar{r}}$, one can examine a range of plausible missing data mechanisms, thereby assessing the sensitivity of the resulting inference.  Each element of $\rho_{\bar{r}}$ represents a potential deviation from the originally estimated selection model, effectively shifting the selection mechanism in a controlled manner.

Despite the introduction of the perturbation term, the new perturbed selection odds model remains within a parametric logistic regression framework. The exponential tilting approach does not alter the functional form of the selection odds model beyond a simple multiplicative adjustment. As a result, the model retains its parametric interpretability. When a given $\rho_j=0$, this corresponds to no perturbation.  Such a sensitivity parameter can be viewed as coefficient in a linear model, and one can specify its range based on one's belief of the relative impact of the missing variables to that of the observed variables.

\begin{remark}
	This approach is equivalent to the approach by \cite{Franks19045}.  In that approach, they consider a single variable $Y$ that is subject to missingness and impose a parametric assumption on selection probability $P(\bR=1|y) = \text{logit}(\alpha+\beta y)$, where $\beta$ is given a prior distribution.
	
	One can view their modeling approach as special case of our framework with a Bayesian perspective.  In our framework, their technique can be viewed as a tree graph approach with an exponential tilting sensitivity analysis.  To see this, consider the simple tree $1 \to 0$ that provides the identification assumption
	$$P(\bR=0|y)/P(\bR=1|y) \stackrel{1\to0}{=} P(\bR=0)/P(\bR=1) =: O_0.$$
	Then, $\beta$ can be interpreted as a sensitivity parameter through we can define a perturbed selection odds model that now depends on the variable $y$
	$$O'_0(y) := O_0 \cdot \exp(\beta y).$$
\end{remark}

\section{Proofs}

%

\subsection{Tree graphs and associated properties}

\label{subappendix:prooftreegraphprop}

\begin{proof}[Proof of Proposition \ref{prop:mnar}]
	Any given tree graph is equivalent to a missing not at random assumption.  Pick any missing data pattern $r_\ell$ with associated path $1_d \to r_1 \to r_2 \to \cdots \to r_\ell$.  The selection odds factorizes with respect to the tree graph, so we have the following decomposition
	\begin{align*}
		\frac{P(\bR=r_\ell|X)}{P(\bR=r_{1_d}|X)} &= \prod_{i=1}^\ell \frac{P(\bR=r_i|X)}{P(\bR=r_{i-1}|X)} \\
		&= \prod_{i=1}^\ell \frac{P(\bR=r_i|X_{r_i})}{P(\bR=r_{i-1}|X_{r_{i}})} \\
		&= \prod_{i=1}^\ell f_{r_i}(X_{r_{i}})
	\end{align*}
	for functions $\{f_{r_i}\}_{r_i\in\mathcal{R}}$.
	Multiplying both sides by $P(\bR=1_d|X)$ implies that $P(\bR=r_\ell|X)$ depends on $X_{\overline{r_\ell}}$, which implies it is MNAR assumption.\\
	
	Moreover, we know that the pattern-mixture model factorization holds since it is equivalent to the selection model (see Theorem 4 of \citealt{chen2022}).  Factor the full data distribution as
	\begin{align*}
		p(x, r) = p(x_{\bar{r}} | x_r, r) \cdot p(x_r | r) \cdot p(r).
	\end{align*}
	The extrapolation distributions $\{p(x_{\bar{r}} | x_r, r)\}_{r\in\mathcal{R}}$ are the only distributions not identified from the observed data.  However, the tree graph provides a way to identify each distribution from the observed data.  As above, pick any missing data pattern $r_\ell$ with associated path $1_d \to r_1 \to r_2 \to \cdots \to r_\ell$.  Then, we have
	\begin{align*}
		p(x_{\bar{r}_\ell} | x_{r_\ell}, r_\ell) &\stackrel{T}{=} p(x_{\bar{r}_\ell} | x_{r_\ell}, r\in\text{PA}_{T}(r_\ell)) \\
		&= p(x_{\bar{r}_\ell} | x_{r_\ell}, r\in\text{PA}_{T}(r_\ell))
	\end{align*}
	
	Thus, the full data distribution is nonparametrically identified.
\end{proof}

\begin{proof}[Proof of Proposition \ref{prop:mcar}]
Through rules of probability, we know that
$$p(x|\bR=r) = p(x|\bR=1_d) \cdot \frac{P(\bR=r|x)}{P(\bR=1_d|x)} \cdot \frac{P(\bR=1_d)}{P(\bR=r)},$$
and $p(x|\bR=r)$ is identified because the odds $\frac{P(\bR=r|x)}{P(\bR=1_d|x)}$ simplifies as a product of identifiable terms by Proposition \ref{prop:mnar}.
Under missing completely at random ($X\perp R$), these odds rewrite as $\dfrac{P(\bR=r|x)}{P(\bR=1_d|x)} = \dfrac{P(\bR=r)}{P(\bR=1_d)}$.  Therefore, the above equation simplifies as
$$p(x|\bR=r) = p(x|\bR=1_d) \cdot \frac{P(\bR=r)}{P(\bR=1_d)} \cdot \frac{P(\bR=1_d)}{P(\bR=r)} = p(x|\bR=1_d).$$
Finally, this means that
$$p(x|\bR=1_d) = p(x),$$
and so, the tree graph correctly recovers the data distribution under MCAR.
\end{proof}

\begin{proof}[Proof of Proposition \ref{prop:equivprop}]
	To prove equivalence of all three statements, we prove in a cycle.
	\begin{itemize}
		\item [($1\Rightarrow 2$)] Definition \ref{def:treegraph} implies the single parent property.  We prove the contrapositive.  Suppose there exists one pattern $r$ with two parents, labeled $s_1$ and $s_2$.  Then, the path from $1_d \to \cdots \to s_1 \to r$ and $1_d \to \cdots \to s_2 \to r$ both exist in the graph, which means that there is more than one path to $r$ from $1_d$.
		\item [($2\Rightarrow 3$)] Next, suppose that every pattern $r\neq 1_d$ in $T$ has exactly one parent.  There are exactly $2^d$ patterns in $T$ with $1_d$ as the source, so there are $2^d-1$ that require a parent.  Thus, $T$ must contain at least $2^d-1$ edges, and since every pattern $r\neq1_d$ has only parent, there are no more $2^d-1$ total edges.
		\item [($3\Rightarrow 1$)] Lastly, we prove by contradiction.  Suppose that instead of $2^d-1$ edges, there are $2^d$ total edges in $T$.  By the Pigeonhole Principle, there exists one pattern with at least $\lceil 2^d / (2^d-1) \rceil = 2$ parents.  Denote this pattern by $r$.  If $r$ has at least 2 parents, then there are at least two paths from $1_d$ to $r$, which implies this is not a tree graph, thereby resulting in a contradiction.
	\end{itemize}

	
	
\end{proof}

\begin{lemma}
	\label{lemma:comb-ident}
	The following combinatorial identities hold
	$$\sum_{m=0}^d \binom{d}{m} = 2^d, \quad \sum_{m=0}^d m\binom{d}{m} = d \cdot 2^{d-1}.$$
\end{lemma}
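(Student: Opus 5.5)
The first identity is the binomial theorem evaluated at a special point. The second follows either by differentiating the same expansion or from the absorption identity.

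Start from the binomial expansion
\[
(1+t)^d = \sum_{m=0}^d \binom{d}{m} t^m .
\]
Setting $t=1$ gives $\sum_{m=0}^d \binom{d}{m} = 2^d$ directly.

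Combinatorially, this identity also counts all missing patterns $r\in\{0,1\}^d$ grouped by the number $m$ of observed variables. That is the form in which it will feed into the enumeration argument of Proposition \ref{prop:enumerate}.

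For the second identity, differentiate the expansion in $t$:
\[
d(1+t)^{d-1} = \sum_{m=1}^d m\binom{d}{m} t^{m-1},
\]
and evaluate at $t=1$. The $m=0$ term contributes nothing, so $\sum_{m=0}^d m\binom{d}{m} = d\cdot 2^{d-1}$.

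As an alternative that avoids calculus, I would use the absorption identity $m\binom{d}{m} = d\binom{d-1}{m-1}$ for $m\ge 1$. Reindexing with $k=m-1$ reduces the sum to $d\sum_{k=0}^{d-1}\binom{d-1}{k} = d\cdot 2^{d-1}$ by the first identity applied with $d-1$. A double-counting reading works as well: both sides count pairs (pattern, distinguished observed coordinate). Choosing the coordinate first gives $d$ choices, and the remaining $d-1$ coordinates are free, giving $2^{d-1}$.

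There is no real obstacle here. The only point requiring care is the boundary term $m=0$, which must be handled when reindexing or differentiating; it contributes zero in both approaches.
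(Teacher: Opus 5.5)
Your proof is correct, but your main route differs from the paper's. The paper proves both identities purely by double counting. It reads $\sum_m \binom{d}{m}$ as the number of committees of any size drawn from $d$ people, which is $2^d$ because each person is either in or out. It reads $\sum_m m\binom{d}{m}$ as the number of committees with a designated leader, which is $d\cdot 2^{d-1}$ by choosing the leader first and then an arbitrary subset of the remaining $d-1$ people.

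You instead lead with the generating function $(1+t)^d$, evaluating it and its derivative at $t=1$, and you offer the absorption identity as a calculus-free variant. Two of your side remarks coincide with the paper's argument in the language of missing patterns. Your closing double-counting reading (a pattern together with a distinguished observed coordinate) is exactly the committee-with-leader count. Your aside that the first sum groups patterns by number of observed variables is the paper's first argument.

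The generating-function route is the more mechanical and extensible one: repeated differentiation immediately yields moments such as $\sum_m m(m-1)\binom{d}{m} = d(d-1)2^{d-2}$, should a sharper enumeration bound ever need them. The bijective route is fully self-contained, uses no analysis, and makes the count transparent.

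Your treatment of the $m=0$ boundary term is fine. One cosmetic point: in the proof of Proposition~\ref{prop:enumerate} the index $m$ counts zeros (missing variables) rather than observed ones. This is harmless by the symmetry $\binom{d}{m}=\binom{d}{d-m}$.
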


\begin{proof}[Proof of Lemma \ref{lemma:comb-ident}]
	We will prove both equations using combinatorial arguments.  We prove the first equation first.  Observe that the LHS counts the number of ways to form a committee of size $0$ to $d$ from $d$ individuals.  Alternatively, one can count the number of committees by noting that each of the $d$ individuals has two choices: to be in the committee or not.  We then obtain $2^d$ on the RHS.  Thus, equality holds.
	
	For the second equation, note that the LHS counts the number of ways to form a committee of any size with a leader.  We can alternatively count this by selecting the leader first from $d$ individuals and then forming a committee from the remaining $d-1$ individuals.  This precisely gives us $d \cdot 2^{d-1}$ on the RHS, and equality holds.
\end{proof}

\begin{proof}[Proof of Proposition \ref{prop:enumerate}]
	In a regular pattern graph, the observed variables that a pattern has is exactly a subset of its parents' observed variables.  Therefore, every missing pattern $r$ has $2^m-1$ parents, where $m$ is the number of 0s in $r$.  There are $\binom{d}{m}$ patterns with $m$ 0s.  Thus, using Property \ref{fact:reduct} and since $m$ ranges from $1$ to $d$, we have
	\begin{equation*}
		|\mathcal{T}_d| = \prod_{m=1}^d (2^m-1)^{\binom{d}{m}}.
	\end{equation*}
	Since $2^{m-2} < 2^m - 1$ when $m > \log_2(4/3) \approx 0.415$, we have the following lower bound
	\begin{align*}
		|\mathcal{T}_d| &\geq \prod_{m=1}^d (2^{m-2})^{\binom{d}{m}} \\
		&= 2^{\sum_{m=1}^d (m-2) \binom{d}{m}}.
	\end{align*}
	Focusing on the term in the exponent, we have
	\begin{align*}
		\sum_{m=1}^d (m-2) \binom{d}{m} &= \sum_{m=1}^d m \binom{d}{m} - 2 \sum_{m=1}^d \binom{d}{m} \\
		&= \sum_{m=0}^d m \binom{d}{m} - 2 \left(\sum_{m=0}^d \binom{d}{m} - 1\right) \\
		&= d\cdot 2^{d-1} - 2(2^d-1) \\
		&= (d/2-2)\cdot 2^d + 2,
	\end{align*}
	where the second to last inequality can be obtained via standard combinatorial arguments (see Lemma \ref{lemma:comb-ident} for completeness).  This implies that
	$$|\mathcal{T}_d| \geq 2^{(d/2-2)\cdot 2^d + 2} = 2^{\Omega(d\cdot 2^d)},$$
	which is precisely super-exponential. 
\end{proof}

\begin{proof}[Proof of Proposition \ref{prop:GNCMV}]
	First, we bound the size of $T\in\mathcal{T}_{\text{GNCMV}}$.  For every pattern with $m$ observed variables, there are a total of a $\binom{d}{m}$ patterns.  Thus, we have
	$$|\mathcal{T}_{\text{GNCMV}}| = \prod_{m=0}^{d-1} \binom{d}{m+1}^{\binom{d}{m}}.$$
	For sufficiently large $d$, we further obtain
	$$\log|\mathcal{T}_{\text{GNCMV}}| = \sum_{m=0}^{d-1} \binom{d}{m} \log\binom{d}{m+1} \geq \sum_{m=0}^{d-1} \binom{d}{m} = \Omega(2^d).$$
	So, $\mathcal{T}_{\text{GNCMV}}$ forms a large class.  Next, for every $T\in\mathcal{T}_{\text{GNCMV}}$, we prove the following two properties:
	\begin{enumerate}
		\item It achieves the maximum possible depth of $d$.
		
		Since every pattern is the farthest it can be from $1_d$, this implies that the longest chain in the graph is formed via the path from $1_d$ to $0_d$.  This chain has length $d$, which implies that this NCMV graph has the maximum possible depth, in contrast to CCMV.
		
		\item Every pattern $r$ in $T$ is positioned at the maximum possible distance from the source node $1_d$, thereby corresponding to the most information flow.
		
		By construction, every pattern $r\neq 1_d$ has a parent $s$ such that $s$ contains exactly one more observed variable than $r$.  Therefore, this implies that length of the path from the source node $1_d$ to any pattern that contains $k$ 0s is exactly $k$.  Moreover, this is maximum distance away from the source node it can be because 
	\end{enumerate}
\end{proof}

\subsection{Conjugate odds properties}

\begin{proof}[Proof of Proposition \ref{prop:conjugatemixtures}]
	Suppose that $p(x|A=a)$ belongs to the $K$-mixture model
	$$\mathcal{M}_K(\mathcal{P}) := \left\{p = \sum_{j=1}^K w_j p_j \ \biggr| \ p_j \in\mathcal{P}, \sum_{j=1}^K w_j = 1, w_j > 0\ \forall j \right\}$$
	such that every component is an element of $\mathcal{P}$, and the odds $\mathcal{O}(\mathcal{P})$ is a conjugate odds for $\mathcal{P}$.
	
	To be precise, suppose we can write $p(x|A=a)$ as
	$$p(x|A=a) = \sum_{j=1}^K w_j p_j(x|A=a)$$
	for positive weights $w_j$ that sum to 1.  Now, suppose that $P(A=a'|x)/P(A=a|x)$ is conjugate for $p_j(x|A=a)$ for all $j$.  Then, we have
	\begin{align*}
		p(x|A=a') &\propto p(x|A=a) \cdot \frac{P(A=a'|x)}{P(A=a|x)} \\
		&= \sum_{j=1}^K w_j p_j(x|A=a) \cdot \frac{P(A=a'|x)}{P(A=a|x)} \\
		&= \sum_{j=1}^K w_j \cdot \zeta_j \cdot p_j(x|A=a')
	\end{align*}
	for some $\{\zeta_j\}_j$ that are all positive due to the conjugacy of the odds model.  Finally, this implies that
	$$p(x|A=a') = \sum_{j=1}^K \tilde{w}_j p_j(x|A=a')$$
	for some set of perturbed weights $\{\tilde{w}_j\}_j$.  So, we have that $p(x|A=a')$ is also $K$-mixture model with components belonging to $\mathcal{P}$, and the result follows.
\end{proof}

\begin{proof}[Proof of Proposition \ref{prop:expfam-conjugate}]
	We assume the $p(x|A=a)$ has the following exponential family parameterization
	$$p(x|A=a) = h(x)g(\eta)\exp(\eta^\top T(x)).$$
	We also further assume that the odds is a logistic regression and linear in the sufficient statistic
	$$\log \frac{P(A=a'|x)}{P(A=a|x)} = \gamma_0 + \gamma^\top T(x).$$
	We have the following equality
	\begin{align*}
		p(x|A=a') &= \frac{P(A=a'|x)}{P(A=a|x)} \cdot p(x|A=a) \biggr / \int_{-\infty}^\infty \frac{P(A=a'|x)}{P(A=a|x)} \cdot p(x|A=a) \ dx. \label{eq:newdist-expfam}
	\end{align*}
	Focusing on the unnormalized distribution, we have
	\begin{align}
		\frac{P(A=a'|x)}{P(A=a|x)} \cdot p(x|A=a) &= \exp(\gamma_0 + \gamma^\top T(x)) \cdot h(x)g(\eta)\exp(\eta^\top T(x)) \nonumber \\
		&= \exp(\gamma_0) h(x) g(\eta) \exp((\eta+\gamma)^\top T(x)).
	\end{align}
	As $h(x)\exp((\eta+\gamma)^\top T(x))$ is an unnormalized exponential family distribution with natural parameter $\eta+\gamma$, it follows that
	$$\int_{-\infty}^\infty h(x)\exp((\eta+\gamma)^\top T(x)) \ dx = \frac{1}{g(\eta+\gamma)}.$$
	Returning to equation \eqref{eq:newdist-expfam}, we see that
	\begin{align*}
		p(x|A=a') &= \frac{P(A=a'|x)}{P(A=a|x)} \cdot p(x|A=a) \biggr / \int_{-\infty}^\infty \frac{P(A=a'|x)}{P(A=a|x)} \cdot p(x|A=a) \ dx \\
		&= \frac{\exp(\gamma_0) \cdot h(x) \cdot g(\eta) \cdot \exp((\eta+\gamma)^\top T(x))}{\exp(\gamma_0) \cdot g(\eta) \cdot 1/g(\eta+\gamma)} \\
		&= h(x) g(\eta+\gamma) \exp((\eta+\gamma)^\top T(x)), \\
		&= h(x) g(\eta') \exp((\eta')^\top T(x))
	\end{align*}
	with $\eta' := \eta + \gamma$.  Finally, solving for $\gamma_0$, we have
	$$\exp(\gamma_0) := \frac{g(\eta')}{g(\eta)} \cdot \frac{P(A=a')}{P(A=a)} \iff \gamma_0 := \log \frac{P(A=a')}{P(A=a)} + \log \frac{g(\eta')}{g(\eta)},$$
	as desired.
\end{proof}

\begin{proof}[Proof of Corollary \ref{cor:expontentialtilting}]
	Suppose that $P(A=a'|x)/P(A=a|x)$ is modeled using a logistic regression.  That is, we model the log-odds like
	$$\log \frac{P(A=a'|x)}{P(A=a|x)} = g(x)$$
	for some function $g$.  Then, we have
	$$p(x|A=a') \propto p(x|A=a) \cdot \frac{P(A=a'|x)}{P(A=a|x)} = p(x|A=a) \cdot \exp(g(x)),$$
	and this is precisely an exponential tilting, as desired.
\end{proof}

\begin{proof}[Proof of Corollary \ref{cor:mixexpfam-conjugate}]
	By definition, we can write $p(x|A=a')$ as
	\begin{align*}
		p(x|A=a') &\propto p(x|A=a) \cdot \frac{P(A=a'|x)}{P(A=a|x)} \\
		&= \exp(\gamma_0 + \gamma^\top T(x)) \sum_{k=1}^K w_k \cdot h(x) g(\eta_k) \exp(\eta^\top_k T(x)) \\
		&= \sum_{k=1}^K w_k \cdot \exp(\gamma_0) h(x) g(\eta_k) \exp((\eta_k+\gamma)^\top T(x)).
	\end{align*}
	Then, simplifying with algebra and renormalizing yields
	$$p(x|A=a') = \sum_{k=1}^K \widetilde{w}_k \cdot h(x) g(\eta_k+\gamma) \exp((\eta_k+\gamma)^\top T(x)),$$
	where
	$$\widetilde{w}_k := \frac{w_k \cdot g(\eta_k)}{g(\eta_k+\gamma)} \bigg/ \sum_{k'=1}^K \frac{w_{k'} \cdot g(\eta_{k'})}{g(\eta_{k'}+\gamma)}.$$
	
	This concludes the proof.  Additionally, we note that in this proof we assumed that each mixture was the same distribution, but this argument generalizes to other distributions.  For example, instead of just a mixture of Gaussians, one could have a mixture of a Gaussian and a Binomial.
\end{proof}

\begin{proof}[Proof of Proposition \ref{prop:power-pareto}]
	We assume the $p(x|A=a)$ has the following Pareto distribution parameterization
	$$p(x|A=a; \alpha, \beta) = \begin{cases} \dfrac{\alpha \beta^\alpha}{x^{\alpha+1}} & x \geq \beta \\ 0 & \text{o.w.} \end{cases}.$$
	We also further assume that the odds obeys the following parametric model
	$$\frac{P(A=a'|x)}{P(A=a|x)} = \gamma_0x^\gamma, \quad \gamma_0 := \frac{P(A=a')}{P(A=a)} \cdot \frac{\alpha'}{\alpha} \cdot \beta^{\alpha'-\alpha}.$$
	We have the following equality
	\begin{align*}
		p(x|A=a') &= \frac{P(A=a'|x)}{P(A=a|x)} \cdot p(x|A=a) \biggr / \int_\beta^\infty \frac{P(A=a'|x)}{P(A=a|x)} \cdot p(x|A=a) \ dx.
	\end{align*}
	Focusing on the unnormalized distribution, we have
	\begin{align*}
		\frac{P(A=a'|x)}{P(A=a|x)} \cdot p(x|A=a) &= \frac{\alpha \beta^\alpha \gamma_0}{x^{\alpha-\gamma+1}}.
	\end{align*}
	The normalizing constant must be
	$$\int_\beta^\infty \frac{\alpha \beta^\alpha \gamma_0}{x^{\alpha-\gamma+1}} \ dx = \frac{\alpha \beta^\alpha \gamma_0}{(\alpha-\gamma)\beta^{\alpha-\gamma}} = \frac{\alpha \beta^\gamma\gamma_0}{\alpha-\gamma}.$$
	Thus, we have
	$$p(x|A=a'; \alpha', \beta) = \begin{cases} \dfrac{\alpha' \beta^{\alpha'}}{x^{\alpha'+1}} & x \geq \beta \\ 0 & \text{o.w.} \end{cases}$$
	for $\alpha' := \alpha - \gamma$.
	Finally, solving for $\gamma_0$, we have
	$$\gamma_0 := \frac{P(A=a')}{P(A=a)} \cdot \frac{\alpha'}{\alpha} \cdot \frac{\beta^{\alpha'}}{\beta^\alpha},$$
	as desired.
\end{proof}

\begin{proof}[Proof of Corollary \ref{cor:productdist}]
	Suppose we have the decomposition $X:= (X_1,X_2)$, where $X_1 \perp X_2 \ | \ A$.  Let $p(x_1|A=a)$ and $p(x_2|A=a)$ be two exponential family distributions such that
	$$p(x_1|A=a) = h_1(x_1)g_1(\eta_1)\exp(\eta_1^\top T_1(x_1)),$$
	$$p(x_2|A=a) = h_2(x_2)g_2(\eta_2)\exp(\eta_2^\top T_2(x_2)).$$
	Then, we have
	\begin{align*}
		p(x|A=a) &= p(x_1|A=a) \cdot p(x_2|A=a) \\
		&= h_1(x_1)h_2(x_2)g_1(\eta_1)g_2(\eta_2)\exp(\eta_1^\top T_1(x_1)+\eta_2^\top T_2(x_2)).
	\end{align*}
	Finally, it follows that 
	\begin{align*}
		p(x|A=a') &\propto p(x|A=a) \cdot P(A=a'|x)/P(A=a|x) \\
		&= h_1(x_1)h_2(x_2)g_1(\eta_1)g_2(\eta_2)\exp((\eta_1+\gamma_1)^\top T_1(x_1)+(\eta_2+\gamma_2)^\top T_2(x_2)).
	\end{align*}
	Thus, the natural parameter is $\zeta := (\eta_1+\gamma_1, \eta_2+\gamma_2)$ and $T(x):=(T_1(x_1), T_2(x_2))$, as desired.
\end{proof}

\subsection{Tree graphs and conjugate odds}

\begin{proof}[Proof of Theorem \ref{theorem:treegraphandconjugateodds}]
	By assumption, the missingness mechanism can be specified using a tree graph.  Then, by Proposition \ref{prop:mnar}, we obtain an identification formula for the selection odds
	\begin{align*}
		\frac{P(\bR=r_\ell|X)}{P(\bR=1_d|X)}
		&\stackrel{\text{tree graph}}{=} \prod_{i=1}^\ell \frac{P(\bR=r_i|X_{r_i})}{P(\bR=r_{i-1}|X_{r_{i}})},
	\end{align*}
	where $1_d =: r_0 \to r_1 \to r_2 \to \cdots \to r_\ell$ is the unique path in the tree graph from the source $1_d$ to pattern $r_{\ell}$.
	
	We proceed with the proof inductively.  First, partition the set of patterns $\mathcal{R}$ into $\mathcal{R}_0, \mathcal{R}_1, \mathcal{R}_2, \ldots, \mathcal{R}_d$, where $\mathcal{R}_k$ denotes the set of patterns in the tree graph that are exactly $k$ edges away from the source node $1_d$.  Here, $\mathcal{R}_0$ is trivially the set $\{1_d\}$.
	
	For the base case, it is sufficient to consider the set $\mathcal{R}_1$, and note that for any $r\in\mathcal{R}_1$, we have
	$$p(x|\bR=r) \propto p(x|\bR=1_d) \cdot \frac{P(\bR=r|x)}{P(\bR=1_d|x)}.$$
	Therefore, by conjugacy of the odds, $p(x|\bR=r)$ is the same probability family as $p(x|\bR=1_d)$ for any $r\in\mathcal{R}_1$.
	
	Next, fix $k$, and suppose that for all $r\in\mathcal{R}_k$, $p(x|\bR=r)$ is the same probability family as $p(x|\bR=1_d)$.  Then, for any $s\in\mathcal{R}_{k+1}$, there exists $r'\in\mathcal{R}_k$ such that $r'\to s$ ($r'$ is the unique parent of $s$).  It follows that
	$$p(x|\bR=s) \propto p(x|\bR=r) \cdot \frac{P(\bR=s|x)}{P(\bR=r|x)}.$$
	Again, by the inductive hypothesis and conjugacy of the selection odds, it must follow that $p(x|\bR=s)$ also belongs to the same probability family as $p(x|\bR=r)$ and thus, also $p(x|\bR=1_d)$ by transitivity.
\end{proof}

\subsection{Inference}

\begin{proof}[Proof of Proposition \ref{prop:MLEindep}]
	
	
	We prove this directly.  Suppose that the patterns do not have a direct parent-child relationship.
	
	Let $\mathcal{X}_{r} := \{X_{i,R_i} \ | \ R_i=r\}$ be the observed data under pattern $r$, so by definition, $\mathcal{X}_{a} \cap \mathcal{X}_{a'} = \emptyset$ for $a\neq a'$.  There are two types of models fit on the data.  The first model is the complete case model, which is only using data $\mathcal{X}_{1_d}$.  The pattern $1_d$ contains no siblings.  Any odds based on the patterns with $1_d$.
	
	The conjugate odds model $O_r(x) := P(\bR=r|x)/P(\bR=r'|x)$ is fit using the data $\mathcal{X}_{r} \cup \mathcal{X}_{r'}$.  If two conjugate odds models are fit using completely separate data, then their resulting parameter estimates will be independent (this can be viewed as a form of sampling splitting).
	
	Now, consider two distinct patterns $r$ and $s$ such that $r'\to r$ and $s'\to s$.  Suppose we fit two conjugate odds models $O_r(x)$ and $O_s(x)$ using the data $\mathcal{X}_{r} \cup \mathcal{X}_{r'}$ and $\mathcal{X}_{s} \cup \mathcal{X}_{s'}$.  Then, failing to satisfy the first property necessarily implies that $s\neq r'$ and $r\neq s'$.  Failing to satisfy the second property implies that $r'\neq s'$.  All together, we have
	\begin{align*}
		(\mathcal{X}_{r} \cup \mathcal{X}_{r'}) \cap (\mathcal{X}_{s} \cup \mathcal{X}_{s'}) &= ((\mathcal{X}_{r} \cup \mathcal{X}_{r'}) \cap \mathcal{X}_{s}) \cup ((\mathcal{X}_{r} \cup \mathcal{X}_{r'}) \cap \mathcal{X}_{s'}) \\
		&= \emptyset \cup \emptyset \\
		&= \emptyset.
	\end{align*}
	So, the two models are fit on separate data sets.  This implies that the estimated model parameters are independent, and therefore, have covariance 0.
\end{proof}

\begin{proof}[Proof of Corollary \ref{cor:blockstructure}]
	Consider every maximal clique in the undirected graph.  If there is a path between two patterns in the undirected graph, then the estimated parameters for each of the models are correlated.  Moreover, for every submatrix that is determined by the maximal clique, the submatrix is full; that is, it contains only nonzero elements.
\end{proof}

\begin{proof}[Proof of Proposition \ref{prop:covariancematrix}]
	We start by proving the first claim.  In the CCMV case, the associated undirected graph is a clique, and the complete case data is used to fit every conjugate odds model $O_r(x) = O_r(x_r) := P(\bR=r|x_r)/P(\bR=1_d|x_r)$.  Therefore, the estimated parameters for all the conjugate odds models and the complete case model are all dependent.  Thus, the correlation is nonzero, and CCMV provides the densest covariance matrix.
	
	Now, we consider the second claim: any GNCMV assumption provides the sparsest asymptotic covariance matrix.  The undirected graph associated with any GNCMV tree graph contains only maximal cliques of size 2, thereby leading to the sparsest possible matrix.
\end{proof}

\subsection{Proofs of additional results in the appendix}

\begin{proof}[Proof of Proposition \ref{prop:power-beta}]
	We assume the $p(x|A=a)$ has the following Beta distribution parameterization
	$$p(x|A=a; \alpha, \beta) = \frac{x^{\alpha-1}(1-x)^{\beta-1}}{B(\alpha,\beta)}.$$
	We also further assume that the odds obeys the following parametric model
	$$O_{a'}(x;\bm{\gamma}) := \frac{P(A=a'|x)}{P(A=a|x)} = \gamma_0x^{\gamma_1} (1-x)^{\gamma_2}, \quad \gamma_0 := \frac{P(A=a')}{P(A=a)} \cdot \frac{B(\alpha+\gamma_1, \beta+\gamma_2)}{B(\alpha,\beta)}.$$
	
	We have the following equality
	\begin{align*}
		p(x|A=a') &= \frac{P(A=a'|x)}{P(A=a|x)} \cdot p(x|A=a) \biggr / \int_0^1 \frac{P(A=a'|x)}{P(A=a|x)} \cdot p(x|A=a) \ dx.
	\end{align*}
	Focusing on the unnormalized distribution, we have
	\begin{align*}
		\frac{P(A=a'|x)}{P(A=a|x)} \cdot p(x|A=a) &= \frac{\gamma_0 x^{\alpha+\gamma_1-1}(1-x)^{\beta+\gamma_2-1}}{B(\alpha,\beta)}.
	\end{align*}
	The normalizing constant must be
	$$\int_0^1 \frac{\gamma_0 x^{\alpha+\gamma_1-1}(1-x)^{\beta+\gamma_2-1}}{B(\alpha,\beta)} \ dx = \frac{\gamma_0 B(\alpha+\gamma_1, \beta+\gamma_2)}{B(\alpha,\beta)}.$$
	Thus, we have
	$$p(x|A=a'; \alpha', \beta) = \frac{x^{\alpha'-1}(1-x)^{\beta'-1}}{B(\alpha',\beta')},$$
	where $\alpha' = \alpha + \gamma_1  \in \mathbb{R}^+$ and $\beta' = \beta + \gamma_2  \in \mathbb{R}^+$.
	Finally, solving for $\gamma_0$, we have
	$$\gamma_0 := \frac{P(A=a')}{P(A=a)} \cdot \frac{B(\alpha+\gamma_1, \beta+\gamma_2)}{B(\alpha,\beta)},$$
	as desired.
\end{proof}

\begin{proof}[Proof of Proposition \ref{prop:power-dirichlet}]
	We assume the $p(x|A=a)$ has the following Dirichlet distribution parameterization
	$$p(x|A=a; \bm\alpha) = \frac{1}{B(\bm\alpha)} \prod_{j=1}^K x_j^{\alpha_j-1}$$
	We also further assume that the odds obeys the following parametric model
	$$O_{a'}(x;\bm{\gamma}) := \frac{P(A=a'|x)}{P(A=a|x)} = \gamma_0 \prod_{j=1}^K x_j^{\gamma_j}, \quad \gamma_0 := \frac{P(A=a')}{P(A=a)} \cdot \frac{B(\bm\alpha+\bm\gamma)}{B(\bm\alpha)}.$$
	
	We have the following equality
	\begin{align*}
		p(x|A=a') &= \frac{P(A=a'|x)}{P(A=a|x)} \cdot p(x|A=a) \biggr / \int_{\Delta_{K-1}} \frac{P(A=a'|x)}{P(A=a|x)} \cdot p(x|A=a) \ dx.
	\end{align*}
	Focusing on the unnormalized distribution, we have
	\begin{align*}
		\frac{P(A=a'|x)}{P(A=a|x)} \cdot p(x|A=a) &= \frac{\gamma_0}{B(\bm\alpha)} \prod_{j=1}^K x_j^{\alpha_j+\gamma_j-1}.
	\end{align*}
	The normalizing constant must be
	$$\int_{\Delta_{K-1}} \frac{\gamma_0}{B(\bm\alpha)} \prod_{j=1}^K x_j^{\alpha_j+\gamma_j-1} \ dx = \frac{\gamma_0 B(\bm\alpha+\bm\gamma)}{B(\bm\alpha)}.$$
	Thus, we have
	$$ \frac{1}{B(\bm\alpha')} \prod_{j=1}^K x_j^{\alpha_j'-1},$$
	where $\alpha_j' = \alpha_j + \gamma_j  \in \mathbb{R}^+$ for $j>0$.
	Finally, solving for $\gamma_0$, we have
	$$\gamma_0 := \frac{P(A=a')}{P(A=a)} \cdot \frac{B(\bm\alpha+\bm\gamma)}{B(\bm\alpha)},$$
	as desired.
\end{proof}

\begin{proof}[Proof of Theorem \ref{theorem:marginalinvariance}]
	Let $\mathcal{B} = \{j_1, j_2, \ldots, j_B\} \subseteq [d]$ be a set of indices that correspond to several variables, and $\mathcal{R}_{\mathcal{B}} := \{s : s_j = 0, j\in \mathcal{B}\}$.  Further, suppose that for any $r \in\mathcal{R}_{\mathcal{B}}$, $\text{Anc}_{G_1}(r) = \text{Anc}_{G_2}(r)$.  To prove sufficiency, it suffices to show that the implied distribution $p(x_{j_1}, x_{j_2}, \ldots, x_{j_B})$ is the same.  We approach this from a pattern-mixture model standpoint.
	
	Note that in a tree graph, a pattern's set of ancestors determines exactly the path from $1_d$ to that pattern.  A given pattern's set of ancestors has a total ordering, and this total ordering uniquely determines the path from $1_d$, as there is only a single path (in a tree graph).  Therefore, since the patterns share the same ancestors in the two tree graphs, the patterns all have the same implied distributions.
\end{proof}


\begin{proof}[Proof of Proposition \ref{prop:closure}]
To show both claims, it suffices to show that the set of tree graphs is the smallest generating set for the set of pattern graphs.  We do this by showing two facts: every generating set must be a superset of the set of tree graphs, and the set of tree graphs is a generating set.

First, note that any generating set must contain the set of tree graphs because every tree graph is minimal (see Proposition \ref{prop:equivprop}).  Next, the set of tree graphs is a generating set.  For any pattern graph $G$, construct an associated set of tree graphs $\mathcal{T}_G$, where each tree graph $T\in \mathcal{T}_G$ is made by choosing a single element in each parent set of $G$.

Then, note that performing this operation over all possible pattern graphs $G$ and taking a union forms a generating set.  More precisely, if we let $\mathcal{G}$ to be the set of all pattern graphs, we have
$$\bigcup_{G\in\mathcal{G}} \mathcal{T}_G \subseteq \mathcal{T}.$$
But also since $\bigcup_{G\in\mathcal{G}} \mathcal{T}_G$ is a generating set, we have
$$\bigcup_{G\in\mathcal{G}} \mathcal{T}_G \supseteq \mathcal{T}.$$
So, equality holds, and we are done.
\end{proof}

\section{Further Conjugate Odds Examples}

\label{appendix:examples}

\subsection{Further logistic odds examples}

\begin{example}[Negative binomial]
The negative binomial distribution is widely used in modeling discrete data with overdispersion with one notable example in single cell RNA data.  Suppose that $X|A=a\sim \text{NegBin}(r,p)$ with $r$ known.  The sufficient statistic is $T(x) = x$ with natural parameter $\eta = \log p$.  Suppose that 
$$\log \frac{P(A=a'|x)}{P(A=a|x)} = \gamma_0 + \gamma_1 x$$
where $\gamma = \gamma_1$.  Then, via Proposition \ref{prop:expfam-conjugate}, $X|A=a'$ is negative binomially distributed with natural parameter
$$\eta' := \eta + \gamma_1 = \log p + \gamma_1.$$  Translating this back to the standard parameterization, we have
\begin{align*}
	p' := \exp(\log p + \gamma_1).
\end{align*}
\end{example}


\begin{corollary}[Product distribution]
\label{cor:productdist}
Suppose we have the decomposition $X:= (X_1,X_2)$, where $X_1 \perp X_2 \ | \ A$.  Let $p(x_1|A=a)$ and $p(x_2|A=a)$ be two exponential family distributions such that
$$p(x_1|A=a) = h_1(x_1)g_1(\eta_1)\exp(\eta_1^\top T_1(x_1)),$$
$$p(x_2|A=a) = h_2(x_2)g_2(\eta_2)\exp(\eta_2^\top T_2(x_2)).$$
It follows that
\begin{align*}
	p(x|A=a') &= \underbrace{h_1(x_1) h_2(x_2)}_{h(x)} \underbrace{g_1(\eta_1+\gamma_1) g_2(\eta_2+\gamma_2)}_{g(\zeta)} \exp( (\eta_1+\gamma_1)^\top T_1(x_1) + (\eta_2+\gamma_2)^\top T_2(x_2)) \\
	&= h(x)g(\zeta)\exp(\zeta^\top T(x))
\end{align*}
with natural parameter $\zeta := (\eta_1+\gamma_1, \eta_2+\gamma_2)$ and $T(x):= (T_1(x_1), T_2(x_2))$.
\end{corollary}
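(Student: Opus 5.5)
The plan is a direct computation that exploits the factorization of the joint density under conditional independence together with additivity of logistic odds across the two blocks. Proposition \ref{prop:expfam-conjugate} then does the normalization. First I will rewrite $p(x|A=a)$ as a single exponential family. By $X_1\perp X_2\mid A$, the joint density factorizes as
\[
p(x|A=a)=p(x_1|A=a)\,p(x_2|A=a)=h(x)\,g(\eta)\exp\bigl(\eta^\top T(x)\bigr),
\]
with $h(x):=h_1(x_1)h_2(x_2)$, $T(x):=(T_1(x_1),T_2(x_2))$, stacked natural parameter $\eta:=(\eta_1,\eta_2)$, and $g(\eta):=g_1(\eta_1)g_2(\eta_2)$. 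The natural parameter space is the product $\mathcal{H}_1\times\mathcal{H}_2$.

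Next I will take the odds model to be logistic and linear in the stacked sufficient statistic:
\[
\log\frac{P(A=a'|x)}{P(A=a|x)}=\gamma_0+\gamma_1^\top T_1(x_1)+\gamma_2^\top T_2(x_2)=\gamma_0+\gamma^\top T(x),
\]
with $\gamma=(\gamma_1,\gamma_2)$. This is the implicit assumption behind the statement, since $\gamma_1,\gamma_2$ appear in it. Applying Proposition \ref{prop:expfam-conjugate} to the stacked family gives
\[
p(x|A=a')=h(x)\,g(\eta+\gamma)\exp\bigl((\eta+\gamma)^\top T(x)\bigr),
\]
provided $\eta+\gamma\in\mathcal{H}$. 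That condition is equivalent to $\eta_j+\gamma_j\in\mathcal{H}_j$ for $j=1,2$.

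The remaining step, and the only real point to check, is that the normalizer splits. I need
\[
g(\eta+\gamma)=g_1(\eta_1+\gamma_1)\,g_2(\eta_2+\gamma_2).
\]
This holds because $\int h_1h_2\exp(\cdot)\,dx$ factorizes as a product of integrals over $x_1$ and $x_2$ by Fubini/Tonelli. Each factor equals $1/g_j(\eta_j+\gamma_j)$, and each is finite precisely because $\eta_j+\gamma_j\in\mathcal{H}_j$. Substituting gives the displayed formula with $\zeta=(\eta_1+\gamma_1,\eta_2+\gamma_2)$. As a byproduct, $X_1\perp X_2$ still holds given $A=a'$.
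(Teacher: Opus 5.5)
Your proposal is correct and takes essentially the same route as the paper. Both factor $p(x|A=a)$ by conditional independence into one exponential family with $h=h_1h_2$ and $T=(T_1,T_2)$, multiply by the logistic odds $\exp(\gamma_0+\gamma_1^\top T_1(x_1)+\gamma_2^\top T_2(x_2))$ (left implicit in the paper too), and read off the shifted natural parameter $\zeta$. You are in fact more careful than the paper: it stops at the proportionality and does not check, as you do via Tonelli, that the normalizer splits as $g_1(\eta_1+\gamma_1)g_2(\eta_2+\gamma_2)$ or that $\zeta$ must lie in $\mathcal{H}_1\times\mathcal{H}_2$.
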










\subsection{Further power law odds examples}

%
\begin{proposition}[Power function family, Beta distribution]
\label{prop:power-beta}
Suppose that $p(x|A=a)$ is a Beta distribution
$$p(x|A=a; \alpha, \beta) = \frac{x^{\alpha-1}(1-x)^{\beta-1}}{B(\alpha,\beta)}.$$
Then, the associated odds model
$$O_{a'}(x;\bm{\gamma}) := \frac{P(A=a'|x)}{P(A=a|x)} = \gamma_0x^{\gamma_1} (1-x)^{\gamma_2}, \quad \gamma_0 := \frac{P(A=a')}{P(A=a)} \cdot \frac{B(\alpha+\gamma_1, \beta+\gamma_2)}{B(\alpha,\beta)}$$
holds if and only if
$$p(x|A=a'; \alpha', \beta) = \frac{x^{\alpha'-1}(1-x)^{\beta'-1}}{B(\alpha',\beta')},$$
where $\alpha' = \alpha + \gamma_1  \in \mathbb{R}^+$ and $\beta' = \beta + \gamma_2  \in \mathbb{R}^+$.
\end{proposition}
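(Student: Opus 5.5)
The plan is to mirror the argument used for Proposition \ref{prop:power-pareto}, with Bayes' rule playing the central role. First, the target density is always the tilted source:
\begin{equation*}
p(x|A=a') = p(x|A=a)\cdot \frac{P(A=a'|x)}{P(A=a|x)}\cdot \frac{P(A=a)}{P(A=a')}, \qquad x\in(0,1).
\end{equation*}
This identity holds regardless of any model and gives both directions. The support of both densities is $(0,1)$, so absolute continuity is automatic.

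For the forward direction, I would substitute the odds model $\gamma_0 x^{\gamma_1}(1-x)^{\gamma_2}$ into the display. The unnormalized target is then a constant times $x^{\alpha+\gamma_1-1}(1-x)^{\beta+\gamma_2-1}$, which is integrable on $(0,1)$ exactly when $\alpha'=\alpha+\gamma_1>0$ and $\beta'=\beta+\gamma_2>0$; this is the stated membership condition. Integration gives the normalizing constant $\gamma_0 B(\alpha',\beta')/B(\alpha,\beta)$, and dividing through yields the $\mathrm{Beta}(\alpha',\beta')$ density. The identity also forces the constant in front. Because the left side integrates to one, integrating the display over $(0,1)$ gives
\begin{equation*}
\gamma_0\, B(\alpha',\beta')/B(\alpha,\beta)=P(A=a')/P(A=a).
\end{equation*}
This step pins down $\gamma_0$ as a consistency condition rather than a free parameter.

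For the converse, I would assume $p(x|A=a')$ is $\mathrm{Beta}(\alpha',\beta')$. The odds then equal $\frac{P(A=a')}{P(A=a)}\cdot\frac{p(x|A=a')}{p(x|A=a)}$, which is a ratio of two Beta densities. This simplifies to $\frac{P(A=a')}{P(A=a)}\frac{B(\alpha,\beta)}{B(\alpha',\beta')}x^{\alpha'-\alpha}(1-x)^{\beta'-\beta}$, so the odds has the power form with $\gamma_1=\alpha'-\alpha$ and $\gamma_2=\beta'-\beta$.

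The main obstacle is the constant $\gamma_0$. The computation above gives $\gamma_0=\frac{P(A=a')}{P(A=a)}\frac{B(\alpha,\beta)}{B(\alpha',\beta')}$. The statement instead writes $\frac{B(\alpha+\gamma_1,\beta+\gamma_2)}{B(\alpha,\beta)}$, which is the reciprocal Beta-function ratio. I would carry the normalization explicitly and reconcile the two, most likely by reading the statement's ratio as inverted, just as the normalization in the Pareto proof fixes $\gamma_0$. Everything else is routine algebra.
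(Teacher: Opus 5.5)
This is essentially the paper's argument for the forward direction: tilt the Beta density by the power odds, normalize with the Beta integral, then solve for $\gamma_0$. Your ratio-of-densities converse supplies the ``only if'' half that the paper's proof never writes out.

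You are also right about the constant. The paper's own normalizer $\gamma_0 B(\alpha+\gamma_1,\beta+\gamma_2)/B(\alpha,\beta)$ must equal $P(A=a')/P(A=a)$, which forces $\gamma_0=\frac{P(A=a')}{P(A=a)}\cdot\frac{B(\alpha,\beta)}{B(\alpha+\gamma_1,\beta+\gamma_2)}$. This agrees with $g(\eta')/g(\eta)$ in Proposition \ref{prop:expfam-conjugate} when $g=1/B$. So the Beta-function ratio in the statement, and in the paper's final ``solving for $\gamma_0$'' step, is inverted.
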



\begin{proposition}[Power function family, Dirichlet distribution]
\label{prop:power-dirichlet}
Suppose that $X$ is a random variable belonging to the $K-1$ simplex such that
$$p(x|A=a; \bm\alpha) = \frac{1}{B(\bm\alpha)} \prod_{j=1}^K x_j^{\alpha_j-1}$$
is a Dirichlet distribution.  Then, the associated odds model
$$O_{a'}(x;\bm{\gamma}) := \frac{P(A=a'|x)}{P(A=a|x)} = \gamma_0 \prod_{j=1}^K x_j^{\gamma_j}, \quad \gamma_0 := \frac{P(A=a')}{P(A=a)} \cdot \frac{B(\bm\alpha+\bm\gamma)}{B(\bm\alpha)}$$
holds if and only if
$$p(x|A=a'; \bm\alpha') = \frac{1}{B(\bm\alpha')} \prod_{j=1}^K x_j^{\alpha_j'-1},$$
where $\alpha'_j = \alpha_j + \gamma_j \in \mathbb{R}^+$ for $j>0$.
\end{proposition}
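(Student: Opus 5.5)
The statement is Proposition \ref{prop:power-dirichlet}. I would mirror the proofs of Propositions \ref{prop:power-pareto} and \ref{prop:power-beta}: use the tilting identity \eqref{eq:tilted}, which realizes the target density as the source density times the odds, renormalized. Throughout I work on the open simplex $\Delta_{K-1}$, with Lebesgue measure on its first $K-1$ coordinates.

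For the forward direction, assume the power-law odds model. The unnormalized target is
\[
p(x\mid A=a)\,\frac{P(A=a'\mid x)}{P(A=a\mid x)} = \frac{\gamma_0}{B(\bm\alpha)}\prod_{j=1}^K x_j^{\alpha_j+\gamma_j-1}.
\]
This is the kernel of a Dirichlet density with parameter $\bm\alpha+\bm\gamma$. It is integrable exactly when every $\alpha_j+\gamma_j>0$, and that is where the requirement $\alpha'_j\in\mathbb{R}^+$ is needed. The Dirichlet integral $\int_{\Delta_{K-1}}\prod_j x_j^{\alpha'_j-1}\,dx=B(\bm\alpha')$ then gives the normalizing constant $\gamma_0 B(\bm\alpha')/B(\bm\alpha)$. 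Dividing by it leaves $\prod_j x_j^{\alpha'_j-1}/B(\bm\alpha')$, with $\gamma_0$ cancelling. Hence $X\mid A=a'\sim\mathrm{Dir}(\bm\alpha')$.

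For the identification of $\gamma_0$ and the converse, I would use Bayes' rule, which gives
\[
\frac{P(A=a'\mid x)}{P(A=a\mid x)}=\frac{P(A=a')}{P(A=a)}\cdot\frac{p(x\mid A=a')}{p(x\mid A=a)}.
\]
Substituting the two Dirichlet densities turns the right side into $\frac{P(A=a')}{P(A=a)}\frac{B(\bm\alpha)}{B(\bm\alpha')}\prod_j x_j^{\gamma_j}$ with $\gamma_j=\alpha'_j-\alpha_j$. This is exactly the power-law form, and it also shows that the odds model holds whenever the target is Dirichlet.

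The main obstacle is only bookkeeping in matching $\gamma_0$ to the displayed formula. Read directly, Bayes' rule gives $\gamma_0=\frac{P(A=a')}{P(A=a)}\frac{B(\bm\alpha)}{B(\bm\alpha+\bm\gamma)}$. The statement writes the reciprocal ratio $B(\bm\alpha+\bm\gamma)/B(\bm\alpha)$, which corresponds to the Beta analogue in Proposition \ref{prop:power-beta} and to the normalizing constant computed above. I would present the forward computation as the paper does and then reconcile the constant explicitly through the Bayes-rule identity, noting which convention, density ratio versus normalizer, is being used.
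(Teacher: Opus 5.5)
Your forward direction matches the paper's proof. You multiply the $\mathrm{Dir}(\bm{\alpha})$ density by the power-law odds, recognize the $\mathrm{Dir}(\bm{\alpha}+\bm{\gamma})$ kernel, and divide by the normalizer $\gamma_0B(\bm{\alpha}+\bm{\gamma})/B(\bm{\alpha})$, in which $\gamma_0$ cancels. The paper stops there, produces the displayed $\gamma_0$ by ``solving for $\gamma_0$,'' and never argues the converse. Your Bayes-rule identity is a genuine addition: it gives the ``if'' direction and reads $\gamma_0$ off the density ratio. A useful cross-check is that the proposition is a special case of Proposition~\ref{prop:expfam-conjugate}. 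The Dirichlet is an exponential family with $T(x)=(\log x_1,\ldots,\log x_K)$ and $g(\bm{\alpha})=1/B(\bm{\alpha})$, and $\gamma_0\prod_j x_j^{\gamma_j}=\exp(\log\gamma_0+\bm{\gamma}^\top T(x))$ is logistic odds in $T$ with intercept $\log\gamma_0$.

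Your suspicion about the constant is right, but do not present it as a matter of convention: no convention makes the displayed value correct. Because $\gamma_0$ cancels in the forward computation, the only thing that can pin it down is $\int O_{a'}(x)\,p(x\mid A=a)\,dx=P(A=a')/P(A=a)$. This holds because $O_{a'}(x)\,p(x\mid A=a)=P(A=a'\mid x)\,p(x)/P(A=a)$. Setting the normalizer equal to $P(A=a')/P(A=a)$ gives $\gamma_0=\frac{P(A=a')}{P(A=a)}\cdot\frac{B(\bm{\alpha})}{B(\bm{\alpha}+\bm{\gamma})}$. This agrees with your Bayes-rule value and with Proposition~\ref{prop:expfam-conjugate}, where $g(\eta')/g(\eta)=B(\bm{\alpha})/B(\bm{\alpha}+\bm{\gamma})$.

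Proposition~\ref{prop:power-pareto} follows this pattern correctly: its $\gamma_0$ is the prior odds times the ratio of normalizing factors $\alpha'\beta^{\alpha'}/(\alpha\beta^{\alpha})$. The paper's final line here inverts the ratio, and Proposition~\ref{prop:power-beta} has the same slip. Taken literally, the ``if'' direction of the statement is false unless $B(\bm{\alpha}+\bm{\gamma})=B(\bm{\alpha})$. Outside that case the hypothesis of the ``only if'' direction is inconsistent, since the displayed odds would integrate against $p(x\mid A=a)$ to $\frac{P(A=a')}{P(A=a)}\bigl(B(\bm{\alpha}+\bm{\gamma})/B(\bm{\alpha})\bigr)^{2}$. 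Write the result with the corrected $\gamma_0$ and state explicitly that the displayed one is an error; with that change your two-part argument proves both directions.
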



\section{Additional Comments on Tree Graphs}

\subsection{Congruency}

We now discuss the concept of congruency.  When conducting real data analysis, we only need to construct a tree graph using patterns that are observed in the data and can ignore any pattern that is not observed in the real data.  We formalize this notation in the following section, utilizing the fact that patterns that are not observed in the real data have Lebesgue measure 0.


\begin{definition}[Congruency]
Two tree graphs $G_1$ and $G_2$ are said to be \textit{congruent} with respect to the observed data if $G_1$ and $G_2$ are identical after removing any patterns that do not appear in the observed data.  We omit the phrase ``with respect to the observed data'' when it is clear from context.  
\end{definition}

In essence, two tree graphs $G_1$ and $G_2$ being congruent with respect to the observed data means that any statistical functional of the full-data distribution is the same regardless if the assumption $G_1$ or $G_2$ was made.  We now introduce the idea of a representor graph to represent a set of congruent tree graphs.

\begin{definition}[Representor]
A representor of a set of tree graphs $\mathcal{T}$ is the tree graph that comprises only the patterns observed in the data and is congruent to every tree graph in $\mathcal{T}$.
\end{definition}

The graph comprising of only the patterns that are observed in the real data implies that the graph does not contain superfluous information that is ignored by the observed data.  Moreover, if the graph is congruent to every graph in $\mathcal{T}$, it is the minimal graph that represents all of the patterns.

\begin{example}
Suppose the observed data has $d=3$ variables with only the following patterns: 111, 101, 011, and 001.  Consider the following tree graphs in Figure \ref{fig:treegraph-example}, labeled from left to right as $G_1$, $G_2$, and $G_3$.  We see that $G_1$ and $G_2$ are congruent with respect to the observed data, but $G_3$ is not congruent to $G_1$ or $G_2$.  Moreover, the respective representor graphs of $\{G_1,G_2\}$ and $\{G_3\}$ are found in Figure \ref{fig:representor}.

\begin{figure}
	\centering
	\includegraphics[width=0.65\textwidth]{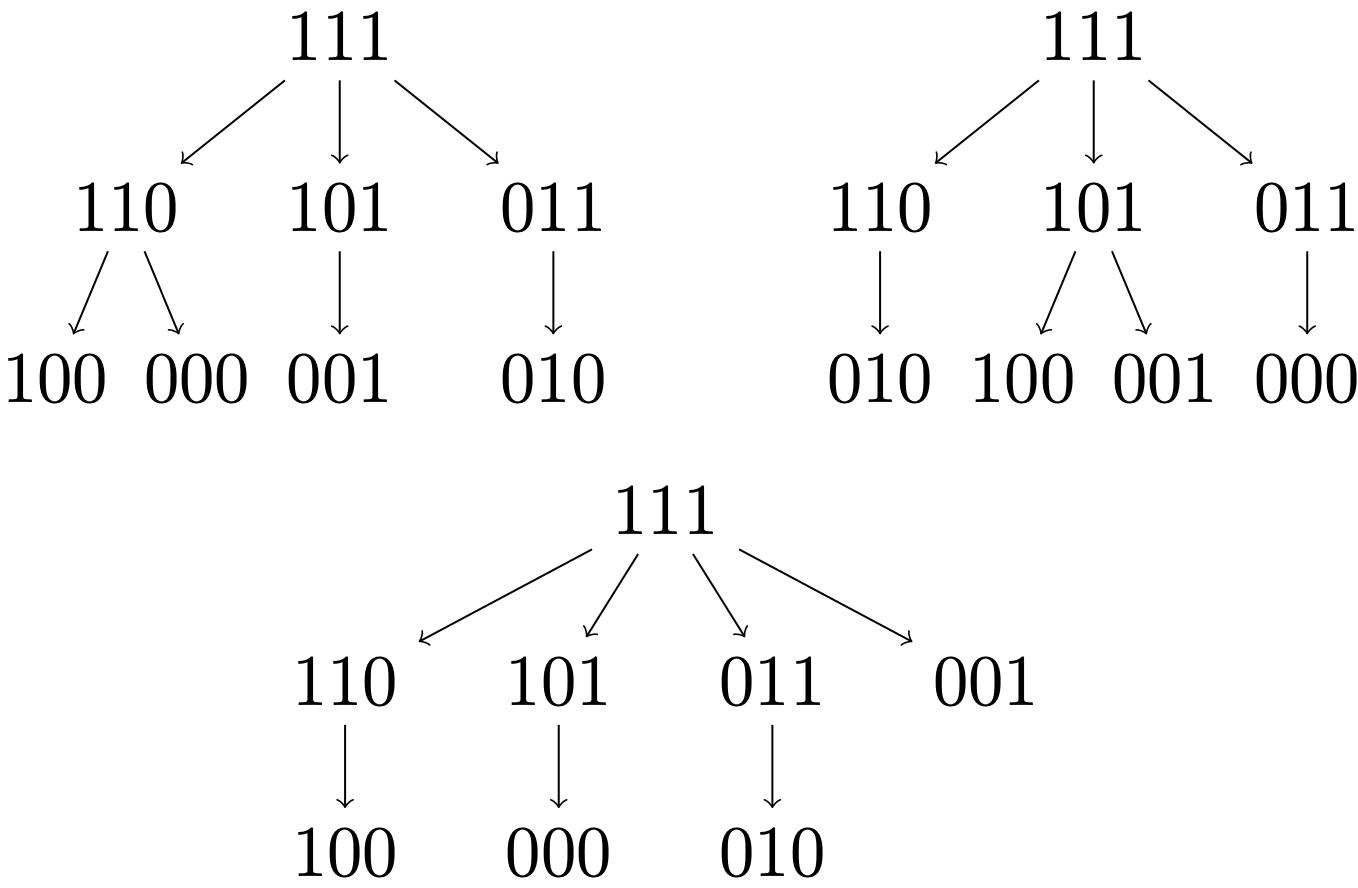}
	\caption{We provide three examples of tree graphs for $d=3$ variables.}
	\label{fig:treegraph-example}
\end{figure}

\begin{figure}[b]
	\centering
	\includegraphics[width=0.45\textwidth]{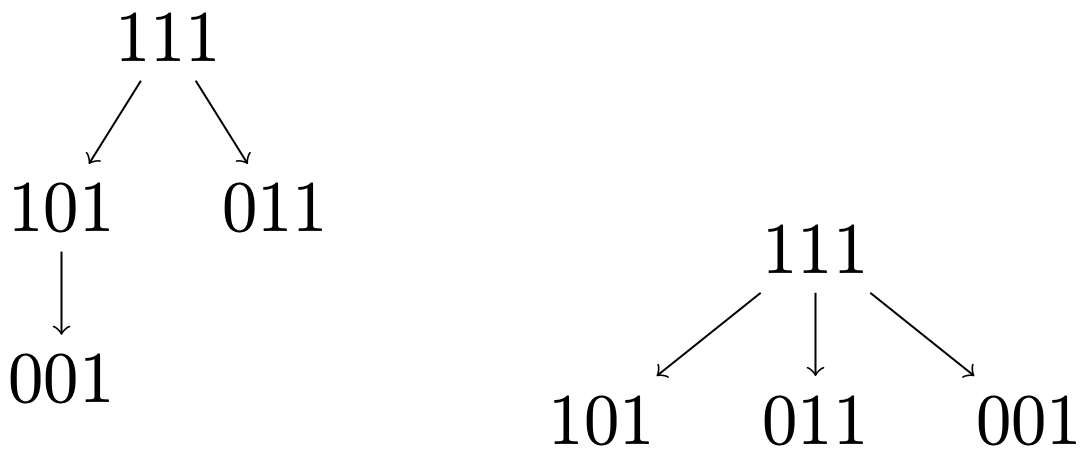}
	\caption{We provide examples of representor graphs, corresponding to the graphs in Figure \ref{fig:treegraph-example}.}
	\label{fig:representor}
\end{figure}

\end{example}

\begin{remark}[Selecting a threshold]
In practice, choosing the patterns that should appear in the representor can be done in various ways.  A straightforward way would be to only consider the patterns that are observed in the data.  More generally, one can consider thresholding based on the number of observations and only include missing patterns with a number of observations at least the threshold.

The threshold can be selected to be any constant $C>0$.  For example, if we only keep patterns such that the number of observations is at least $C=1$, this corresponds to selecting the patterns that are observed in the data.  On the other hand, we may choose to keep patterns such that the number of observations is at least $C=100$, which implies that we are seeking a sufficiently large enough effective sample size.  One advantage to choosing $C>1$ is to avoid potential problems with model fitting.
\end{remark}

\subsection{Invariance for a specific statistical functional}

One may hypothesize that for a given parameter of interest, certain tree graphs may lead to the same identification formula for that parameter.  More formally, we now consider invariance for a specific statistical functional.

\begin{theorem}[Sufficient conditions for marginal distribution invariance]
\label{theorem:marginalinvariance}
Let $\mathcal{B} = \{j_1, j_2, \ldots, j_B\} \subseteq [d]$ be a set of indices that correspond to several variables.  Define the set $\mathcal{R}_{\mathcal{B}} := \{s : s_j = 0, j\in \mathcal{B}\}$ to contain exactly the patterns that have a $0$ in each index in $\mathcal{B}$.  Suppose that there are two distinct tree graphs $G_1$ and $G_2$ such that for each $r\in\mathcal{R}_{\mathcal{B}}$, $r$ has the same ancestors in $G_1$ and $G_2$.  Then, any statistical functional of the distribution $p(x_{j_1}, x_{j_2}, \ldots, x_{j_B})$ is the same regardless of the graph $G_1$ or $G_2$.

\end{theorem}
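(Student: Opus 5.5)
The plan is to work at the level of the marginal $p(x_{\mathcal{B}})$, where $x_{\mathcal{B}} := (x_{j_1},\ldots,x_{j_B})$, through the pattern-mixture decomposition
\begin{equation*}
p(x_{\mathcal{B}}) = \sum_{r\in\mathcal{R}} P(\bR=r)\, p(x_{\mathcal{B}}\mid \bR=r).
\end{equation*}
The weights $P(\bR=r)$ are identified from the observed data and do not depend on the graph. So it suffices to show that each summand $p(x_{\mathcal{B}}\mid \bR=r)$ is the same under $G_1$ and $G_2$. Any statistical functional of $p(x_{\mathcal{B}})$ then agrees under the two graphs.

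I would split the patterns into three groups.

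\textbf{Group (a): $r$ observes every index in $\mathcal{B}$.} Here $p(x_{\mathcal{B}}\mid r)$ is a marginal of the observed-data law $p(x_r\mid r)$. It is therefore graph-free.

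\textbf{Group (b): $r\in\mathcal{R}_{\mathcal{B}}$, so $r$ misses every index in $\mathcal{B}$.} Use the multivariate Tukey factorization (Definition \ref{def:multivariateTukey}) together with Proposition \ref{prop:mnar}:
\begin{equation*}
p(x\mid \bR=r_\ell) \propto p(x\mid \bR=1_d)\prod_{i=1}^{\ell}\frac{P(\bR=r_i\mid x_{r_i})}{P(\bR=r_{i-1}\mid x_{r_i})}.
\end{equation*}
The key observation is that in a tree graph the ancestor set of $r$ is totally ordered by $>$, and this ordering recovers the unique path $1_d=r_0\to\cdots\to r_\ell=r$. Equal ancestor sets in $G_1$ and $G_2$ therefore give the same path, the same product of identified odds, and hence the same full $p(x\mid r)$. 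Its $\mathcal{B}$-marginal then also agrees.

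\textbf{Group (c): $r$ observes some but not all of $\mathcal{B}$.} This is the main obstacle. Write
\begin{equation*}
p(x_{\mathcal{B}}\mid r)=\int p(x_{\mathcal{B}\cap\bar r}\mid x_r, r)\,p(x_r\mid r)\,dx_{r\setminus\mathcal{B}}.
\end{equation*}
The extrapolation factor is again determined by the path to $r$. However, the hypothesis as stated controls only paths to patterns in $\mathcal{R}_{\mathcal{B}}$, not to these mixed patterns.

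My first attempt would be to show that the relevant part of the extrapolation for a mixed $r$ is inherited from patterns in $\mathcal{R}_{\mathcal{B}}$ or from group (a). I would do this by unrolling the recursion $p(x_{\bar r}\mid x_r,r)=p(x_{\bar r}\mid x_r,\mathrm{PA}(r))$ along the path, integrating out the coordinates outside $\mathcal{B}$ at each step.

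Before committing, I would test this on $d=3$, $\mathcal{B}=\{1,2\}$. There $\mathcal{R}_{\mathcal{B}}=\{001,000\}$, and the mixed pattern $100$ may take parent $110$, $101$ or $111$. Taking $G_1$ and $G_2$ that agree above $001$ and $000$ but differ at $100$ seems to change $p(x_2\mid 100)$ in general. If that check confirms a genuine discrepancy, the group-(b) argument cannot be extended to group (c) from the stated hypothesis alone. The proof would then need the hypothesis read as also covering patterns that miss some index of $\mathcal{B}$, or an extra condition guaranteeing that mixed patterns share ancestors. I would record explicitly which of these the argument requires.

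Given groups (a)–(c), summing the pattern-wise equalities with the common weights $P(\bR=r)$ yields the same $p(x_{\mathcal{B}})$, and hence the same value of any functional of it, under $G_1$ and $G_2$.
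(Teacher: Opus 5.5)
Your groups (a) and (b) are exactly the paper's proof. The paper notes that in a tree graph a pattern's ancestor set is a chain that recovers the unique path from $1_d$, concludes that every $r\in\mathcal{R}_{\mathcal{B}}$ has the same implied law $p(x\mid\bR=r)$ under $G_1$ and $G_2$, and stops there. It never treats your group (c), and your suspicion is right: the statement as written is false. So your closing paragraph cannot be justified from the stated hypothesis.

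Your $d=3$ check works, provided the path to $000$ avoids $100$. If $100\in\mathrm{Anc}(000)$, equal ancestor sets of $000$ would force equal paths to $100$. Concretely, take $\mathcal{B}=\{1,2\}$ and let both graphs contain $111\to110$, $111\to101$, $111\to011$, $011\to010$, $111\to001$ and $001\to000$. Set $\mathrm{PA}_{G_1}(100)=110$ and $\mathrm{PA}_{G_2}(100)=101$. All pattern-specific laws except that of $100$ then coincide, while \eqref{eq:pmm-patterngraph} gives
\begin{align*}
G_1:&\quad p(x_2\mid x_1,\bR=100)=p(x_2\mid x_1,\bR=110),\\
G_2:&\quad p(x_2\mid x_1,\bR=100)=\int p(x_2\mid x_1,x_3,\bR=111)\,p(x_3\mid x_1,\bR=101)\,dx_3 .
\end{align*}
A tree graph places no restriction on the observed-data law, so these can differ. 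For example, take binary variables with $X_2\perp(X_1,X_3)$ and $P(X_2=1\mid\bR=111)=1/2$, but $P(X_2=1\mid x_1,\bR=110)=0.9$. The two graphs then imply values of $P(X_2=1)$ that differ by $0.4\,P(\bR=100)$.

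The repair is the one you anticipated: impose the hypothesis on every pattern with $s_j=0$ for \emph{some} $j\in\mathcal{B}$. Your group (b) argument then covers group (c) verbatim, and your proposal becomes a complete proof. Your unrolling idea shows that even equality of parents, rather than of whole ancestor sets, suffices for those patterns. Let $s$ be the common parent of $r$. Then $p(x_{\mathcal{B}\cap\bar r}\mid x_r,\bR=r)$ is $p(x_{\mathcal{B}\cap\bar s}\mid x_s,\bR=s)$ integrated against an observed-data conditional under $\bR=s$. Descending induction on the number of observed variables therefore works, and it terminates once the parent observes all of $\mathcal{B}$.
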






A consequence of this theorem is that all tree graphs for $d=2$ induce unique marginal distributions.

Next, we now describe a way to combine tree graphs into a single pattern graph assumption using a merge operation.  Graphically, the merge operation is very simple.  Consider an example in Figure \ref{fig:merge}.  We formalize the merge operation in mathematical notation as follows.

\begin{definition}[Merge operation]
Consider the pattern graph $G := G_1 \cup G_2$, where $\cup$ between two pattern graphs denotes the merge operation.  The resulting graph $G := G_1 \cup G_2$ is constructed such that $\text{PA}_G(r) = \text{PA}_{G_1}(r) \cup \text{PA}_{G_2}(r)$ for every pattern $r$.  It satisfies the following properties:
\begin{itemize}
	\item $G$ is a pattern graph.
	\item $G$ has at least the same number of edges as $G_1$ and $G_2$.
\end{itemize}
\end{definition}

\begin{proposition}[Tree graphs generate pattern graphs]
\label{prop:closure}
The closure of the set of tree graphs under the merge operation is the set of pattern graphs.  Moreover, the set of all tree graphs is the smallest set that generates pattern graphs.
\end{proposition}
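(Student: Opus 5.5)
The argument proves the two claims of Proposition \ref{prop:closure} separately. Closure: the closure of the tree graphs under the merge operation is exactly the set of regular pattern graphs. Minimality: any set whose merge-closure is all pattern graphs must contain every tree graph.

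\textbf{Step 1: merges of tree graphs are pattern graphs.} The merge of two pattern graphs is again a pattern graph, as recorded in the definition of the merge operation. The reason is that $\text{PA}_G(r)=\text{PA}_{G_1}(r)\cup\text{PA}_{G_2}(r)$ consists only of patterns $s>r$, so regularity holds. Also $1_d$ remains the unique source, since every $r\neq 1_d$ still has a parent. By induction on the number of merges, the closure of the tree graphs is contained in the set of pattern graphs.

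\textbf{Step 2: every pattern graph is a merge of tree graphs.} Let $G$ be a pattern graph. For each $r\neq 1_d$ its parent set $\text{PA}_G(r)$ is nonempty, and I choose one element $s_r\in\text{PA}_G(r)$. Setting $\text{PA}_T(r)=\{s_r\}$ defines a regular graph in which every non-source pattern has exactly one parent, so by Proposition \ref{prop:equivprop} it is a tree graph. Let $\mathcal{T}_G$ be the finite set of all such choice functions. Two inclusions then give $\bigcup_{T\in\mathcal{T}_G} T = G$ under repeated merging:
\begin{itemize}
\item every edge of $G$, say $s\to r$, appears in the tree obtained by choosing $s_r=s$;
\item every edge of every $T\in\mathcal{T}_G$ lies in $G$ by construction.
\end{itemize}
Hence the closure equals the set of pattern graphs.

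\textbf{Step 3: minimality.} Suppose $\mathcal{S}$ is a set of pattern graphs whose merge-closure contains all pattern graphs, and let $T$ be any tree graph. Then $T=G_1\cup\cdots\cup G_k$ for some $G_i\in\mathcal{S}$. Each $G_i$ is a pattern graph, so each non-source pattern has at least one parent in $G_i$. Moreover, $\text{PA}_{G_i}(r)\subseteq\text{PA}_T(r)=\{s_r\}$ for every $r$. Therefore $\text{PA}_{G_i}(r)=\{s_r\}$ for all $r$, that is, $G_i=T$. So $T\in\mathcal{S}$, which gives $\mathcal{T}\subseteq\mathcal{S}$. Combined with Step 2, $\mathcal{T}$ itself is a generating set, so it is the smallest one.

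The main subtlety is Step 3. There one must use that merging can only enlarge parent sets and that no pattern graph has an empty parent set off the source. This is exactly the minimality property (3) of Proposition \ref{prop:equivprop}: a tree graph cannot be written as a merge of pattern graphs other than itself.
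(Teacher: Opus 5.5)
Your proof is correct and follows essentially the same route as the paper. Both arguments decompose a pattern graph $G$ into the tree graphs obtained by choosing one parent from each $\text{PA}_G(r)$, and both use the minimality of tree graphs to show that every generating set must contain them. Your Step 3 spells out what the paper only asserts: merging can only enlarge parent sets and no pattern graph has an empty parent set away from $1_d$, so a tree graph arises only as a merge of copies of itself. Your Step 1 also makes explicit that the closure stays inside the set of pattern graphs, which the paper takes from the definition of the merge.
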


\begin{figure}[!t]
\centering
\includegraphics[width=0.95\textwidth]{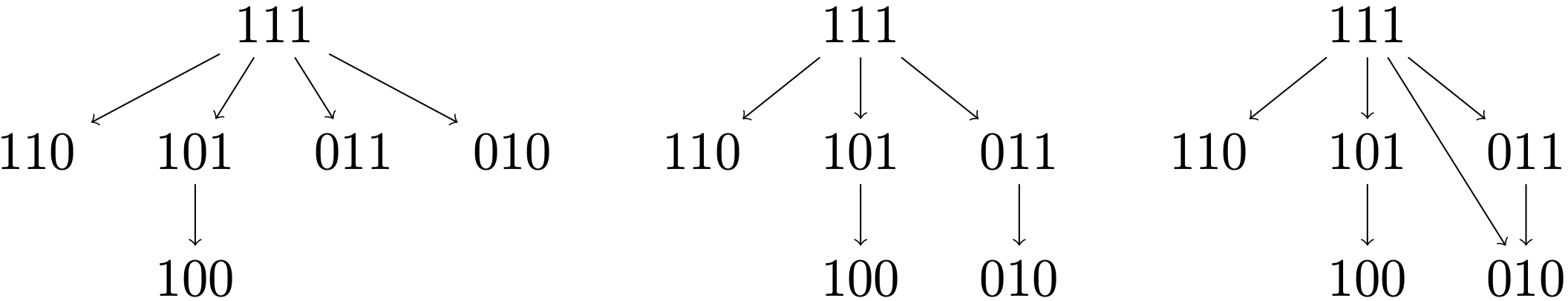}
\caption{Merge property.  Label the graphs from left to right as $G_1$, $G_2$, and $G_3$.}
\label{fig:merge}
\end{figure}

\newpage
\bibliographystyle{apacite}
\bibliography{references.bib}

\end{document}